\begin{document}

\title{An Elementary Proof of the FMP for Kleene Algebra\texorpdfstring{\thanks{Work on this article was performed while the author was affiliated with the Open Universiteit and the Institute for Logic, Language and Computation (ILLC) at the University of Amsterdam. The author was supported by the EU’s Horizon 2020 research and innovation programme under Marie Sk\l{}odowska-Curie grant agreement No. 101027412 (VERLAN).}}{}}

\author{Tobias Kapp\'{e} \\ LIACS, Leiden University \\ t.w.j.kappe@liacs.leidenuniv.nl} 
\runninghead{T. Kapp\'{e}}{An Elementary Proof of the FMP for Kleene Algebra}

\maketitle

\begin{abstract}
Kleene Algebra (KA) is a useful tool for proving that two programs are equivalent.
Because KA's equational theory is decidable, it integrates well with interactive theorem provers.
This raises the question: which equations can we (not) prove using the laws of KA\@?
Moreover, which models of KA are \emph{complete}, in the sense that they satisfy exactly the provable equations?
Kozen (1994) answered these questions by characterizing KA in terms of its \emph{language model}.
Concretely, equivalences provable in KA are exactly those that hold for regular expressions.

Pratt (1980) observed that KA is complete w.r.t.\ \emph{relational models}, i.e., that its provable equations are those that hold for any relational interpretation.
A less known result due to Palka (2005) says that \emph{finite models} are complete for KA, i.e., that provable equivalences coincide with equations satisfied by all finite KAs.
Phrased contrapositively, the latter is a \emph{finite model property} (\emph{FMP}): any unprovable equation is falsified by a finite KA\@.
Both results can be argued using Kozen's theorem, but the implication is mutual: given that KA is complete w.r.t.\ finite (resp.\ relational) models, Palka's (resp.\ Pratt's) arguments show that it is complete w.r.t.\ the language model.

We embark on a study of the different complete models of KA, and the connections between them.
This yields a novel result subsuming those of Palka and Pratt, namely that KA is complete w.r.t.\ \emph{finite relational models}.
Next, we put an algebraic spin on Palka's techniques, which yield a new elementary proof of the finite model property, and by extension, of Kozen's and Pratt's theorems.
In contrast with earlier approaches, this proof relies not on minimality or bisimilarity of automata, but rather on representing the regular expressions involved in terms of transformation automata.

\bigskip
\end{abstract}

\section{Introduction}

Kleene Algebra (KA) provides an algebraic perspective on the equivalence of regular expressions.
Central to KA and its extensions is a theorem due to Kozen~\cite{kozen-1994}, which says that every equation valid for regular expressions can be proved using the laws of KA\@.
Salomaa showed an important precursor to this result~\cite{salomaa-1966}, and several authors~\cite{conway-1971,boffa-1990,krob-1990,foster-struth-2015,kozen-silva-2020,das-doumane-pous-2018} have studied alternative axiomatizations.

Kozen's axioms and their refinements have the advantage of giving a clear way to define what makes a ``Kleene algebra'', the same way that one might define, say, a vector space.
This has paved the way for models of KA that correspond to the semantic domains of programming languages; for instance, the relations on a set (of, e.g., machine states) have the structure of a Kleene algebra.
In turn, this motivated extensions of KA such as Kleene Algebra with Tests (KAT)~\cite{kozen-1996,kozen-smith-1996,cohen-kozen-smith-1996,kozen-1997}, which can be applied to reason about programs in general~\cite{kozen-patron-2000,kozen-tseng-2008}, as well as specific programming languages~\cite{anderson-foster-guha-etal-2014,smolka-eliopoulos-foster-guha-2015}.

As an added bonus, deciding equivalence of regular expressions, and hence validity of equations by the laws of KA(T), is practically feasible~\cite{bonchi-pous-2013,pous-2014} in spite of the problem being \textsc{pspace}-complete~\cite{stockmeyer-meyer-1973}. 
This means that KA integrates well with proof assistants, where intermediate steps in a proof can be justified by asserting that they hold for all KAs, and letting the proof assistant validate that claim~\cite{braibant-pous-2012}.

Given all of this, one might ask whether the laws of KA can be specialized to their domain of application.
Specifically, are there equations that are true for all finite or relational KAs that may not be true in general?
If so, then our toolbox can be strengthened by adding these equations when appropriate.
It turns out that KA is robust to these shifts: Palka~\cite{palka-2005} showed that the equations validated by all finite models are true for languages, whereas an observation made by Pratt~\cite{pratt-1980} implies that relational models enjoy the same property.
In conjunction with Kozen's result~\cite{kozen-2001}, this implies that every equation not provable by the laws of KA is witnessed by a finite (resp.\ relational) countermodel.
These results can therefore be regarded as the \emph{finite} (resp.\ \emph{relational}) \emph{model property} (\emph{FMP}, resp.\ \emph{RMP}) for KA\@.

The proof of Kozen's completeness theorem~\cite{kozen-1994}, however, is far from trivial.
On the other hand, if one assumes either the RMP or the FMP, a reasonably straightforward argument (due to the same authors) shows that Kozen's completeness theorem can be derived.
Pondering this situation, Palka writes that ``an independent proof of [the FMP] would provide a quite different proof of the Kozen completeness theorem, based on purely logical tools''~\cite{palka-2005}, but defers the matter to future work.

Our first contribution is a reconsideration of the arguments towards these results, painting a more complete picture of KA's model theory.
In particular, this yields a novel (if unsurprising) observation that subsumes both results: equations that hold in every finite \emph{and} relational KA also hold in general.

Our second and main contribution is to provide the independent proof of the FMP hypothesized by Palka.
In contrast with completeness arguments based on minimality~\cite{kozen-1994}, bisimilarity~\cite{kozen-2001,jacobs-2006,kozen-silva-2020} or cyclic proof systems~\cite{das-doumane-pous-2018}, our arguments rely on representing regular expressions using transformation automata~\cite{mcnaughton-papert-1968}.
A central role is played by the fact that KA allows one to solve linear systems~\cite{conway-1971,backhouse-1975,kozen-1994}.

\smallskip
The remainder of this article is organized as follows.
\Cref{section:overview} defines fundamental notions.
\Cref{section:model-theory} studies the model theory of KA to arrive at a novel \emph{finite relational model property} for KA (first contribution).
\Cref{section:solutions-automata} recalls the notion of \emph{solutions to an automaton}.
\Cref{section:transformation-automata} provides a new algebraic perspective on transformation automata~\cite{mcnaughton-papert-1968}, while \Cref{section:antimirov-construction} recalls Antimirov's construction~\cite{antimirov-1996}.
\Cref{section:prove-fmp} provides a novel proof of the FMP based on the material discussed (second contribution).
Finally, \Cref{section:discussion} concludes with some discussion and suggestions for future work.

\smallskip
This article is a revised and extended version of a paper that first appeared at RAMiCS 2023~\cite{kappe-2023}.
We answer the question raised at the end of the conference version concerning the finite relational model property, and give a new overview of the model theory of KA (\Cref{section:model-theory}) in support of that result.
Moreover, the main text now includes the proofs, which have been thoroughly revised for concision and clarity.
A substantially shorter path towards the FMP has been devised, which does not require proving that an expression is equivalent to the solution of its Antimirov automaton (which used to take up the second half of \Cref{section:antimirov-construction}).
The proof of the main claim has also been revised to make the construction less cumbersome.
As a consequence of these revisions, several formal tools have also been removed.

While the proof of the finite model property remains non-trivial, we feel comfortable claiming that its complexity is comparable to that of Kozen's completeness theorem.
Our hope is that this largely self-contained manuscript can therefore also serve as a useful reference for graduate or advanced undergraduate students interested in Kleene algebra and its model theory.

\section{Preliminaries}%
\label{section:overview}

Our primary objects of study are \emph{Kleene algebras} (\emph{KAs}).
These are interesting for computer science because in many settings, program constructs such as sequencing, branching and looping can be expressed in terms of operators that satisfy the laws of Kleene algebra.
As a consequence, properties that are true in any KA can be leveraged in proofs, for instance to show that two programs are equivalent.

\begin{definition}%
\label{definition:kleene-algebra}
A \emph{Kleene algebra (KA)} is a tuple $(K, +, \cdot, {}^*, 0, 1)$, where $(K, +, \cdot, 0, 1)$ is an idempotent semiring, and for $x, y \in K$, we have that $x^* \cdot y$ is the least fixed point (in the natural order of the semiring) of the function $z \mapsto y + x \cdot z$.
In more concrete terms, $K$ is a \emph{carrier set}, $+$ and $\cdot$ are binary operators on $K$, ${}^*$ is a unary operator on $K$, and $0, 1 \in K$, satisfying for all $x, y, z \in K$:
\begin{mathpar}
x + 0 = x
\and
x + x = x
\and
x + y = y + x
\and
x + (y + z) = (x + y) + z
\\
x \cdot (y \cdot z) = (x \cdot y) \cdot z
\and
x \cdot (y + z) = x \cdot y + x \cdot z
\and
(x + y) \cdot z = x \cdot z + y \cdot z
\\
x \cdot 1 = x = 1 \cdot x
\and
x \cdot 0 = 0 = 0 \cdot x
\and
1 + x \cdot x^* = x^*
\and
x + y \cdot z \leq z \implies y^* \cdot x \leq z
\end{mathpar}
where $\leq$ denotes the order induced by $+$, that is, $x \leq y$ if and only if $x + y = y$.
This makes $\leq$ a partial order on $K$, and ensures that all operators are monotone w.r.t.\ this order.

We often denote a generic KA $(K, +, \cdot, {}^*, 0, 1)$ by its carrier $K$, and simply write $+$, $\cdot$, etc.\ for the operators and constants when their meaning can be inferred from context.
\end{definition}

\begin{remark}%
Strictly speaking, the above defines a ``weak'' or ``left-handed'' KA, in the sense that it does not require the traditional right-unrolling and right-fixpoint laws put forward by Kozen~\cite{kozen-1996}.
Indeed, dropping these two laws does not change the equational theory of KA~\cite{krob-1990,boffa-1990,das-doumane-pous-2018,kozen-silva-2020}.
This means that if an equation holds in \emph{all} KAs $K$ that furthermore satisfy for all $x, y, z \in K$ that
\begin{mathpar}
    1 + x^* \cdot x = x
    \and
    x + z \cdot y \leq z \implies x \cdot y^* \leq z
\end{mathpar}
then this equation also holds in any KA that may not satisfy those laws.
This is somewhat surprising, but useful because there exist models of KA that do not satisfy the two right-handed axioms above~\cite{kozen-tiuryn-2003}.
\end{remark}

\begin{remark}
The last axiom of KAs listed in \Cref{definition:kleene-algebra} an implication among equalities.
This makes the class of KAs a \emph{quasivariety}~\cite{burris-sankappanavar-1981}.
The implication is necessary, in the sense that there cannot be a \emph{finite} set of equivalences without implications that defines the exact same class of algebras~\cite{conway-1971,redko-1964}.
\end{remark}

When reasoning about programs, addition is used to represent non-deterministic composition, multiplication corresponds to sequential composition, the \emph{Kleene star} operator ${}^*$ implements iteration, $0$ stands in for a program that fails immediately, and $1$ is the program that does nothing and terminates successfully.
The equations of KA correspond well to what might be expected of such operators.

\smallskip
One particularly useful (and common) class of KAs is given by the following.

\begin{definition}
Let $(K, +, \cdot, {}^*, 0, 1)$ be a KA\@.
When $x \in K$ and $n \in \mathbb{N}$, we write $x^n$ for the $n$-fold product of $x$, defined inductively by $x^0 = 1$ and $x^{n+1} = x \cdot x^n$.
We say that $K$ is \emph{star-continuous} when for all $u, v, x, y \in K$, if $u \cdot x^n \cdot v \leq y$ for all $n \in \mathbb{N}$, then $u \cdot x^* \cdot v \leq y$.
\end{definition}

Star-continuity says that the Kleene star of $x$ can be regarded as the least upper bound of all $n$-fold products of $x$.
All explicitly constructed KAs in this paper satisfy star continuity; nevertheless, not all KAs are star-continuous~\cite{kozen-1990}.
Star-continuity also introduces an infinitary implication to the axioms of KA, which makes it more cumbersome to work with from an algebraic perspective.
This is why it makes sense to generalize to the wider class of all (not necessarily star-continuous) KAs.

One very natural instance of (star-continuous) Kleene algebras is given by the \emph{relational model}.

\begin{definition}[KA of relations]%
\label{definition:ka-relations}
Let $X$ be a set.
The set of relations on $X$, denoted $\mathcal{R}(X)$, can be equipped with the structure of a KA $(\mathcal{R}(X), \cup, \circ, {}^*, \emptyset, \id_X)$, where
    $\circ$ is relational composition,
    ${}^*$ is the reflexive-transitive closure operator, and
    $\id_X$ is the identity relation on $X$.
We write $\id$ for $\id_X$ when $X$ is clear from context.
\end{definition}

If $S$ is the set of states of a machine, we can interpret a program in $\mathcal{R}(S)$, i.e., in terms of relations on $S$, by mapping it to the relation that connects initial states to final states that may be reached by running the program.
This interpretation arises naturally for programs built using the operators of KA\@.

\begin{definition}[Expressions]
We fix an \emph{alphabet} of \emph{actions} $\Sigma = \{ \ltr{a}, \ltr{b}, \ltr{c}, \dots \}$.
The set of \emph{regular expressions} $\mathbb{E}$ is given by
\[
    e, f ::= 0 \mid 1 \mid \ltr{a} \in \Sigma \mid e + f \mid e \cdot f \mid e^*
\]
Given a KA $K$ and a function $h: \Sigma \to K$, we can define $\widehat{h}: \mathbb{E} \to K$ as the map that extends $h$ and commutes with all of the operators; more explicitly, $\widehat{h}$ is specified inductively by:
\begin{align*}
    \widehat{h}(0) &= 0 &
        \widehat{h}(\ltr{a}) &= h(\ltr{a}) &
        \widehat{h}(e \cdot f) &= \widehat{h}(e) \cdot \widehat{h}(f) \\
    \widehat{h}(1) &= 1 &
        \widehat{h}(e + f) &= \widehat{h}(e) + \widehat{h}(f) &
        \widehat{h}(e^*) &= \widehat{h}(e)^*
\end{align*}
\end{definition}

The experienced reader will recognize $\mathbb{E}$ as the term algebra for the signature of KA, and $\widehat{h}$ as the unique homomorphism from $\mathbb{E}$ to $K$ induced by $h$; both are common in universal algebra~\cite{burris-sankappanavar-1981}.

\begin{example}%
\label{example:square-root-ka}
Consider a programming language with non-negative integer variables $\mathsf{Var}$, and actions $\Sigma$ comprised (for all $n \in \mathbb{N}$ and $\mathtt{x}, \mathtt{y} \in \mathsf{Var}$) of \emph{assignments} of the form $\mathtt{x} \leftarrow n$, \emph{increments} denoted $\mathtt{x} \leftarrow \mathtt{x} + \mathtt{y}$ and $\mathtt{x} \leftarrow \mathtt{x} + n$, and \emph{assertions}, written as $\mathtt{x} < \mathtt{y}$ and $\mathtt{x} \geq \mathtt{y}$.

The state of a machine executing these statements is completely determined by the value of each variable, and so we choose the set of variable assignments $S = \{ \sigma: \mathsf{Var} \to \mathbb{N} \}$ as the state space.
The semantics of the actions are relations that represent their effect, encoded by $h: \Sigma \to \mathcal{R}(S)$ as
\begin{align*}
    h(\mathtt{x} \leftarrow n) &= \{ (\sigma, \sigma[n/\mathtt{x}]) : \sigma \in S \}
        & h(\mathtt{x} < \mathtt{y}) &= \{ (\sigma, \sigma) : \sigma \in S, \sigma(\mathtt{x}) < \sigma(\mathtt{y}) \} \\
    h(\mathtt{x} \leftarrow \mathtt{x} + n) &= \{ (\sigma, \sigma[\sigma(\mathtt{x}) + n/\mathtt{x}]) : \sigma \in S \}
        & h(\mathtt{x} \geq \mathtt{y}) &= \{ (\sigma, \sigma) : \sigma \in S, \sigma(\mathtt{x}) \geq \sigma(\mathtt{y}) \} \\
    h(\mathtt{x} \leftarrow \mathtt{x} + \mathtt{y}) &= \{ (\sigma, \sigma[\sigma(\mathtt{x}) + \sigma(\mathtt{y})/\mathtt{x}]) : \sigma \in S \}
\end{align*}
Here, $\sigma[n/\mathtt{x}]$ denotes the function from $\mathsf{Var}$ to $\mathbb{N}$ that assigns $n$ to $\mathtt{x}$, and $\sigma(\mathtt{y})$ to all $\mathtt{y} \neq \mathtt{x}$.

This gives a semantics $\widehat{h}: \mathbb{E} \to \mathcal{R}(S)$ for regular expressions over $\Sigma$, and allows us to express and interpret programs like the following (where we use a semicolon instead of $\cdot$ for readability):
\[
    e = \mathtt{x} \leftarrow 1 ;\, \mathtt{y} \leftarrow 0 ;\, \mathtt{i} \leftarrow 0 ;\, (\mathtt{i} < \mathtt{n} ;\, \mathtt{y} \leftarrow \mathtt{y} + \mathtt{x}  ;\,  \mathtt{x} \leftarrow \mathtt{x} + 2 ;\, \mathtt{i} \leftarrow \mathtt{i} + 1)^* ;\, (\mathtt{i} \geq \mathtt{n})
\]
In this case one can show that the semantics of $e$ is to ``compute''  the integer square root of $\mathtt{n}$ and stores it in $\mathtt{y}$, in the sense that if $(\sigma, \sigma') \in \widehat{h}(e)$, then $\sigma'(\mathtt{y})$ is the largest integer such that ${\sigma'(\mathtt{y})}^2 \leq \sigma(\mathtt{n})$.

Of course, one can build more involved programming languages based on KA;\@ for instance, NetKAT~\cite{anderson-foster-guha-etal-2014} is a programming language for specifying and reasoning about software-defined networks.
\end{example}

Two regular expressions may yield the same semantics when interpreted in a certain KA;\@ at the same time, another KA may distinguish them.
Typically, we are interested in whether two regular expressions are equated in a given KA, in all KAs, or in a certain well-defined class of KAs.

\begin{definition}
Let $e, f \in \mathbb{E}$ and let $(K, +, \cdot, {}^*, 0, 1)$ be a KA\@.
When $\widehat{h}(e) = \widehat{h}(f)$ for all $h: \Sigma \to K$, we write $K \models e = f$.
If $K \models e = f$ for each $K$ in a class of KAs $\mathfrak{K}$, then we write $\mathfrak{K} \models e = f$.
\end{definition}

We use several classes of KAs in the sequel.
We write $\mathfrak{C}$ for the class of all star-continuous KAs, $\mathfrak{R}$ for the class of all relational KAs, i.e., KAs of the form $\mathcal{R}(X)$ for some set $X$, $\mathfrak{F}$ for all KAs with finite carrier, and $\mathfrak{FR}$ for all finite relational KAs, i.e., those in both $\mathfrak{F}$ and $\mathfrak{R}$.
We write $\equiv$ for the relation on $\mathbb{E}$ where $e \equiv f$ iff $K \models e = f$ for all KAs $K$, and we use $e \leqq f$ as a shorthand for $e + f \equiv f$.

\begin{remark}
We hinted earlier that equivalence of expressions according to the laws of KA, i.e., whether $e \equiv f$, is decidable.
This does \emph{not} mean that equivalence of expressions in a \emph{particular} KA is decidable; for instance, programs interpreted in the KA $\mathcal{R}(S)$ from \Cref{example:square-root-ka} can easily simulate a two-counter machine, which means that equivalence can be used to encode the halting problem.
Rather, $e \equiv f$ is sufficient to conclude that $\widehat{h}(e) = \widehat{h}(f)$, but not vice versa.
For instance, $\widehat{h}(\mathtt{x} \leftarrow 1; \mathtt{x} \leftarrow 2) = \widehat{h}(\mathtt{x} \leftarrow 2)$, but other models, or even different interpretations inside $\mathcal{R}(S)$, do not admit this equivalence.
In truth, $\equiv$ is unaware of any interpretation given to $\mathtt{x} \leftarrow n \in \Sigma$ --- this is what we mean when we say that KA \emph{abstracts from the meaning of primitive programs}.
\end{remark}

There are a number of natural facts about $\equiv$ and $\leqq$ that we use so often that it pays to record them now.
We apply them without appealing explicitly to this lemma, and leave the proofs to the reader.
\begin{lemma}
The following hold:
\begin{enumerate}
    \setlength{\itemsep}{0em}
    \item
    The relation $\equiv$ is a congruence.
    Concretely, this means that $\equiv$ is an equivalence relation such that if $e_1 \equiv f_1$ and $e_2 \equiv f_2$, then $e_1 + e_2 \equiv f_1 + f_2$, and similarly for the other operators.
    \item
    The relation $\leqq$ is a partial order up to $\equiv$, i.e., it is a preorder and if $e \leqq f$ and $f \leqq e$, then $e \equiv f$.
    \item
    All operators are monotone w.r.t. $\leqq$.
    Specifically, if $e_1 \leqq f_1$ and $e_2 \leqq f_2$, then $e_1 + e_2 \leqq f_1 + f_2$, and analogous properties hold for the other operators.
\end{enumerate}
\end{lemma}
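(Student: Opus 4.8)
The plan is to reduce every claim to a ``pointwise'' statement about individual KAs. Recall that $e \equiv f$ means $\widehat{h}(e) = \widehat{h}(f)$ for every KA $K$ and every $h \colon \Sigma \to K$, and that $e \leqq f$ abbreviates $e + f \equiv f$, which, after unfolding the semiring order, says $\widehat{h}(e) \leq \widehat{h}(f)$ in $K$ for all such $K$ and $h$. Since each $\widehat{h}$ is a homomorphism and both $=$ and $\leq$ already enjoy the required structural properties inside every KA $K$, each part of the lemma follows by quantifying the corresponding pointwise fact over all $K$ and all $h$.

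For part~1, the equivalence-relation axioms for $\equiv$ are inherited directly from those of $=$ on each carrier $K$. For the congruence property, fix $K$ and $h$; if $e_i \equiv f_i$ then $\widehat{h}(e_i) = \widehat{h}(f_i)$, so for a binary operator $\star \in \{+, \cdot\}$ we get $\widehat{h}(e_1 \star e_2) = \widehat{h}(e_1) \star \widehat{h}(e_2) = \widehat{h}(f_1) \star \widehat{h}(f_2) = \widehat{h}(f_1 \star f_2)$, using that $\widehat{h}$ commutes with the operators; the argument for $e^* \equiv f^*$ is identical with the unary star. As $K$ and $h$ were arbitrary, this yields $e_1 \star e_2 \equiv f_1 \star f_2$ and $e_1^* \equiv f_1^*$.

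For part~2, reflexivity of $\leqq$ is the instance $e + e \equiv e$ of the idempotence axiom; transitivity follows from the equational chain $e + g \equiv e + (f + g) \equiv (e + f) + g \equiv f + g \equiv g$ (using $f \leqq g$, associativity, $e \leqq f$, and $f \leqq g$ again, together with the congruence from part~1); and antisymmetry up to $\equiv$ is immediate, since $e + f \equiv f$ and $f + e \equiv e$ give $e \equiv f + e \equiv e + f \equiv f$ by commutativity. Equivalently, one may observe that $\leqq$ is exactly the pointwise lift of the partial orders $\leq$ on the carriers, the only twist being that ``$e \equiv f$'' replaces literal equality in the antisymmetry clause.

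For part~3, recall that \Cref{definition:kleene-algebra} already records that $+$, $\cdot$ and ${}^*$ are monotone w.r.t.\ $\leq$ in any KA. Hence, fixing $K$ and $h$ with $\widehat{h}(e_i) \leq \widehat{h}(f_i)$, for $\star \in \{+, \cdot\}$ we obtain $\widehat{h}(e_1 \star e_2) = \widehat{h}(e_1) \star \widehat{h}(e_2) \leq \widehat{h}(f_1) \star \widehat{h}(f_2) = \widehat{h}(f_1 \star f_2)$, and similarly $\widehat{h}(e_1^*) = \widehat{h}(e_1)^* \leq \widehat{h}(f_1)^* = \widehat{h}(f_1^*)$; quantifying over $K$ and $h$ gives $e_1 \star e_2 \leqq f_1 \star f_2$ and $e_1^* \leqq f_1^*$. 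I expect no genuine obstacle --- which is presumably why the authors leave it to the reader --- the only point deserving care being that monotonicity of the Kleene star is not a plain equational consequence but rests on the least-fixpoint axiom, which is precisely the reason it is convenient to appeal to \Cref{definition:kleene-algebra} rather than rederive it.
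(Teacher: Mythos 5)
The paper explicitly leaves this proof to the reader, so there is no official argument to compare against; your proof is correct and is exactly the standard argument one would expect, reducing each claim to the corresponding pointwise fact about $=$ and $\leq$ in an arbitrary KA via the fact that $\widehat{h}$ commutes with the operators. Your closing observation that monotonicity of ${}^*$ rests on the least-fixpoint implication rather than on the equations alone is also accurate.
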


\noindent
Some (classes of) KAs are \emph{complete} in that their equations are exactly the ones satisfied by \emph{all} KAs; alternatively, we can think of these KAs as having enough flexibility to differentiate regular expressions that cannot be proven equal from the laws of KA alone.
As it turns out, both $\mathfrak{R}$ and $\mathfrak{F}$ have this property.

\begin{restatable}[finite model property; Palka~\cite{palka-2005}]{theorem}{restatefmp}%
\label{theorem:finite-model-property}
Let $e, f \in \mathbb{E}$.
If $\mathfrak{F} \models e = f$, then $e \equiv f$.
\end{restatable}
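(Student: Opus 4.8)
The plan is to prove the contrapositive: from $e \not\equiv f$ I will construct a \emph{finite} Kleene algebra $K$ and a map $h \colon \Sigma \to K$ with $\widehat h(e) \neq \widehat h(f)$. Since $e \not\equiv f$ unfolds, by antisymmetry of $\leqq$, to $e \not\leqq f$ or $f \not\leqq e$, and these two cases are symmetric, it suffices to handle $e \not\leqq f$; and since $x \le x$ always holds, it is enough to arrange $\widehat h(e) \not\le \widehat h(f)$ (which then forces $\widehat h(e) \ne \widehat h(f)$). The model $K$ will be built from $f$ over the alphabet $\Sigma_0$ of actions occurring in $e$ and $f$, with the remaining (irrelevant) actions sent anywhere; the whole argument then hinges on one \emph{reflection} claim: for every expression $g$ over $\Sigma_0$, if $\widehat h(g) \le \widehat h(f)$ then $g \leqq f$. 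Granting this, $e \not\leqq f$ yields $\widehat h(e) \not\le \widehat h(f)$ at once, and the theorem follows.

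To build $K$, I would first apply Antimirov's construction (\Cref{section:antimirov-construction}) to $f$ to obtain a finite automaton whose states are partial derivatives of $f$, and then pass to the associated transformation automaton (\Cref{section:transformation-automata}), in which each action of $\Sigma_0$ acts as a genuine transformation of a finite state set $Q$; determinism here is what later makes the automaton's linear system manageable. I then take $K = \mathcal{R}(Q)$ — which also has the pleasant effect of landing the countermodel in $\mathfrak{FR}$ — with $h$ sending each action to its transition relation and the accepting states recorded via nullability. Under this interpretation $\widehat h(g)$ is the relation ``there is a $g$-labelled path from $q$ to $q'$'' on $Q$; in particular, from the initial state one reaches exactly the accepting derivatives witnessed by words denoted by $f$, and $\widehat h(f)$ records precisely those transitions.

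The crux — and the step I expect to be the main obstacle — is the reflection claim $\widehat h(g) \le \widehat h(f) \implies g \leqq f$. The tempting shortcut, namely to deduce that $g$'s language is contained in $f$'s and then invoke Kozen's completeness theorem, is off limits here, since re-deriving that theorem is the whole point of this development; the inequality $g \leqq f$ must instead be produced directly. For this I would use that, by \Cref{section:solutions-automata}, the transformation automaton admits a least solution, with the fixpoint law $x + y \cdot z \le z \implies y^* \cdot x \le z$ as the one non-equational ingredient. Reading $\widehat h(g) \le \widehat h(f)$ state by state yields, for each $q \in Q$, a bound placing the ``$g$-behaviour from $q$'' below the ``$f$-behaviour from $q$''; these bounds should be closed under the automaton's transitions, so that feeding them through the least-solution property turns them into an honest proof that $g \leqq f$, the fixpoint law absorbing the unbounded iteration and the whole argument never needing to enumerate the infinitely many words denoted by $g$. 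I expect making this precise — choosing the right family of bounds and checking that it is transition-closed — to be genuinely intricate, comparable in difficulty to Kozen's completeness proof itself; by contrast, the reductions of the first paragraph and the verifications about $\widehat h$ in the second should be routine.
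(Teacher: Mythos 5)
Your overall architecture --- Antimirov's construction, the transformation structure on its state space, least solutions of linear systems, and the fixpoint law as the sole non-equational ingredient --- is the same as the paper's, and the contrapositive phrasing is an inessential difference (the paper directly shows that the single finite algebra it builds already forces $e \equiv f$ whenever it satisfies $e = f$). However, there is a genuine gap in your choice of carrier. As you describe it, $Q$ is the set of derivatives of $f$, $K = \mathcal{R}(Q)$, and $h(\ltr{a}) = \delta_{\ltr{a}}$; since this $K$ is star-continuous, \Cref{lemma:cont,lemma:embed-sem} force $\widehat{h}(g) = \bigcup_{w \in \widehat{\ell}(g)} \delta_w$, and taking this union destroys exactly the information your reflection claim needs. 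Concretely, for $f = \ltr{a}$ and $g = \ltr{a} \cdot \ltr{a}$ we have $\rho(f) = \{\ltr{a}, 1\}$, $\delta_{\ltr{a}} = \{(\ltr{a}, 1)\}$ and hence $\delta_{\ltr{a}} \circ \delta_{\ltr{a}} = \emptyset$, so $\widehat{h}(g) = \emptyset \subseteq \widehat{h}(f)$ while $g \not\leqq f$: the reflection claim is false. Worse, for $e = \ltr{a} + \ltr{a} \cdot \ltr{a}$ this model satisfies $\widehat{h}(e) = \widehat{h}(f)$ even though $e \not\equiv f$, so it is not a countermodel at all. This is precisely the trap flagged in the remark following \Cref{lemma:monoid-to-ka}: the finite (semiring of) relations on the derivative states cannot serve directly.

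The repair is to interpret into the \emph{powerset of the transformation monoid}, $\mathcal{K}_e = \mathcal{P}(\mathcal{R}(\rho(e)))$ with $h_e(\ltr{a}) = \{\partial_{\ltr{a}}\}$, so that $\widehat{h_e}(g) = \{\delta_w : w \in \widehat{\ell}(g)\}$ by \Cref{lemma:monoid-to-ka-interp}: each word's transformation is kept as a separate element rather than merged into one relation, and distinct words inducing the same transformation are the only identifications made. Your single reflection step then splits into the paper's two sandwich lemmas: $\soli{A_e[R]} \leqq e$ for every $R \in \widehat{h_e}(e)$ (\Cref{lemma:interp-upper}, via \Cref{lemma:antimirov-correct,lemma:solve-transformation-aut-approximate-below,lemma:antimirov-solution-upper}), and $f \leqq \sum_{R \in \widehat{h_e}(f)} \soli{A_e[R]}$ (\Cref{lemma:interp-lower}, by induction on $f$ using \Cref{lemma:solve-transformation-aut-letter,lemma:solve-transformation-aut-compose} and the fixpoint rule). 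Your instinct about where the work lies --- a transition-closed family of bounds fed through the least-solution property of \Cref{theorem:computing-solutions} --- is sound, but it must be run over the individual transformation automata $A_e[R]$, one per relation $R$, not over a single relational interpretation on the derivative states; note also that this yields a finite KA that is a powerset algebra rather than a relational one, so the finite \emph{relational} model property requires the separate detour through \Cref{lemma:fin-rel-vs-lang}.
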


\begin{theorem}[relational model property; Pratt~\cite{pratt-1980,kozen-smith-1996}]%
\label{theorem:relational-model-property}
Let $e, f \in \mathbb{E}$.
If $\mathfrak{R} \models e = f$, then $e \equiv f$.
\end{theorem}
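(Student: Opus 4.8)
The plan is the classical argument of Pratt~\cite{pratt-1980}, as streamlined by Kozen and Smith~\cite{kozen-smith-1996}: reduce the relational model property to completeness of the \emph{language model}, and bridge the two with a single, judiciously chosen relational interpretation. Write $\Sigma^*$ for the set of words over $\Sigma$, let $\epsilon$ denote the empty word, and let $L \colon \mathbb{E} \to \mathcal{P}(\Sigma^*)$ be the usual language semantics --- the unique map extending $\ltr{a} \mapsto \{ \ltr{a} \}$ and commuting with the operators of KA, which on languages are union, concatenation, and Kleene star. The input to the argument is that the language model is \emph{complete}: $e \equiv f$ if and only if $L(e) = L(f)$. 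The left-to-right direction is soundness (the regular languages form a KA); the right-to-left direction is Kozen's completeness theorem~\cite{kozen-1994}, which --- in keeping with the aims of this paper --- can instead be extracted from the finite model property (\Cref{theorem:finite-model-property}) via the standard reduction: every finite KA is star-continuous, and the regular languages over $\Sigma$ constitute the free star-continuous KA over $\Sigma$, so any interpretation into a finite KA factors through $L$.

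Granting this, it remains to connect relational satisfaction to language equality. I would work with the relational KA $\mathcal{R}(\Sigma^*) \in \mathfrak{R}$ and the ``append'' interpretation $h_0 \colon \Sigma \to \mathcal{R}(\Sigma^*)$ given by $h_0(\ltr{a}) = \{ (w, w\ltr{a}) : w \in \Sigma^* \}$; this is the Cayley representation of the monoid $\Sigma^*$, promoted from monoids to semirings. The key lemma, proved by a routine induction on $g \in \mathbb{E}$, is that $\widehat{h_0}$ records $L$ faithfully:
\[
    \widehat{h_0}(g) = \{ (w, wv) : w \in \Sigma^*,\ v \in L(g) \}.
\]
The base cases are immediate; the case of $+$ uses that a union of relations of this shape corresponds to the union of the underlying languages; the case of $\cdot$ uses that $\{(w, w v_1) : v_1 \in L_1\} \circ \{(w, w v_2) : v_2 \in L_2\} = \{(w, w v_1 v_2) : v_1 \in L_1, v_2 \in L_2\}$; and the case of ${}^*$ uses that the reflexive--transitive closure of $\{(w, wv) : v \in L\}$ is $\{(w, wv) : v \in L^*\}$. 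Restricting attention to the pairs whose first component is $\epsilon$, the lemma yields that $\widehat{h_0}(g) = \widehat{h_0}(g')$ implies $L(g) = L(g')$.

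The theorem then follows by chaining the implications: if $\mathfrak{R} \models e = f$, then in particular $\mathcal{R}(\Sigma^*) \models e = f$, hence $\widehat{h_0}(e) = \widehat{h_0}(f)$; by the key lemma, $L(e) = L(f)$; and by completeness of the language model, $e \equiv f$. I expect no real difficulty in the induction establishing the key lemma --- the only mildly delicate point is the Kleene-star case, where one checks that iterating ``append a word of $L$'' yields exactly ``append a word of $L^*$''. It is worth noting that one cannot shortcut this by trying to embed the finite countermodel supplied by \Cref{theorem:finite-model-property} directly into a relational KA, since not every KA is relational; routing through the (relational) language model is what makes the reduction go through. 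Consequently, the genuine weight of the argument rests on the completeness of the language model --- that is, on \Cref{theorem:finite-model-property} together with the routine reduction of Kozen's theorem to it --- so the real obstacle lies not in this proof but in the proof of the finite model property itself.
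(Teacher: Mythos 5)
Your proof is correct, and it is essentially the classical Pratt/Kozen--Smith argument that the theorem's citation points to --- but it is not the route this paper takes. The paper never proves the relational model property in isolation; it falls out of the implication chain in \Cref{figure:overview}, whose only non-trivial link from relational satisfaction to language equality is \Cref{lemma:fin-rel-vs-lang}. There, instead of your single infinite Cayley-style model $\mathcal{R}(\Sigma^*)$ with the append interpretation, the paper builds one \emph{finite} relational model $\mathcal{R}(\{0,\dots,n\})$ for each word $w$ of length $n$, interprets $\ltr{a}$ as the set of steps $(i,i+1)$ at the positions where $w$ carries $\ltr{a}$, and shows $(0,n) \in \widehat{h_w}(g)$ iff $w \in \widehat{\ell}(g)$. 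The two inductions are of comparable difficulty (your Kleene-star case and the paper's are equally routine), but the paper's choice buys strictly more: because its witnessing models are both finite and relational, the same lemma gives $\mathfrak{FR} \models e = f \implies \mathcal{L} \models e = f$, which is exactly what yields the paper's new finite relational model property; your $\mathcal{R}(\Sigma^*)$ is infinite, so your bridge delivers only the RMP. The back end of your argument --- obtaining language completeness from the FMP via star-continuity of finite KAs, i.e.\ via \Cref{lemma:lang-vs-cont} and the containment of $\mathfrak{F}$ in $\mathfrak{C}$ --- matches the paper's intended dependency structure, and your closing remark that one cannot simply embed the finite countermodel into a relational KA is well taken; the paper raises precisely this representation question as future work in \Cref{section:discussion}.
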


That the equations of KA can be characterized by these subclasses of KAs is already interesting, but in fact a characterization in terms of \emph{single} KA is also possible.
To elaborate, let us write $\Sigma^*$ for the set of \emph{words} (i.e., finite sequences of actions) over $\Sigma$, and $\epsilon$ for the empty word.
We \emph{concatenate} words by juxtaposition: if $w = \ltr{a}\ltr{b}$ and $x = \ltr{c}\ltr{a}$, then $wx = \ltr{a}\ltr{b}\ltr{c}\ltr{a}$.
A set of words is called a \emph{language}.
Languages $L$ and $L'$ can be concatenated pointwise, writing $L \cdot L'$ for $\{ ww' : w \in L, w' \in L' \}$.
The \emph{Kleene star} of a language $L$, denoted $L^*$, is given by $\{ w_1\cdots{}w_n : w_1, \dots, w_n \in L \}$.
Note that $\epsilon \in L^*$.

\begin{definition}
The \emph{KA of languages} $\mathcal{L}$ is $(\mathcal{P}(\Sigma^*), \cup, \cdot, {}^*, \emptyset, \{ \epsilon \})$, where $\cdot$ is language concatenation, and ${}^*$ is the Kleene star of a language as above.
We define $\ell: \Sigma \to \mathcal{P}(\Sigma^*)$ by $\ell(\mathtt{a}) = \{ \mathtt{a} \}$.
\end{definition}

Readers familiar with regular languages will recognize $\widehat{\ell}: \mathbb{E} \to \mathcal{P}(\Sigma^*)$ as the standard language semantics of regular expressions.
It is not hard to show that $\mathcal{L} \models e = f$ precisely when $e$ and $f$ denote the same regular language.
As indicated, the equations of the language model generalize to all KAs.

\begin{theorem}[language completeness; Kozen~\cite{kozen-1996}]%
\label{theorem:language-completeness}
Let $e, f \in \mathbb{E}$.
If $\mathcal{L} \models e = f$, then $e \equiv f$.
\end{theorem}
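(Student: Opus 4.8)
The plan is to derive \Cref{theorem:language-completeness} from the finite model property (\Cref{theorem:finite-model-property}): once the FMP is available, the only remaining gap between the language model and provability in KA is bridged by \emph{star-continuity}. Concretely, suppose $\mathcal{L} \models e = f$, i.e.\ $\widehat{\ell}(e) = \widehat{\ell}(f)$. By \Cref{theorem:finite-model-property} it suffices to show $\mathfrak{F} \models e = f$, so fix a finite KA $K$ and an interpretation $h : \Sigma \to K$; the point is that $\widehat{h}(e)$ depends only on the language $\widehat{\ell}(e)$, and hence equals $\widehat{h}(f)$.

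The heart of the matter is the classical ``fundamental theorem'' for star-continuous Kleene algebras (Conway, Kozen): if $K$ is star-continuous and $h : \Sigma \to K$, then, writing $h(w) = h(\ltr{a}_1) \cdots h(\ltr{a}_n)$ for a word $w = \ltr{a}_1 \cdots \ltr{a}_n$ and $h(\epsilon) = 1$, the element $\widehat{h}(e)$ is the least upper bound of $\{\, h(w) : w \in \widehat{\ell}(e) \,\}$, for every $e \in \mathbb{E}$. I would prove this by induction on $e$. The cases $0$, $1$, $\ltr{a}$ are immediate and $e + f$ is routine manipulation of suprema; the case $e \cdot f$ needs that $\cdot$ distributes over the suprema involved, and the case $e^*$ needs that $\widehat{h}(e)^* = \widehat{h}(e^*)$ is the supremum of the powers $\widehat{h}(e)^n$ --- after unfolding $\widehat{\ell}(e^*) = \bigcup_n \widehat{\ell}(e)^n$ this is precisely the $u = v = 1$ instance of star-continuity, and the distributivity needed for $e \cdot f$ is a companion infinite-distributivity statement proved alongside it in the usual way. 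A mild subtlety is that one must check these suprema exist in an arbitrary star-continuous KA; this is taken care of by the induction itself, which exhibits $\widehat{h}(e)$ as the witnessing least upper bound.

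Granting this lemma, the proof closes quickly. Finite KAs are star-continuous: in a finite KA the increasing chain $\bigl(\sum_{i=0}^{n} x^i\bigr)_{n \in \mathbb{N}}$ must stabilize at some $s$, which is then a fixed point of $z \mapsto 1 + x \cdot z$, so $x^* \leq s$; conversely unrolling gives $x^* \geq x^i$ for all $i$, hence $x^* = s = \bigvee_n x^n$, and then $u \cdot x^* \cdot v = \sum_{i=0}^{n} u \cdot x^i \cdot v$ for $n$ large, which yields the star-continuity implication. So our fixed finite $K$ is star-continuous, and by the lemma $\widehat{h}(e)$ and $\widehat{h}(f)$ are both the least upper bound of $\{\, h(w) : w \in \widehat{\ell}(e) \,\} = \{\, h(w) : w \in \widehat{\ell}(f) \,\}$, so they coincide. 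Since $K$ and $h$ were arbitrary, $\mathfrak{F} \models e = f$, and \Cref{theorem:finite-model-property} gives $e \equiv f$. (The same argument works verbatim with $\mathfrak{R}$ and relational KAs --- which are likewise star-continuous --- in place of $\mathfrak{F}$, invoking \Cref{theorem:relational-model-property}; the two routes are interchangeable.)

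The main obstacle is the star-continuity lemma of the second paragraph: not that any individual step is deep, but it is the one point where infinitary suprema, their existence, and their interaction with $\cdot$ and ${}^*$ must be handled with care --- everything around it is bookkeeping.
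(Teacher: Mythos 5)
Your proposal is correct and takes essentially the same route as the paper, which derives Kozen's theorem from its elementary proof of the FMP via the model-theoretic implications of \Cref{section:model-theory}: your ``fundamental theorem'' for star-continuous KAs is exactly the combination of \Cref{lemma:cont,lemma:embed-sem} underlying \Cref{lemma:lang-vs-cont}, and the star-continuity of finite KAs is the containment $\mathfrak{F} \subseteq \mathfrak{C}$ behind the grayed arrow in \Cref{figure:overview}. With those in hand, \Cref{theorem:finite-model-property} closes the argument just as you describe.
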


\begin{remark}
Because the theorem above characterizes equations in KA in terms of a single KA, standard arguments from universal algebra imply that the \emph{KA of regular languages} over $\Sigma$, denoted $\mathsf{Reg}(\Sigma)$ --- i.e., the image of $\widehat{\ell}$ --- is the \emph{free Kleene algebra} on $\Sigma$~\cite{burris-sankappanavar-1981}.
Specifically, this means that for any KA $K$ and $f: \Sigma \to K$, there exists a unique $\widetilde{f}: \mathsf{Reg}(\Sigma^*) \to K$ such that $\widetilde{f}(\ltr{a}) = f(\ltr{a})$ for all $\ltr{a} \in \Sigma$ that commutes with the operators, e.g., with $\widetilde{f}(L_1 \cup L_2) = \widetilde{f}(L_1) + \widetilde{f}(K_2)$.
An equivalent and even more abstract way of saying this is that the map $\mathsf{Reg}$, which sends an alphabet $\Sigma$ to the KA of regular languages over $\Sigma$, is left adjoint to the map (functor) that sends a KA to its carrier set.
\end{remark}

Equivalence of regular languages is decidable (in fact, \textsc{pspace}-complete~\cite{stockmeyer-meyer-1973} but practically feasible~\cite{bonchi-pous-2013}), and so equivalence in the language model (and, by extension, $\equiv$) is also decidable.

\section{Model theory}%
\label{section:model-theory}

The theorems above are closely connected, in that they imply one another~\cite{palka-2005,pratt-1980,kozen-smith-1996}.
In this section, we provide alternative proofs in support of this, which also yields the (seemingly new) observation that $\mathfrak{FR}$, the class of \emph{finite relational models}, is complete for KA\@.
Besides this, none of the material in this section can truly be considered novel; we include our own proofs for the sake of self-containment.

We start with the fact that interpretations of expressions inside a star-continuous KA can be connected to the interpretations of words in their language semantics.
Here, and for the remainder of this article, we abuse notation by regarding words $w \in \Sigma^*$ as regular expressions, identifying $\epsilon$ with $1$.

\begin{lemma}[{\cite[Lemma~7.1]{kozen-1992}}]%
\label{lemma:cont}
Let $K$ be a star-continuous KA, and let $h: \Sigma \to K$.
For all $e \in \mathbb{E}$ and $x \in K$, if for all $w \in \widehat{\ell}(e)$ it holds that $\widehat{h}(w) \leq x$, then we also have that $\widehat{h}(e) \leq x$.
\end{lemma}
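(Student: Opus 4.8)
The plan is to prove the statement by structural induction on $e$, but it pays to first strengthen it with two-sided multipliers, as this is what makes the product case go through. Concretely, I would prove: for every $e \in \mathbb{E}$ and all $u, v, x \in K$, if $u \cdot \widehat{h}(w) \cdot v \leq x$ for all $w \in \widehat{\ell}(e)$, then $u \cdot \widehat{h}(e) \cdot v \leq x$. The lemma is then the instance $u = v = 1$, and the shape of the strengthened claim matches exactly the form of star-continuity as stated in the preliminaries.

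The base cases are immediate: for $e = 0$ we have $u \cdot \widehat{h}(0) \cdot v = 0 \leq x$ regardless of the hypothesis, and for $e \in \{1\} \cup \Sigma$ the language $\widehat{\ell}(e)$ is a singleton (namely $\{\epsilon\}$ or $\{\ltr{a}\}$), so the hypothesis already is the conclusion, using that $\epsilon$ is identified with $1$ and $\widehat{h}(1) = 1$. For $e = e_1 + e_2$, each $\widehat{\ell}(e_i)$ is contained in $\widehat{\ell}(e)$, so the induction hypothesis applied to $e_1$ and to $e_2$ yields $u \cdot \widehat{h}(e_i) \cdot v \leq x$ for $i \in \{1,2\}$; then $u \cdot \widehat{h}(e_1 + e_2) \cdot v = u \widehat{h}(e_1) v + u \widehat{h}(e_2) v$ is the join of two elements below $x$, hence is below $x$.

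The interesting case is $e = e_1 \cdot e_2$, where the multipliers earn their keep. Using $\widehat{\ell}(e_1 \cdot e_2) = \widehat{\ell}(e_1) \cdot \widehat{\ell}(e_2)$ and $\widehat{h}(w_1 w_2) = \widehat{h}(w_1) \cdot \widehat{h}(w_2)$, I would establish $u \cdot \widehat{h}(e_1) \cdot \widehat{h}(e_2) \cdot v \leq x$ by a nested appeal to the induction hypothesis. First apply it to $e_1$ with left multiplier $u$ and right multiplier $\widehat{h}(e_2) \cdot v$; this reduces the goal to showing $u \cdot \widehat{h}(w_1) \cdot \widehat{h}(e_2) \cdot v \leq x$ for each $w_1 \in \widehat{\ell}(e_1)$. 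Then, with $w_1$ fixed, apply the induction hypothesis to $e_2$ with left multiplier $u \cdot \widehat{h}(w_1)$ and right multiplier $v$; this reduces the goal further to $u \cdot \widehat{h}(w_1) \cdot \widehat{h}(w_2) \cdot v \leq x$ for each $w_2 \in \widehat{\ell}(e_2)$, which holds by the hypothesis since $w_1 w_2 \in \widehat{\ell}(e_1 \cdot e_2)$.

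Finally, for $e = e_1^*$, I would invoke star-continuity: since $u \cdot \widehat{h}(e_1)^* \cdot v \leq x$ follows once $u \cdot \widehat{h}(e_1)^n \cdot v \leq x$ holds for all $n \in \mathbb{N}$, it suffices to bound each $\widehat{h}(e_1)^n = \widehat{h}(e_1^n)$. The key observation is that $\widehat{\ell}(e_1^n) = \widehat{\ell}(e_1)^n \subseteq \widehat{\ell}(e_1)^* = \widehat{\ell}(e_1^*)$, so the hypothesis for $e_1^*$ supplies precisely the hypothesis needed to apply the strengthened claim to $e_1^n$ and conclude $u \cdot \widehat{h}(e_1^n) \cdot v \leq x$. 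Since $e_1^n$ is not a subterm of $e_1^*$, this last step is justified by an auxiliary induction on $n$: the claim holds for $e_1^0 = 1$ by the base case, and for $e_1^{n+1} = e_1 \cdot e_1^n$ by the product argument, which only consumes the claim for $e_1$ (the structural induction hypothesis) and for $e_1^n$ (the auxiliary induction hypothesis). I expect the main obstacle to be getting this product/star bookkeeping right — in particular, realizing that the bare statement of the lemma does not carry enough in its induction hypothesis for the concatenation case, and that strengthening to the two-sided form $u \cdot (-) \cdot v \leq x$ repairs this; everything else is routine.
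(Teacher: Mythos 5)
Your proposal is correct and follows essentially the same route as the paper: the same strengthening to a two-sided form $u \cdot \widehat{h}(e) \cdot v \leq x$, the same nested application of the induction hypothesis in the product case, and the same auxiliary induction on $n$ followed by star-continuity in the star case. No gaps.
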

\begin{proof}
We prove the first claim by induction on $e$, showing more generally that for all $u, v, x \in K$, if for all $w \in \widehat{\ell}(e)$ we have that $u \cdot \widehat{h}(w) \cdot v \leq x$, then $u \cdot \widehat{h}(e) \cdot v \leq x$.
First, if $e = 0$, then $\widehat{\ell}(e) = \emptyset$, and so the claim holds vacuously.
If $e = 1$ or $e = \mathtt{a}$ for some $\mathtt{a} \in \Sigma$, then $\widehat{h}(e) = \widehat{h}(w)$ for the unique $w \in \widehat{\ell}(e)$, and so $u \cdot \widehat{h}(e) \cdot v = u \cdot \widehat{h}(w) \cdot v \leq x$.
There are three inductive cases.
\begin{itemize}
    \item
    If $e = e_1 + e_2$, then for all $w \in \widehat{\ell}(e_i) \subseteq \widehat{\ell}(e)$ we have that $u \cdot \widehat{h}(w) \cdot v \leq x$.
    Hence, by induction, $u \cdot \widehat{h}(e_i) \cdot v \leq x$, and thus $u \cdot \widehat{h}(e) \cdot v = u \cdot \widehat{h}(e_1) \cdot v + u \cdot \widehat{h}(e_2) \cdot v \leq x + x = x$.

    \item
    If $e = e_1 \cdot e_2$, then for all $w_1 \in \widehat{\ell}(e_1)$ and $w_2 \in \widehat{\ell}(e_2)$ we have that $u \cdot \widehat{h}(w_1) \cdot \widehat{h}(w_2) \cdot v \leq x$.
    Thus, by induction, it follows that for all $w_1 \in \widehat{\ell}(e_1)$ we have that $u \cdot \widehat{h}(w_1) \cdot \widehat{h}(e_2) \cdot v \leq x$.
    Applying the induction hypothesis once more, we find that $u \cdot \widehat{h}(e) \cdot v = u \cdot \widehat{h}(e_1) \cdot \widehat{h}(e_2) \cdot v \leq x$.

    \item
    If $e = e_1^*$, then an argument similar to the above shows that, for all $n \in \mathbb{N}$, if $u \cdot \widehat{h}(w) \cdot v \leq x$ for all $w \in \widehat{\ell}(e_1)^n$, then $u \cdot \widehat{h}(e_1)^n \cdot v \leq x$.
    Because $w \in \widehat{\ell}(e_1)^n \subseteq \widehat{\ell}(e_1^*)$, the premise holds for each $n$, hence $u \cdot \widehat{h}(e)^n \cdot v \leq x$ for all $n \in \mathbb{N}$.
    By star-continuity, $u \cdot \widehat{h}(e^*) \cdot v = u \cdot \widehat{h}(e)^* \cdot v \leq x$.
\end{itemize}

\vspace{-7.5mm}
\end{proof}

\noindent
The following property, which connects words in the language semantics to provable equivalence, is also useful.
It almost directly implies the converse of the previous lemma.

\begin{lemma}[{\cite[Lemma~7.1]{kozen-1992}}]%
\label{lemma:embed-sem}
For all $e \in \mathbb{E}$ and $w \in \widehat{\ell}(e)$, it holds that $w \leqq e$.
\end{lemma}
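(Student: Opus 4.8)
The plan is to proceed by structural induction on $e$, mirroring the recursive definition of $\widehat{\ell}$. Before starting, I would pin down the one piece of bookkeeping hiding behind the ``abuse of notation'': a word $w = \ltr{a}_1 \cdots \ltr{a}_n$, read as an expression, is $\ltr{a}_1 \cdot (\ltr{a}_2 \cdot (\cdots \cdot \ltr{a}_n))$, and since $\cdot$ is associative with unit $1$, the bracketing is irrelevant up to $\equiv$. In particular, for words $w_1, w_2$, the concatenation $w_1 w_2$ read as an expression satisfies $w_1 w_2 \equiv w_1 \cdot w_2$, where on the right $w_1, w_2$ are the corresponding expressions and $\epsilon$ is read as $1$. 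This is the only place where identifying words with expressions needs a word of justification.

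The base cases are immediate. If $e = 0$ then $\widehat{\ell}(e) = \emptyset$ and there is nothing to prove. If $e = 1$ the only $w \in \widehat{\ell}(e)$ is $\epsilon = 1 = e$, and if $e = \ltr{a}$ the only $w$ is $\ltr{a} = e$; in both cases $w \leqq e$ by reflexivity. For $e = e_1 + e_2$, any $w \in \widehat{\ell}(e)$ lies in $\widehat{\ell}(e_i)$ for some $i \in \{1,2\}$, so the induction hypothesis gives $w \leqq e_i$, and since $e_i \leqq e_1 + e_2$ (an idempotent-semiring fact), we conclude $w \leqq e$. For $e = e_1 \cdot e_2$, any $w \in \widehat{\ell}(e)$ factors as $w = w_1 w_2$ with $w_i \in \widehat{\ell}(e_i)$; the induction hypothesis gives $w_1 \leqq e_1$ and $w_2 \leqq e_2$, monotonicity of $\cdot$ gives $w_1 \cdot w_2 \leqq e_1 \cdot e_2$, and the bookkeeping fact rewrites the left-hand side as $w$, so $w \leqq e$.

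The only real work is the star case $e = e_1^*$. Here $w \in \widehat{\ell}(e_1^*) = \bigcup_{n \in \mathbb{N}} \widehat{\ell}(e_1)^n$, so $w = w_1 \cdots w_n$ for some $n$ with each $w_i \in \widehat{\ell}(e_1)$. The induction hypothesis gives $w_i \leqq e_1$, and $n-1$ uses of monotonicity of $\cdot$ (plus the bookkeeping fact, again) yield $w \leqq e_1^n$. It then suffices to show the auxiliary inequality $e_1^n \leqq e_1^*$ for all $n$, which I would prove by a short side induction on $n$: for $n = 0$ we have $e_1^0 = 1 \leqq 1 + e_1 \cdot e_1^* = e_1^*$ by the unrolling axiom; for the step, $e_1^{n+1} = e_1 \cdot e_1^n \leqq e_1 \cdot e_1^* \leqq 1 + e_1 \cdot e_1^* = e_1^*$, by monotonicity and unrolling. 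Chaining $w \leqq e_1^n \leqq e_1^* = e$ closes the case.

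I expect this star case — and in particular keeping the word/expression identification honest inside the $n$-fold product, together with the little side induction proving $e_1^n \leqq e_1^*$ — to be the main (though still very modest) obstacle; all the remaining cases follow directly from monotonicity of the operators w.r.t.\ $\leqq$ and the idempotent semiring laws.
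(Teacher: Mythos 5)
Your proposal is correct and follows essentially the same route as the paper: induction on $e$, with the base and the $+$/$\cdot$ cases handled by monotonicity, and the star case reduced to $w \leqq e_1^n$ followed by a side induction on $n$ using the unrolling law $f^* \equiv 1 + f \cdot f^*$ to get $e_1^n \leqq e_1^*$. The extra bookkeeping you supply for the word-as-expression identification is a harmless elaboration of a step the paper leaves implicit.
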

\begin{proof}
We proceed by induction on $e$.
In the base, there are three cases.
If $e = 0$, then $\widehat{\ell}(e) = \emptyset$, so the claim holds vacuously; if $e = 1$ or $e = \ltr{a}$ for some $\ltr{a} \in \Sigma$, then $w = e$, so so the claim is trivial.

In the inductive step, there are three cases to consider.
First, if $e = e_1 + e_2$, then without loss of generality $w \in \widehat{\ell}(e_1)$; by induction $w \leqq e_1 \leqq e$.
Second, if $e = e_1 \cdot e_2$, then $w = w_1 w_2$ with $w_i \in \widehat{\ell}(e_i)$; by induction and monotonicity, $w = w_1 w_2 \leqq e_1 e_2 = e$.
Last, if $e = e_1^*$, then $w = w_1 \cdots w_n$ for $w_1, \dots, w_n \in \widehat{\ell}(e_1)$; by induction, $w_i \leqq e_1$ for all $1 \leq i \leq n$, meaning $w \leqq e_1^n$.
A straightforward induction on $n$ using the rule $f \equiv 1 + f \cdot f^*$ shows that $e_1^n \leqq e_1^*$ for all $n$, and so we are done.
\end{proof}

\begin{remark}
Taken together, \Cref{lemma:cont,lemma:embed-sem} can be interpreted as saying that if $K$ is star-continuous and $h: \Sigma \to K$ is a function, then $\widehat{h}(e)$ is the least upper bound (inside $K$) of all $\widehat{h}(w)$ for $w \in \widehat{\ell}(w)$.
With some additional effort, these facts can be leveraged to show that in this setting $\widehat{h}$ is \emph{continuous}, in the sense that it commutes with (possibly infinite) suprema in $\mathbb{E}$ w.r.t.\ $\leqq$.
\end{remark}

For our purposes, these lemmas tell us that equations of the language model must also hold in any star-continuous KA\@.
This can be seen as a variation of~\cite[Theorem~3]{kozen-smith-1996} for KA \emph{without tests}.

\begin{lemma}%
\label{lemma:lang-vs-cont}
Let $e, f \in \mathbb{E}$.
If $\mathcal{L} \models e = f$, then $\mathfrak{C} \models e = f$.
\end{lemma}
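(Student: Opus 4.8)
The plan is to reduce the statement $\mathcal{L} \models e = f \implies \mathfrak{C} \models e = f$ to the two lemmas just established, namely \Cref{lemma:cont} and \Cref{lemma:embed-sem}. Fix a star-continuous KA $K$ and an interpretation $h\colon \Sigma \to K$; I want to show $\widehat{h}(e) = \widehat{h}(f)$. The key observation is that, as noted in the remark following \Cref{lemma:embed-sem}, the two lemmas together say that $\widehat{h}(e)$ is the least upper bound in $K$ of the set $\{ \widehat{h}(w) : w \in \widehat{\ell}(e) \}$. Since $\mathcal{L} \models e = f$ means precisely that $\widehat{\ell}(e) = \widehat{\ell}(f)$ as languages, these two sets of elements of $K$ coincide, and hence their least upper bounds — $\widehat{h}(e)$ and $\widehat{h}(f)$ — are equal.

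To make this precise without invoking the (unproved) continuity claim from the remark, I would argue by mutual inequality using only the lemmas as stated. For the inequality $\widehat{h}(e) \leq \widehat{h}(f)$: take any $w \in \widehat{\ell}(e)$. Since $\widehat{\ell}(e) = \widehat{\ell}(f)$, we have $w \in \widehat{\ell}(f)$, so by \Cref{lemma:embed-sem} we get $w \leqq f$, which means $\widehat{h}(w) \leq \widehat{h}(f)$ in $K$ (applying the homomorphism $\widehat{h}$ to the provable inequality $w + f \equiv f$, and using that every KA — in particular $K$ — validates provable equations). Thus $\widehat{h}(w) \leq \widehat{h}(f)$ for every $w \in \widehat{\ell}(e)$, and \Cref{lemma:cont} (instantiated with $x = \widehat{h}(f)$) yields $\widehat{h}(e) \leq \widehat{h}(f)$. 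The reverse inequality $\widehat{h}(f) \leq \widehat{h}(e)$ is entirely symmetric, swapping the roles of $e$ and $f$. From antisymmetry of $\leq$ in $K$ we conclude $\widehat{h}(e) = \widehat{h}(f)$, and since $h$ was arbitrary, $K \models e = f$; as $K$ was an arbitrary star-continuous KA, $\mathfrak{C} \models e = f$.

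I do not anticipate a serious obstacle here — the work has already been done in \Cref{lemma:cont} and \Cref{lemma:embed-sem}, and this lemma is essentially their combination. The only point requiring a little care is the passage from "$w \leqq f$" (a statement about provable equivalence, i.e., about $\equiv$) to "$\widehat{h}(w) \leq \widehat{h}(f)$" (a statement inside the particular KA $K$): this uses that $\widehat{h}$ is a KA-homomorphism and that $K$, being a KA, satisfies every equation in $\equiv$, so it preserves the order $\leqq$. That is routine, but worth stating explicitly since it is the bridge between the syntactic and semantic sides. One should also note at the outset that the hypothesis $\mathcal{L} \models e = f$ is equivalent to $\widehat{\ell}(e) = \widehat{\ell}(f)$, as remarked just before \Cref{theorem:language-completeness}.
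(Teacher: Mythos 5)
Your proof is correct and follows essentially the same route as the paper's: apply \Cref{lemma:embed-sem} to each $w \in \widehat{\ell}(e) = \widehat{\ell}(f)$ to get $\widehat{h}(w) \leq \widehat{h}(f)$, then conclude $\widehat{h}(e) \leq \widehat{h}(f)$ via \Cref{lemma:cont}, and symmetrize. Your extra care about the passage from $w \leqq f$ to $\widehat{h}(w) \leq \widehat{h}(f)$ is sound and merely makes explicit a step the paper leaves implicit.
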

\begin{proof}
Let $K$ be a star-continuous KA, and let $h: \Sigma \to K$.
If $w \in \widehat{\ell}(e) = \widehat{\ell}(f)$, then $w \leqq f$ by \Cref{lemma:embed-sem}, whence $\widehat{h}(w) \leq \widehat{h}(f)$ for all $w \in \widehat{\ell}(e)$.
By \Cref{lemma:cont}, the latter implies that $\widehat{h}(e) \leq \widehat{h}(f)$; similarly, $\widehat{h}(f) \leq \widehat{h}(e)$, and so $\widehat{h}(e) = \widehat{h}(f)$.
\end{proof}

Furthermore, equivalence in all finite relational models implies equivalence in the language model.
We show this by constructing a sequence of finite relational KAs that approximate the language model, in that each of these encapsulates agreement of the language semantics on words up to a certain length.

\begin{lemma}%
\label{lemma:fin-rel-vs-lang}
Let $e, f \in \mathbb{E}$.
If $\mathfrak{FR} \models e = f$, then $\mathcal{L} \models e = f$.
\end{lemma}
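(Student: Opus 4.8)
The goal is to show that if $e$ and $f$ are distinguished by the language model, then they are distinguished by some finite relational KA. So suppose $\mathcal{L} \not\models e = f$; without loss of generality (since $\mathcal{L}$ is a KA, so $e \equiv f$ would give $\mathcal{L} \models e = f$) there is a word $w$ with $w \in \widehat{\ell}(e) \setminus \widehat{\ell}(f)$, say $|w| = n$. The plan is to build, for this $n$, a finite relational KA $R_n$ together with an interpretation $h_n : \Sigma \to R_n$ such that membership of a word $u$ with $|u| \leq n$ in $\widehat{\ell}(g)$ can be read off from $\widehat{h_n}(g)$, for every expression $g$. Applying this with $g = e$ and $g = f$ and the distinguishing word $w$ then yields $\widehat{h_n}(e) \neq \widehat{h_n}(f)$, so $R_n \not\models e = f$, contradicting $\mathfrak{FR} \models e = f$.

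\textbf{Constructing the approximating KA.} The natural candidate is the ``truncated shift'' action on words of bounded length. Let $W_n = \{ u \in \Sigma^* : |u| \leq n \}$, a finite set, and take the state space $X_n = W_n$ (one could also add a sink state, but truncation handles that implicitly). For $\ltr{a} \in \Sigma$ define $h_n(\ltr{a}) \subseteq X_n \times X_n$ to be the partial function $u \mapsto u\ltr{a}$ restricted to those $u$ with $|u\ltr{a}| \leq n$; that is, $h_n(\ltr{a}) = \{ (u, u\ltr{a}) : u \in W_n,\ |u| < n \}$. Then $R_n = \mathcal{R}(X_n)$ is a finite relational KA (finite since $X_n$ is finite), and I claim that for every $g \in \mathbb{E}$ and all $u, v \in W_n$,
\[
    (u, v) \in \widehat{h_n}(g) \quad\Longleftrightarrow\quad \exists\, x \in \widehat{\ell}(g) \text{ with } v = ux \text{ and } |v| \leq n.
\]
This is proved by a routine induction on $g$, using the base cases for $0$, $1$, $\ltr{a}$, and for the inductive cases unwinding the definitions of $\cup$, $\circ$, and the reflexive-transitive closure; the length bound is preserved throughout because every intermediate state lies in $W_n$. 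In particular, taking $u = \epsilon$: $(\epsilon, x) \in \widehat{h_n}(g)$ iff $x \in \widehat{\ell}(g)$ and $|x| \leq n$, so $\widehat{h_n}(g)$ records exactly the words of $\widehat{\ell}(g)$ of length at most $n$, reachable from $\epsilon$.

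\textbf{Finishing.} With the claim in hand, the distinguishing word $w$ of length $n$ satisfies $(\epsilon, w) \in \widehat{h_n}(e)$ but $(\epsilon, w) \notin \widehat{h_n}(f)$ (since $w \notin \widehat{\ell}(f)$ and $|w| \leq n$), so $\widehat{h_n}(e) \neq \widehat{h_n}(f)$; hence $R_n \not\models e = f$ with $R_n \in \mathfrak{FR}$, contradicting the hypothesis. Therefore $\mathcal{L} \models e = f$. The one place needing care is the Kleene-star case of the induction: one must check that the relation $\widehat{h_n}(g)^*$ (reflexive-transitive closure in $\mathcal{R}(X_n)$) corresponds to concatenations of words from $\widehat{\ell}(g)$ that stay within the length bound — the point being that if $x_1 \cdots x_k \in \widehat{\ell}(g^*)$ with $|x_1 \cdots x_k| \leq n$ then every prefix $|x_1 \cdots x_i| \leq n$ as well, so the composite pair $(u, u x_1 \cdots x_k)$ is genuinely realized by a chain of $h_n$-steps through states of $W_n$. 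This monotonicity of prefixes is exactly what makes the truncation interact correctly with iteration, and it is the main (though still elementary) obstacle.
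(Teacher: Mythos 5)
Your proof is correct, but your countermodel differs from the one in the paper, so a brief comparison is in order. The paper argues directly: assuming $\mathfrak{FR} \models e = f$, it fixes an arbitrary $w = \ltr{a}_0 \cdots \ltr{a}_{n-1} \in \widehat{\ell}(e)$ and builds a \emph{word-specific} model $\mathcal{R}(\{0, \dots, n\})$ with $h_w(\ltr{a}) = \{ (i, i+1) : \ltr{a} = \ltr{a}_i \}$, proving that $(i,j) \in \widehat{h_w}(g)$ iff the factor $\ltr{a}_i \cdots \ltr{a}_{j-1}$ lies in $\widehat{\ell}(g)$; this forces $w \in \widehat{\ell}(f)$. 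You instead build a single \emph{universal} approximant $\mathcal{R}(W_n)$ on the set of all words of length at most $n$, with the truncated append action, and show that $\widehat{h_n}(g)$ restricted to pairs $(\epsilon, x)$ records exactly the length-$\leq n$ fragment of $\widehat{\ell}(g)$. Both inductions go through for the same reason — the only delicate cases are $\cdot$ and ${}^*$, where one must check that intermediate states stay inside the finite carrier, which in your setting is the prefix-monotonicity observation you correctly flag and in the paper's setting is the fact that the splitting index $k$ satisfies $i \leq k \leq j$. The trade-off: the paper's model has only $n+1$ underlying points (so the relational KA has size $2^{(n+1)^2}$) but must be re-instantiated for each witness word, whereas your model handles all words of length $\leq n$ at once at the cost of an exponentially larger state space $|W_n| = O(|\Sigma|^n)$. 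Since the lemma only needs one countermodel per distinguishing word, neither choice is preferable for the statement at hand; your contrapositive phrasing and the paper's direct phrasing are interchangeable, and both rely on the same background fact that $\mathcal{L} \models e = f$ is equivalent to $\widehat{\ell}(e) = \widehat{\ell}(f)$.
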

\begin{proof}
Our strategy is as follows: given $w \in \Sigma^*$ of length $n$, we design a finite KA $K_n$ and a map $h_w: \Sigma \to K_n$ such that for any $g \in \mathbb{E}$, $\widehat{h_w}(g)$ tells us whether $w \in \widehat{\ell}(g)$.
Given that $e$ and $f$ must be equivalent in any finite KA, it must be the case that $\widehat{h_w}(e) = \widehat{h_w}(f)$, from which we conclude language equivalence.

In more detail, we proceed as follows.
For $n \in \mathbb{N}$, let $K_n = \mathcal{R}(\{ 0, \dots, n \})$.
For $w = \ltr{a}_0 \dots \ltr{a}_{n-1} \in \Sigma^*$, we define $h_w: \Sigma \to K_n$ by $h_w(\ltr{a}) = \{ (i, i+1) : \ltr{a} = \ltr{a}_i \}$.
We now claim the following: for $g \in \mathbb{E}$, we have that $(0, n) \in \widehat{h_w}(g)$ if and only if $w \in \widehat{\ell}(g)$, or more generally, for all $0 \leq i \leq j \leq n$ it holds that $(i, j) \in \widehat{h_w}(g)$ if and only if $\ltr{a}_i \ltr{a}_{i+1} \cdots \ltr{a}_{j-1} \in \widehat{\ell}(g)$.
Suppose this property is true: if $w \in \widehat{\ell}(e)$ with length $n$, then $(0, n) \in \widehat{h_w}(e)$, and because $\mathfrak{FR} \models e = f$, we have that $(0, n) \in \widehat{h_w}(f)$, which tells us that $w \in \widehat{\ell}(f)$.
This shows that $\widehat{\ell}(e) \subseteq \widehat{\ell}(f)$; the proof of the converse inclusion is analogous.

It remains to prove the claim, which we do by induction on $g$.
In the base, there are three cases.
If $g = 0$, then the claim holds vacuously, as $\widehat{h_w}(g) = \widehat{\ell}(g) = \emptyset$.
If $g = 1$, then $(i, j) \in \widehat{h_w}(g)$ if and only if $i = j$, which is equivalent to $\ltr{a}_i\ltr{a}_{i+1}\cdots\ltr{a}_{j-1} = \epsilon$; the latter holds precisely when $\ltr{a}_i\ltr{a}_{i+1}\cdots\ltr{a}_{j-1} \in \widehat{\ell}(g)$.
If $g = \ltr{a}$ for some $\ltr{a} \in \Sigma$, then the claim holds by definition of $h_w$ and $\ell$.

For the inductive step, there are three more cases:
\begin{itemize}
    \item
    If $g = g_1 + g_2$, then we derive
    \begin{align*}
    (i, j) \in \widehat{h_w}(g)
        &\iff (i, j) \in \widehat{h_w}(g_1) \vee (i, j) \in \widehat{h_w}(g_2) \\
        &\iff \ltr{a}_i\ltr{a}_{i+1}\cdots\ltr{a}_{j-1} \in \widehat{\ell}(g_1) \vee \ltr{a}_i\ltr{a}_{i+1}\cdots\ltr{a}_{j-1} \in \widehat{\ell}(g_2) \tag{IH} \\
        &\iff \ltr{a}_i\ltr{a}_{i+1}\cdots\ltr{a}_{j-1} \in \widehat{\ell}(g)
    \end{align*}
    \item
    If $g = g_1 \cdot g_2$, then we derive
    \begin{align*}
    (i, j) \in \widehat{h_w}(g)
        &\iff \exists k.\ (i, k) \in \widehat{h_w}(g_1) \wedge (k, j) \in \widehat{h_w}(g_2) \\
        &\iff \exists k.\ \ltr{a}_i\ltr{a}_{i+1}\cdots\ltr{a}_{k-1} \in \widehat{\ell}(g_1) \wedge \ltr{a}_k\ltr{a}_{k+1}\cdots\ltr{a}_{j-1} \in \widehat{\ell}(g_2) \tag{IH} \\
        &\iff \ltr{a}_i\ltr{a}_{i+1}\cdots\ltr{a}_{j-1} \in \widehat{\ell}(g)
    \end{align*}
    \item
    If $g = g_1^*$, then $(i, j) \in \widehat{h_w}(g) = \widehat{h_w}(g_1)^*$ if and only if $(i, j) \in \widehat{h_w}(g_1)^n$ for some $n \in \mathbb{N}$.
    The latter is equivalent to saying that $\ltr{a}_i\ltr{a}_{i+1}\cdots\ltr{a}_{j-1} \in \widehat{\ell}(g_1)^n$ for some $n \in \mathbb{N}$, which can be shown by induction on $n$ using an argument similar to the previous case.
    Finally, $\ltr{a}_i\ltr{a}_{i+1}\cdots\ltr{a}_{j-1} \in \widehat{\ell}(g_1)^n$ holds for some $n \in \mathbb{N}$ precisely when $\ltr{a}_i\ltr{a}_{i+1}\cdots\ltr{a}_{j-1} \in \widehat{\ell}(g_1)^* = \widehat{\ell}(g)$.

    \vspace{-5mm}
\end{itemize}
\end{proof}

\noindent
These lemmas lead to the implications in \Cref{figure:overview}.
Thus, completeness w.r.t.\ any of $\mathfrak{R}$, $\mathfrak{FR}$, $\mathcal{L}$, $\mathfrak{C}$ or $\mathfrak{F}$ implies completeness w.r.t.\ the others --- e.g., if $\mathcal{L} \models e = f$ implies $e \equiv f$, then so do $\mathfrak{FR} \models e = f$, $\mathfrak{R} \models e = f$, $\mathfrak{F} \models e = f$ and $\mathfrak{C} \models e = f$.
\hyperref[theorem:language-completeness]{Kozen's theorem} then directly leads to the following observation, which we feel the need to record because it seems absent from the literature.

\begin{corollary}[Finite relational model property]
Let $e, f \in \mathfrak{E}$.
If $\mathfrak{FR} \models e = f$, then $e \equiv f$.
\end{corollary}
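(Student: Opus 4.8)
The plan is to chain the two reductions already established in this section with Kozen's completeness theorem. Assume $\mathfrak{FR} \models e = f$. First I would invoke \Cref{lemma:fin-rel-vs-lang}, which says precisely that agreement in every finite relational KA forces agreement in the language model; this gives $\mathcal{L} \models e = f$. Then I would apply \hyperref[theorem:language-completeness]{Kozen's language completeness theorem} (\Cref{theorem:language-completeness}) to the latter, obtaining $e \equiv f$, which is what we want.

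In other words, the corollary is just the composite implication
\[
    \mathfrak{FR} \models e = f
    \;\Longrightarrow\;
    \mathcal{L} \models e = f
    \;\Longrightarrow\;
    e \equiv f,
\]
with the first arrow supplied by \Cref{lemma:fin-rel-vs-lang} and the second by \Cref{theorem:language-completeness}. No additional construction is needed at this point.

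There is really no obstacle left in the corollary itself: all the genuine work has already been absorbed into \Cref{lemma:fin-rel-vs-lang}, whose proof builds, for each word $w$ of length $n$, the finite relational KA $\mathcal{R}(\{0,\dots,n\})$ together with the interpretation $h_w$ that detects membership of $w$ in $\widehat{\ell}(g)$ via the entry $(0,n)$. The only thing worth flagging is the dependence on Kozen's theorem: if one instead wanted a self-contained route, one would replace the appeal to \Cref{theorem:language-completeness} by the elementary FMP proof developed later in the paper — but since the point here is merely to record the finite relational model property as a consequence of what is already known, routing through Kozen's theorem is the shortest path.
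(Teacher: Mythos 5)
Your proposal matches the paper's own argument exactly: the corollary is obtained by composing \Cref{lemma:fin-rel-vs-lang} (from $\mathfrak{FR} \models e = f$ to $\mathcal{L} \models e = f$) with \Cref{theorem:language-completeness}. The remark about substituting the later elementary FMP proof for Kozen's theorem is also consistent with the paper's stated aims, so there is nothing to correct.
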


In the remainder, we work towards a novel proof of \hyperref[theorem:finite-model-property]{Palka's theorem}, which is elementary in the sense that it does not rely on \hyperref[theorem:language-completeness]{Kozen's theorem} nor minimality or bisimilarity of automata.

\begin{figure}
    \centering
    \begin{tikzpicture}
        \node (L) {$\mathcal{L} \models e = f$};
        \node[above=7mm of L] (R) {$\mathfrak{R} \models e = f$};
        \node[left=2cm of L] (FR) {$\mathfrak{FR} \models e = f$};
        \node[right=2cm of L] (C) {$\mathfrak{C} \models e = f$};
        \node[below=7mm of L] (F) {$\mathfrak{F} \models e = f$};

        \draw[-implies,double equal sign distance,rounded corners,gray] (R.west) -| (FR.north);
        \draw[-implies,double equal sign distance,rounded corners,gray] (C.north) |- (R.east);
        \draw[-implies,double equal sign distance,rounded corners,gray] (F.west) -| (FR.south);
        \draw[-implies,double equal sign distance,rounded corners,gray] (C.south) |- (F.east);
        \draw (FR.east) edge[-implies,double equal sign distance] node[above=1mm] {\Cref{lemma:fin-rel-vs-lang}} (L.west);
        \draw (L.east) edge[-implies,double equal sign distance] node[above=1mm] {\Cref{lemma:lang-vs-cont}} (C.west);
    \end{tikzpicture}
    \caption{%
        Overview of the elementary connections between different (classes of) KAs in terms of the equalities that are satisfied.
        The grayed out implications follow from containments between classes of models --- e.g., if $\mathfrak{C} \models e = f$ then $\mathfrak{R} \models e = f$ because all relational models are star-continuous.
    }%
    \label{figure:overview}
\end{figure}
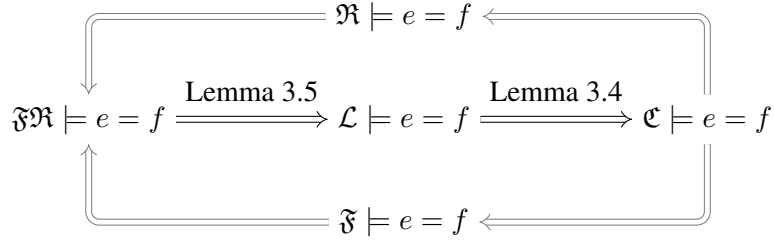

\section{Automata}%
\label{section:solutions-automata}

We start our journey by recalling \emph{automata} as a way of representing languages.
Each automaton can be seen as a system of equations, which induces a \emph{least solution} up to provable equivalence.
None of the material in this section is novel, and in fact, most of it is well-known.
However, least solutions play a central role in the arguments to come, and so it helps to define and discuss them in detail.

\begin{definition}[Automaton]
A (\emph{non-deterministic finite}) \emph{automaton} $A$ is a tuple $(Q, \delta, I, F)$ where $Q$ is a finite set of \emph{states}, $\delta: Q \times \Sigma \to \mathcal{P}(Q)$ is the \emph{transition function} and $I, F \subseteq Q$ contain \emph{initial} and \emph{final states} respectively.

For $\ltr{a} \in \Sigma$, we write $\delta_{\ltr{a}}$ for the relation $\{ (q, q') : q' \in \delta(q, \ltr{a}) \}$.
We can generalize this to words:
\begin{mathpar}
    \delta_\epsilon = \id_Q
    \and
    \delta_{w\ltr{a}} = \delta_w \circ \delta_\ltr{a}
\end{mathpar}

The \emph{language} of $q \in Q$, denoted $L(A, q)$, holds words $w$ such that $\delta_w$ relates $q$ to a final state, i.e., $L(A, q) = \{ w \in \Sigma^* : q \mathrel{\delta_w} q_f \in F \}$.
The language of $A$ is given by $L(A) = \bigcup_{q_i \in I} L(A, q_i)$.
\end{definition}

Automata are commonly drawn as graphs, with labeled vertices representing states and labeled edges representing transitions; initial states are indicated by unlabeled edges without an origin, and accepting states have a double border.
An example is the automaton $A_\mathsf{alt}$ depicted in \Cref{figure:automaton-alt}.

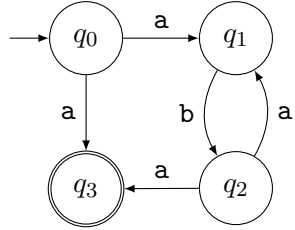
\begin{figure}
    \centering
    \begin{tikzpicture}[every initial by arrow/.style={-latex}]
        \node[state,initial,initial text={}] (q0) {$q_0$};
        \node[state,right=of q0] (q1) {$q_1$};
        \node[state,below=of q1] (q2) {$q_2$};
        \node[state,accepting,below=of q0] (q3) {$q_3$};
        \draw (q0) edge[-latex] node[above] {$\ltr{a}$} (q1);
        \draw (q1) edge[-latex,bend right] node[left] {$\ltr{b}$} (q2);
        \draw (q2) edge[-latex,bend right] node[right] {$\ltr{a}$} (q1);
        \draw (q2) edge[-latex] node[above] {$\ltr{a}$} (q3);
        \draw (q0) edge[-latex] node[left] {$\ltr{a}$} (q3);
    \end{tikzpicture}
    \caption{Visual representation of an automaton $A_\mathsf{alt}$ accepting the language ${(\ltr{a}\cdot\ltr{b})}^*\cdot\ltr{a}$.}%
    \label{figure:automaton-alt}
\end{figure}

Kleene's theorem says that the set of languages defined by regular expressions is the same as the set of languages described by automata~\cite{kleene-1956}.
In fact, the translations that demonstrate this equivalence play an important role going forward; we will now outline the translation from automata to expressions.

For the next section and the remainder of this one, we fix an automaton $A = (Q, \delta, I, F)$.

\begin{definition}[Solutions]
Let $e \in \mathbb{E}$.
An \emph{$e$-solution} to $A$ is a function $s: Q \to \mathbb{E}$ such that the following hold for all $q, q' \in Q$:
\begin{mathpar}
    q \in F \implies {e \leqq s(q)}
    \and
    q' \in \delta(q, \mathtt{a}) \implies {\mathtt{a} \cdot s(q') \leqq s(q)}
\end{mathpar}
A $1$-solution to $A$ is simply called a \emph{solution} to $A$; alternatively, if $s$ is a solution to $A$, we may also write that $s$ \emph{solves} $A$.
Finally, we say that $s$ is the \emph{least $e$-solution} (resp.\ \emph{least solution}) to $A$ if for all $e$-solutions (resp.\ solutions) $s'$ and for all $q \in Q$ it holds that $s(q) \leqq s'(q)$.
\end{definition}

It is not too hard to show that if $s$ is the least solution to $A$, then $\widehat{\ell}(s(q)) = L(A, q)$ for all states $q \in Q$; thus, the ability to compute least solutions to an automaton amounts to a conversion from automata to equivalent expressions, which constitutes one half of Kleene's theorem.

\begin{example}
Consider the automaton $A_\mathsf{alt}$ from \Cref{figure:automaton-alt}.
A solution to this automaton is a function $s: \{ q_0, q_1, q_2, q_3 \} \to \mathbb{E}$ satisfying all of the following conditions:
\begin{mathpar}
    1 \leqq s(q_3)
    \and
    \ltr{a} \cdot s(q_3) \leqq s(q_0)
    \and
    \ltr{a} \cdot s(q_1) \leqq s(q_0)
    \\
    \ltr{a} \cdot s(q_3) \leqq s(q_2)
    \and
    \ltr{a} \cdot s(q_1) \leqq s(q_2)
    \and
    \ltr{b} \cdot s(q_2) \leqq s(q_1)
\end{mathpar}
To satisfy all of these conditions, we could choose the constant function $s_1$ that assigns ${(\ltr{a}+\ltr{b})}^*$ to each state.
Alternatively, we could choose $s_2(q_0) = s_2(q_2) = {(\ltr{a}\cdot\ltr{b})}^* \cdot\ltr{a}$, $s(q_1) = \ltr{b}\cdot{(\ltr{a}\cdot\ltr{b})}^*\cdot\ltr{a}$ and $s_2(q_3) = 1$.
One can show that $s_2(q) \leqq s_1(q)$ for all states $q$; indeed, $s_2$ is the least solution.
\end{example}

Least $e$-solutions are unique up to the laws of KA --- i.e., if $s$ and $s'$ are both least $e$-solutions to $A$, then clearly $s(q) \equiv s'(q)$ for all $q \in Q$.
This is why we speak of \emph{the} least $e$-solution (resp.\ least solution) to an automaton.
As it turns out, the least $e$-solution to an automaton always exists, for each $e$: the process to compute them~\cite{conway-1971,kozen-1994} arises quite naturally from the \emph{state elimination} algorithm~\cite{kleene-1956,mcnaughton-yamada-1960}, which produces a regular expression that represents the language accepted by a given automaton.

Readers familiar with the state elimination algorithm will recall that, as an intermediate form, it requires automata that can have transitions labeled by regular expressions, rather than just letters.
On the algebraic side, a similar generalization is necessary~\cite{conway-1971,kozen-1994}, which we detail below.

\begin{definition}
Let $Q$ be finite.
A \emph{$Q$-vector} is a function $b: Q \to \mathbb{E}$, and a \emph{$Q$-matrix} is a function $M: Q \times Q \to \mathbb{E}$.
Given a $Q$-vector $M$ and a $Q$-vector $b$, we refer to the pair $(M, b)$ as a \emph{linear system (on $Q$)}.
Let $e \in \mathbb{E}$; we call $s: Q \to \mathbb{E}$ an \emph{$e$-solution} to $(M, b)$ when for all $q, q' \in Q$, we have:
\begin{mathpar}
    b(q) \cdot e \leqq s(q)
    \and
    M(q, q') \cdot s(q') \leqq s(q)
\end{mathpar}
We expand the terminology from earlier: if $s: Q \to \mathbb{E}$ is an $e$-solution to $(M, b)$ such that for all $e$-solutions $s': Q \to \mathbb{E}$ to $(M, b)$ and all $q \in Q$ we have $s(q) \leqq s'(q)$, then $s$ is the \emph{least} $e$-solution; a $1$-solution to $(M, b)$ is simply called a solution, and least solutions are defined similarly.
\end{definition}

Linear systems always admit a least $e$-solution, and it is this $e$-solution that will allow us to compute the least $e$-solution to an automaton momentarily.
The next theorem can be viewed as a different perspective on the proof by Conway~\cite[Chapter~3]{conway-1971}, who showed that matrices over what he called an $\mathbf{S}$-algebra again form an $\mathbf{S}$-algebra, which was applied to Kleene algebra by Kozen~\cite{kozen-1994}.
Conway and Kozen used a divide-and-conquer approach to compute the Kleene star of a square matrix; the proof below is closer to the one by Backhouse~\cite{backhouse-1975}, who proceeded by induction on the size of the matrix.

In the statement that follows and throughout the sequel, given a $Q$-vector $s$ and $e \in \mathbb{E}$, we will write $s \cdot e$ for the $Q$-vector given by $(s \cdot e)(q) = s(q) \cdot e$ --- i.e., the scalar multiplication of $s$ by $e$.

\begin{theorem}%
\label{theorem:computing-matrix-solutions}
Let $Q$ be finite.
Given $M: Q \times Q \to \mathbb{E}$ and $b: Q \to \mathbb{E}$, we can compute a $Q$-indexed vector $s: Q \to \mathbb{E}$ such that for all $e \in \mathbb{E}$, we have that $s \cdot e$ is the least $e$-solution to $(M, b)$.
\end{theorem}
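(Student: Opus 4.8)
The plan is to proceed by induction on the size of $Q$, following Backhouse's approach. The base case $Q = \emptyset$ is vacuous. For the inductive step, I would pick a state $q_0 \in Q$, set $Q' = Q \setminus \{q_0\}$, and partition the system $(M, b)$ accordingly: write $M$ in block form with the $q_0$-row, the $q_0$-column, and the $Q' \times Q'$ submatrix $M'$, and similarly split $b$ into its $q_0$-component $b(q_0)$ and its $Q'$-restriction $b'$. The key algebraic move is to ``eliminate'' $q_0$: the defining inequation for a solution $s$ at $q_0$ is $b(q_0) \cdot e + M(q_0, q_0) \cdot s(q_0) + \sum_{q' \in Q'} M(q_0, q') \cdot s(q') \leqq s(q_0)$, and by the fixpoint law for Kleene algebra this is governed by $s(q_0) = M(q_0,q_0)^* \cdot \bigl(b(q_0) \cdot e + \sum_{q' \in Q'} M(q_0, q') \cdot s(q')\bigr)$. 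Substituting this expression for $s(q_0)$ into the remaining inequations yields a new linear system on $Q'$ with modified matrix $\widetilde{M}(q,q') = M(q,q') + M(q, q_0) \cdot M(q_0,q_0)^* \cdot M(q_0, q')$ and modified vector $\widetilde{b}(q) = b(q) + M(q, q_0) \cdot M(q_0, q_0)^* \cdot b(q_0)$.

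The crucial observation that makes the induction go through uniformly in $e$ is that $\widetilde{M}$ and $\widetilde{b}$ do not depend on $e$: the $e$ only enters as a scalar on the right of $b$, and scalar multiplication on the right commutes past everything in the construction. So by the induction hypothesis applied to $(\widetilde{M}, \widetilde{b})$, I obtain a $Q'$-vector $s'$ with $s' \cdot e$ the least $e$-solution to $(\widetilde{M}, \widetilde{b})$ for every $e$. I then define $s$ on all of $Q$ by setting $s\restriction_{Q'} = s'$ and $s(q_0) = M(q_0,q_0)^* \cdot \bigl(b(q_0) + \sum_{q' \in Q'} M(q_0, q') \cdot s'(q')\bigr)$, and claim $s \cdot e$ is the least $e$-solution to $(M, b)$.

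The verification splits into two halves. First, \emph{$s \cdot e$ is an $e$-solution}: one checks the inequations $b(q) \cdot e \leqq s(q) \cdot e$ and $M(q,q') \cdot s(q') \cdot e \leqq s(q) \cdot e$ for all $q, q' \in Q$, treating the cases $q = q_0$ and $q \in Q'$ separately, and using that $s' \cdot e$ solves $(\widetilde{M}, \widetilde{b})$ together with the unrolling law $1 + x \cdot x^* = x^*$ for the self-loop. Second, \emph{minimality}: given any $e$-solution $t$ to $(M, b)$, one shows $t(q_0) \geqq M(q_0,q_0)^* \cdot (b(q_0) \cdot e + \sum_{q'} M(q_0,q') \cdot t(q'))$ via the least-fixpoint axiom $x + y \cdot z \leqq z \implies y^* \cdot x \leqq z$, deduces that the restriction $t\restriction_{Q'}$ is an $e$-solution to $(\widetilde{M}, \widetilde{b})$ (this is where the substitution is reversed), invokes the induction hypothesis to get $s'(q') \cdot e \leqq t(q')$ on $Q'$, and finally propagates this back to $q_0$. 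The main obstacle I anticipate is bookkeeping: carefully managing the block decomposition and making sure the scalar $e$ threads through the self-loop star on the correct (right) side, since Kleene algebra is not commutative — the left-handed fixpoint laws only give us $x^* \cdot y$ as a least fixpoint of $z \mapsto y + x\cdot z$, so every application must keep the ``constant part'' (including the $\cdot e$) on the right. Everything else is a routine, if lengthy, chase through the KA axioms.
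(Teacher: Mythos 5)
Your proposal is correct and follows essentially the same route as the paper: induction on $Q$, eliminating one state $q_0$ via the left-handed fixpoint law to form the reduced system $(\widetilde{M}, \widetilde{b})$ (the paper's $(N, c)$), observing that the scalar $e$ only ever appears on the right so the reduced system is independent of $e$, and then verifying both the solution property and minimality exactly as you describe. The remaining work is the routine case analysis you already anticipate, so there is nothing to add.
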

\begin{proof}
We proceed by induction on $Q$.
In the base, where $Q = \emptyset$, the claim holds vacuously.
For the inductive step, fix any $q \in Q$, and let $P = Q \setminus \{ q \}$.
We now choose $N: P \times P \to \mathbb{E}$ and $c: P \to \mathbb{E}$ as follows:
\begin{mathpar}
    N(p, p') = M(p, p') + M(p, q) \cdot {M(q, q)}^* \cdot M(q, p')
    \and
    c(p) = b(p) + M(p, q) \cdot {M(q, q)}^* \cdot b(q)
\end{mathpar}
By induction we can now compute a $t: P \to \mathbb{E}$ such that for all $e \in \mathbb{E}$, we have that $t \cdot e$ is the least $e$-solution to $(N, c)$.
We base our choice for $s$ on $t$, as follows:
\[
    s(q') =
        \begin{cases}
        {M(q, q)}^* \cdot (b(q) + \sum_{p \in P} M(q, p) \cdot t(p)) & q' = q \\
        t(q') & q' \neq q
        \end{cases}
\]
We now need to verify that $s$ satisfies the properties claimed.
To see that $s \cdot e$ is an $e$-solution to $(M, b)$, it suffices to show that $s$ is a solution.
For the first constraint, note that if $q' = q$, then $b(q') \leqq {M(q,q)}^* \cdot b(q) \leqq s(q) = s(q')$ by definition of $s$, and if $q' \neq q$ then $b(q') \leqq c(q') \leqq t(q') = s(q')$, by the fact that $t$ is a solution to $(N, c)$.
For the second constraint, let $q', q'' \in Q$; there are four cases.
\begin{itemize}
    \item
    If $q' = q = q''$ then by definition of $s$ and the fact that $M(q,q) \cdot {M(q,q)}^* \leqq {M(q,q)}^*$, we have:
    \begin{align*}
        M(q, q) \cdot s(q)
            &= M(q, q) \cdot {M(q, q)}^* \cdot (b(q) + \sum_{p \in P} M(q, p) \cdot t(p)) \\
            &\leqq {M(q, q)}^* \cdot (b(q) + \sum_{p \in P} M(q, p) \cdot t(q))
             = s(q)
    \end{align*}
    \item
    If $q' \neq q = q''$ then by definition of $s$, $N$ and $c$ as well as the fact that $t$ solves $(N, c)$, we find:
    \begin{align*}
        M(q', q) \cdot s(q)
            &= M(q', q) \cdot {M(q, q)}^* \cdot (b(q) + \sum_{p \in P} M(q, p) \cdot t(p)) \\
            &\equiv M(q', q) \cdot {M(q, q)}^* \cdot b(q) + \sum_{p \in P} M(q', q) \cdot {M(q, q)}^* \cdot M(q, p) \cdot t(q) \\
            &\leqq c(q') + \sum_{p \in P} N(q', p) \cdot t(p)
             \leqq t(q') = s(q')
    \end{align*}
    \item
    If $q' = q \neq q''$ then by definition of $s$ we can derive that:
    \begin{align*}
        M(q, q'') \cdot s(q'')
            &= M(q, q'') \cdot t(q'')
             \leqq {M(q,q)}^* \cdot M(q, q'') \cdot t(q'')
             \leqq s(q)
    \end{align*}
    \item
    If $q' \neq q \neq q''$ then by definition of $s$ and $N$, as well as the fact that $t$ solves $(N, c)$, we have:
    \begin{align*}
        M(q', q'') \cdot s(q'')
            &\leqq N(q', q'') \cdot t(q'')
             \leqq t(q') = s(q')
    \end{align*}
\end{itemize}
To show that $s \cdot e$ is the least $e$-solution to $(M, b)$, let $s': Q \to \mathbb{E}$ be any $e$-solution to $(M, b)$; we wish to show that $s(q) \cdot e \leqq s'(q)$ for all $q \in Q$.
To this end, choose $t': P \to \mathbb{E}$ by setting $t'(p) = s'(p)$ for all $p \in P$.
We claim that $t'$ is an $e$-solution to $(N, c)$.
For the first constraint, we first observe that $M(q, q) \cdot s'(q) + b(q) \cdot e \leqq s'(q)$ because $s'$ is an $e$-solution to $(M, b)$, and so ${M(q, q)}^* \cdot b(q) \cdot e \leqq s'(q)$ by the fixpoint axiom.
With this fact in hand, we can then verify the first constraint as follows:
\begin{align*}
    c(p) \cdot e
        &\equiv b(p) \cdot e + M(p, q) \cdot {M(q, q)}^* \cdot b(q) \cdot e \\
        &\leqq b(p) \cdot e + M(p, q) \cdot s'(q) \leqq s'(p) = t'(p)
\end{align*}
For the second constraint, we note that $M(q, q) \cdot s'(q) + s'(q) \leqq s'(q)$, and so $M(q, q)^* \cdot s'(q) \leqq s'(q)$ by the fixpoint rule.
We apply this property and the fact that $s'$ is an $e$-solution to $(M, b)$ to derive:
\begin{align*}
    N(p, p') \cdot t'(p')
        &\equiv M(p, p') \cdot s'(p) + M(p, q) \cdot {M(q,q)}^* \cdot M(q, p') \cdot s'(p') \\
        &\leqq M(p, p') \cdot s'(p) + M(p, q) \cdot {M(q,q)}^* \cdot s'(q) \\
        &\leqq M(p, p') \cdot s'(p) + M(p, q) \cdot s'(q)
         \leqq s'(p)
         = t'(p)
\end{align*}
This makes $t'$ an $e$-solution to $(N, c)$, and thus by induction we have for $p \in P$ that $s(p) \cdot e = t(p) \cdot e \leqq t'(p) = s'(p)$.
It remains to show $s(q) \cdot e \leqq s'(q)$.
Since $t(p) \cdot e \leqq t'(p)$ for all $p \in P$, we find:
\[
    M(q, q) \cdot s'(q) + (b(q) + \sum_{p \in P} M(q, p) \cdot t(p)) \cdot e
        \leqq s'(q)
\]
and so $s(q) \cdot e = {M(q, q)}^* \cdot (b(q) + \sum_{p \in P} M(q, p) \cdot t(p)) \cdot e \leqq s'(q)$ by the fixpoint rule.
\end{proof}

Least $e$-solutions to linear systems then readily allow us to compute least $e$-solutions to automata, by casting the automaton in the form of a linear system and arguing that they have the same solutions.

\begin{theorem}[Solutions via state elimination]%
\label{theorem:computing-solutions}
We can compute a vector $\sol{A}: Q \to \mathbb{E}$, such that for all $e \in \mathbb{E}$, $\sol{A} \cdot e$ is the least $e$-solution to $A$.
\end{theorem}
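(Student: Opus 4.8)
The plan is to reduce computing the least $e$-solution of the automaton $A = (Q, \delta, I, F)$ to computing the least $e$-solution of an associated linear system, and then invoke \Cref{theorem:computing-matrix-solutions}. First I would define the $Q$-matrix $M$ and $Q$-vector $b$ that encode $A$: set $M(q, q') = \sum \{ \ltr{a} \in \Sigma : q' \in \delta(q, \ltr{a}) \}$ (with the empty sum being $0$), and set $b(q) = 1$ if $q \in F$ and $b(q) = 0$ otherwise. The point of this encoding is that the constraints defining an $e$-solution $s$ to $(M, b)$, namely $b(q) \cdot e \leqq s(q)$ and $M(q, q') \cdot s(q') \leqq s(q)$, unfold exactly into the constraints defining an $e$-solution to $A$: $b(q) \cdot e \leqq s(q)$ says $e \leqq s(q)$ when $q \in F$ (and is vacuous otherwise), while $M(q, q') \cdot s(q') \leqq s(q)$ says $\ltr{a} \cdot s(q') \leqq s(q)$ for every $\ltr{a}$ with $q' \in \delta(q, \ltr{a})$, using that a sum $x_1 + \dots + x_k \leqq y$ iff each $x_i \leqq y$.

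The core of the argument is therefore a lemma stating that a function $s : Q \to \mathbb{E}$ is an $e$-solution to $A$ if and only if it is an $e$-solution to $(M, b)$. This is a routine unfolding in both directions using the monotonicity and idempotence facts recorded earlier (specifically that $x + y \leqq z$ iff $x \leqq z$ and $y \leqq z$, and distributivity $M(q,q') \cdot s(q') = \bigl(\sum_{\ltr{a}} \ltr{a}\bigr) \cdot s(q') = \sum_{\ltr{a}} \ltr{a} \cdot s(q')$). Once this equivalence of the two solution concepts is established, the two notions of \emph{least} $e$-solution coincide as well, since they quantify over the same set of solutions with the same order. Then \Cref{theorem:computing-matrix-solutions} supplies a vector $s : Q \to \mathbb{E}$ — which we name $\sol{A}$ — such that $\sol{A} \cdot e$ is the least $e$-solution to $(M, b)$, hence the least $e$-solution to $A$, for every $e \in \mathbb{E}$. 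Since $M$ and $b$ are computable from $A$ and $\sol{A}$ is computed by the (constructive) procedure of \Cref{theorem:computing-matrix-solutions}, the whole construction is effective.

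I do not expect a genuine obstacle here; the one point requiring a little care is the convention for empty sums (when $q' \notin \delta(q, \ltr{a})$ for all $\ltr{a}$, the entry $M(q,q')$ should be $0$, and $0 \cdot s(q') = 0 \leqq s(q)$ holds trivially, matching the fact that $A$ imposes no constraint between $q$ and $q'$), and similarly the convention $b(q) = 0$ for non-final $q$ (so $0 \cdot e = 0 \leqq s(q)$ is vacuous, matching that $A$ imposes no lower bound at non-final states). Beyond bookkeeping of this kind, the proof is a direct translation between the two formalisms followed by an appeal to the already-proven theorem on linear systems.
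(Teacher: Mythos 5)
Your proposal is correct and matches the paper's proof essentially verbatim: the same encoding of $A$ as a linear system $(M, b)$ with $b$ the indicator of $F$ and $M(q,q')$ the sum of letters labelling transitions from $q$ to $q'$, the same equivalence of the two notions of $e$-solution argued via distributivity over the sum, and the same final appeal to \Cref{theorem:computing-matrix-solutions}. No issues.
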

\begin{proof}
We choose a $Q$-indexed vector $b$ and a $Q$-indexed matrix $M$ as follows:
\begin{mathpar}
    b(q) =
        \begin{cases}
        1 & q \in F \\
        0 & q \not\in F
        \end{cases}
    \and
    M(q, q') =
        \sum_{q' \in \delta(q, \mathtt{a})} \mathtt{a}
\end{mathpar}
We now claim that $s: Q \to \mathbb{E}$ is an $e$-solution to $(M, b)$ iff it is an $e$-solution to $A$.

For the forward claim, let $s$ be an $e$-solution to $(M, b)$.
If $q \in F$, then $e \equiv b(q) \cdot e \leqq s(q)$; also, if $q' \in \delta(q, \ltr{a})$, then $\ltr{a} \cdot s(q) \leqq M(q, q') \cdot s(q') \leqq s(q)$ --- hence, $s$ is an $e$-solution to $A$.

Conversely, let $s$ be an $e$-solution to $A$.
If $q \in F$, then $b(q) \cdot e \equiv e \leqq s(q)$; otherwise, if $q \not\in F$, then $b(q) \cdot e \equiv 0 \leqq s(q)$.
Moreover, if $q, q' \in Q$, then $\ltr{a} \cdot s(q') \leqq s(q)$ when $q' \in \delta(q, \ltr{a})$, so we find:
\[
    M(q, q') \cdot s(q')
        \equiv \Bigl( \sum_{q' \in \delta(q, \ltr{a})} \ltr{a} \Bigr) \cdot s(q')
        \equiv \sum_{q' \in \delta(q, \ltr{a})} \ltr{a} \cdot s(q')
        \leqq s(q)
\]

By \Cref{theorem:computing-matrix-solutions}, we know that we can compute a vector $s: Q \to \mathbb{E}$ such that for all $e \in \mathbb{E}$, we have that $s \cdot e$ is the least $e$-solution to $(M, b)$; by the above, we can choose $\sol{A} = s$ to satisfy the claim: $\sol{A}$ is a solution to $(M, b)$ which makes it a solution to $A$, and if $s'$ is an $e$-solution to $A$ then it is also an $e$-solution to $(M, b)$, which tells us that $\sol{A}(q) \cdot e \leqq s'(q)$ for all $q \in Q$.
\end{proof}

\Cref{theorem:computing-solutions} forms the technical nexus of our arguments, and we apply it repeatedly in the next three sections.
We write $\soli{A}$ for the sum of least solutions to initial states in $A$, i.e., $\sum_{q \in I} \sol{A}(q)$.

\section{Transformation automata}%
\label{section:transformation-automata}

We now turn our attention to \emph{transformation automata}~\cite{mcnaughton-papert-1968}.
The reachable states of a transformation automaton $A'$ obtained from $A$ are relations on $Q$, representing transformations that can be effected by reading a word in $A$.
By varying the accepting states in the transformation automaton for $A$, we can obtain regular expressions for the sets of words that transform the states of $A$ in a particular way.

\begin{definition}
We define $\delta^\tau: \mathcal{R}(Q) \times \Sigma \to \mathcal{P}(\mathcal{R}(Q))$ by setting $\delta^\tau(R, \ltr{a}) = \{ R \circ \delta_{\ltr{a}} \}$.
For each $R \in \mathcal{R}(Q)$, write $A[R]$ for the \emph{$R$-transformation automaton} $(\mathcal{R}(Q), \delta^\tau, \{ \id \}, \{ R \})$.
\end{definition}
Note that the above still fits our definition of an automaton, since if $Q$ is finite then so is $\mathcal{R}(Q)$.

\begin{example}
As an example of a transformation automaton, consider \Cref{figure:transformation-automaton-alt}, where we depict the reachable part of the transformation automaton obtained from $A_\mathsf{alt}$ (\Cref{figure:automaton-alt}), but without any accepting state selected.
Note how each state in the transformation automaton is a relation on states of the original automaton, represented by $\delta_w$ for some word $w$, and that some words bring about the same transformation on states as others.
For instance, reading the word $\ltr{a}\ltr{a}$ from any state in $A_\mathsf{alt}$ will not lead to another state, which also happens when we read the word $\ltr{b}\ltr{b}$ from any state.
\end{example}

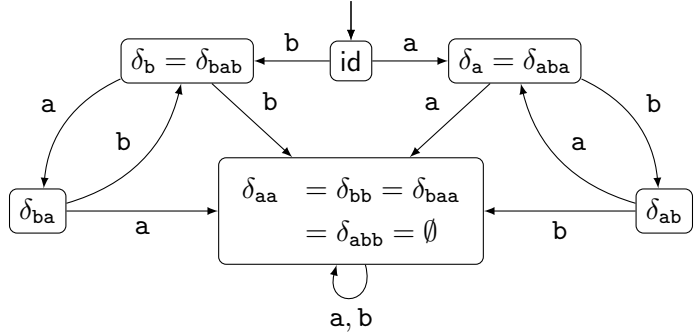
\begin{figure}
    \centering
    \begin{tikzpicture}[every initial by arrow/.style={-latex}]
        \node[draw,rounded corners=1mm,initial,initial above,initial text={}] (id) {$\id$};
        \node[draw,rounded corners=1mm,below=of id] (empty) {$\begin{array}{rl}\delta_\ltr{aa} &= \delta_\ltr{bb} = \delta_\ltr{baa} \\ &= \delta_\ltr{abb} = \emptyset\end{array}$};
        \node[draw,rounded corners=1mm,right=of id] (da) {$\delta_\ltr{a} = \delta_\ltr{aba}$};
        \node[draw,rounded corners=1mm,right=2cm of empty] (dab) {$\delta_\ltr{ab}$};
        \node[draw,rounded corners=1mm,left=of id] (db) {$\delta_\ltr{b} = \delta_\ltr{bab}$};
        \node[draw,rounded corners=1mm,left=2cm of empty] (dba) {$\delta_\ltr{ba}$};
        \draw (id) edge[-latex] node[above] {$\ltr{a}$} (da);
        \draw (id) edge[-latex] node[above] {$\ltr{b}$} (db);
        \draw (db) edge[-latex,bend right] node[above left] {$\ltr{a}$} (dba);
        \draw (dba) edge[-latex,bend right] node[above left] {$\ltr{b}$} (db);
        \draw (da) edge[-latex,bend left] node[above right] {$\ltr{b}$} (dab);
        \draw (dab) edge[-latex,bend left] node[above right] {$\ltr{a}$} (da);
        \draw (dba) edge[-latex] node[below] {$\ltr{a}$} (empty);
        \draw (dab) edge[-latex] node[below] {$\ltr{b}$} (empty);
        \draw (db) edge[-latex] node[above right] {$\ltr{b}$} (empty);
        \draw (da) edge[-latex] node[above left] {$\ltr{a}$} (empty);
        \draw (empty) edge[-latex,loop below,looseness=4] node[below] {$\ltr{a}, \ltr{b}$} (empty);
    \end{tikzpicture}
    \caption{%
        Template for $A_\mathsf{alt}[-]$, the transformation automaton of $A_\mathsf{alt}$ (\Cref{figure:automaton-alt}).
        For the sake of simplicity, no accepting state (relation) has been selected, and only the reachable part of the automaton has been drawn.
    }%
    \label{figure:transformation-automaton-alt}
\end{figure}

Readers familiar with formal language theory will recognize transformation automata as the link between regular and recognizable languages~\cite{mcnaughton-papert-1968}.
The novelty of this section therefore lies in the connection between the least solution of an automaton and its transformation automata, as recorded in the lemmas that follow.
We start by analyzing the solutions to transformation automata.
A useful first observation is that, for each $\ltr{a} \in \Sigma$, words read from $\id$ to $\delta_{\ltr{a}}$ in the transformation automaton include $\ltr{a}$.

\begin{lemma}%
\label{lemma:solve-transformation-aut-letter}
For all $\ltr{a} \in \Sigma$, it holds that $\ltr{a} \leqq \soli{A[\delta_{\ltr{a}}]}$.
\end{lemma}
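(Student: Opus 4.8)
The plan is to unfold the definitions. By convention $\soli{A[\delta_{\ltr{a}}]} = \sum_{q \in I'} \sol{A[\delta_{\ltr{a}}]}(q)$, where $I'$ is the set of initial states of $A[\delta_{\ltr{a}}]$; since the only initial state of any transformation automaton is $\id$, this sum collapses to $\sol{A[\delta_{\ltr{a}}]}(\id)$. It therefore suffices to prove $\ltr{a} \leqq \sol{A[\delta_{\ltr{a}}]}(\id)$.

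First I would invoke \Cref{theorem:computing-solutions} with $e = 1$ to conclude that $\sol{A[\delta_{\ltr{a}}]}$ is a solution to $A[\delta_{\ltr{a}}]$, and then extract the two relevant instances of the solution constraints. Since $\delta_{\ltr{a}}$ is, by definition of $A[\delta_{\ltr{a}}]$, the unique accepting state, the final-state constraint yields $1 \leqq \sol{A[\delta_{\ltr{a}}]}(\delta_{\ltr{a}})$. Since $\delta^\tau(\id, \ltr{a}) = \{\id \circ \delta_{\ltr{a}}\} = \{\delta_{\ltr{a}}\}$ — using that $\id$ is a unit for $\circ$ — the transition constraint at the state $\id$ with the letter $\ltr{a}$ yields $\ltr{a} \cdot \sol{A[\delta_{\ltr{a}}]}(\delta_{\ltr{a}}) \leqq \sol{A[\delta_{\ltr{a}}]}(\id)$.

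Finally I would chain these together, using monotonicity of $\cdot$ with respect to $\leqq$ and the identity $\ltr{a} \cdot 1 \equiv \ltr{a}$:
\[
    \ltr{a} \equiv \ltr{a} \cdot 1 \leqq \ltr{a} \cdot \sol{A[\delta_{\ltr{a}}]}(\delta_{\ltr{a}}) \leqq \sol{A[\delta_{\ltr{a}}]}(\id) = \soli{A[\delta_{\ltr{a}}]}.
\]
There is no substantial obstacle here: the result is an immediate consequence of the fact that $\sol{A[\delta_{\ltr{a}}]}$ solves $A[\delta_{\ltr{a}}]$, together with a careful identification of which relations play the role of initial and accepting state in the transformation automaton. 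The only point needing (minor) care is the computation $\delta^\tau(\id, \ltr{a}) = \{\delta_{\ltr{a}}\}$ — that the unique $\ltr{a}$-successor of $\id$ in the transformation automaton is precisely $\delta_{\ltr{a}}$ — which is exactly what links reading $\ltr{a}$ in $A[\delta_{\ltr{a}}]$ to the distinguished accepting state.
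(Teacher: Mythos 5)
Your proof is correct and follows essentially the same route as the paper's: both collapse $\soli{A[\delta_{\ltr{a}}]}$ to $\sol{A[\delta_{\ltr{a}}]}(\id)$, use that $\sol{A[\delta_{\ltr{a}}]}$ solves $A[\delta_{\ltr{a}}]$ (via \Cref{theorem:computing-solutions}) to get $1 \leqq \sol{A[\delta_{\ltr{a}}]}(\delta_{\ltr{a}})$ and $\ltr{a} \cdot \sol{A[\delta_{\ltr{a}}]}(\delta_{\ltr{a}}) \leqq \sol{A[\delta_{\ltr{a}}]}(\id)$, and chain these with $\ltr{a} \equiv \ltr{a} \cdot 1$. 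Your version just spells out the intermediate steps slightly more explicitly.
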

\begin{proof}
Using the fact that $\delta_{\ltr{a}} \in \delta^\tau(\id, \ltr{a})$ and $\sol{A[\delta_{\ltr{a}}]}$ solves $A[\delta_{\ltr{a}}]$ (by \Cref{theorem:computing-solutions}), we derive as follows:
\[
    \ltr{a}
        \equiv \ltr{a} \cdot 1
        \leqq \ltr{a} \cdot \sol{A[\delta_{\ltr{a}}]}(\delta_{\ltr{a}})
        \leqq \sol{A[\delta_{\ltr{a}}]}(\id)
        \equiv \soli{A[\delta_{\ltr{a}}]}
\]

\vspace{-9mm}
\end{proof}

Transformation automata also enjoy the following property: if we can read $w$ while moving from $R_1$ to $R_2$, then we can also read $w$ while moving from $R_3 \circ R_1$ to $R_3 \circ R_2$.
For example, in the transformation automaton from \Cref{figure:transformation-automaton-alt}, we can read $\ltr{a}\ltr{b}\ltr{a}$ while moving from $\delta_\ltr{ab}$ to $\delta_\ltr{a}$, which implies that we can also read $\ltr{a}\ltr{b}\ltr{a}$ while moving from $\delta_\ltr{b} \circ \delta_\ltr{ab} = \delta_\ltr{bab} = \delta_\ltr{b}$ to $\delta_\ltr{b} \circ \delta_\ltr{a} = \delta_\ltr{ba}$, as is indeed the case.
Because solutions to automata encode the expressions that describe the paths to an accepting state, and because the way we set up transformation automata lets us choose their accepting states, this property can also be expressed in terms of solutions to transformation automata.

\begin{lemma}%
\label{lemma:solve-transformation-aut-shift}
For all $R_1, R_2, R_3 \subseteq Q \times Q$, it holds that $\sol{A[R_2]}(R_1) \leqq \sol{A[R_3 \circ R_2]}(R_3 \circ R_1)$.
\end{lemma}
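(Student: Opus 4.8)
The plan is to exploit the leastness of $\sol{A[R_2]}$ guaranteed by \Cref{theorem:computing-solutions}: it suffices to exhibit \emph{some} solution $s$ to $A[R_2]$ with $s(R_1) \leqq \sol{A[R_3 \circ R_2]}(R_3 \circ R_1)$, and then conclude $\sol{A[R_2]}(R_1) \leqq s(R_1)$. Concretely, I would define $s \colon \mathcal{R}(Q) \to \mathbb{E}$ by $s(R) = \sol{A[R_3 \circ R_2]}(R_3 \circ R)$; that is, $s$ reindexes the states of $A[R_2]$ by precomposition with $R_3$ and reads off the least solution of $A[R_3 \circ R_2]$ there. If $s$ is a solution to $A[R_2]$, then leastness gives $\sol{A[R_2]}(R) \leqq s(R) = \sol{A[R_3 \circ R_2]}(R_3 \circ R)$ for \emph{every} $R \in \mathcal{R}(Q)$, and instantiating $R$ to $R_1$ yields exactly the claimed inequality.

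It remains to verify that $s$ is a solution to $A[R_2] = (\mathcal{R}(Q), \delta^\tau, \{\id\}, \{R_2\})$; recall that the notion of solution only involves the final states and transitions, not the initial states, so $R_3 \circ \id$ playing no special role is harmless. For the accepting-state condition, the unique accepting state of $A[R_2]$ is $R_2$, and $s(R_2) = \sol{A[R_3 \circ R_2]}(R_3 \circ R_2)$; since $R_3 \circ R_2$ is the unique accepting state of $A[R_3 \circ R_2]$ and $\sol{A[R_3 \circ R_2]}$ solves that automaton, we get $1 \leqq \sol{A[R_3 \circ R_2]}(R_3 \circ R_2) = s(R_2)$. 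For the transition condition, the only $\ltr{a}$-successor of a state $R$ in $A[R_2]$ is $R \circ \delta_{\ltr{a}}$, so I must check $\ltr{a} \cdot s(R \circ \delta_{\ltr{a}}) \leqq s(R)$. Unfolding $s$, this is $\ltr{a} \cdot \sol{A[R_3 \circ R_2]}(R_3 \circ (R \circ \delta_{\ltr{a}})) \leqq \sol{A[R_3 \circ R_2]}(R_3 \circ R)$. Here associativity of $\circ$ rewrites $R_3 \circ (R \circ \delta_{\ltr{a}})$ as $(R_3 \circ R) \circ \delta_{\ltr{a}}$, which is precisely the $\ltr{a}$-successor of the state $R_3 \circ R$ in $A[R_3 \circ R_2]$; hence the desired inequality is just the transition condition of the solution $\sol{A[R_3 \circ R_2]}$ instantiated at $R_3 \circ R$.

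There is essentially no hard step here: once the candidate $s$ is identified, the proof is a single appeal to leastness, and the verification reduces to associativity of relational composition plus the fact that $\sol{A[R_3 \circ R_2]}$ is a solution. The only thing to get right is spotting that $R \mapsto R_3 \circ R$ behaves like a simulation from $A[R_2]$ into $A[R_3 \circ R_2]$ that respects final states and transitions, and that phrasing this via solutions (rather than via an automaton morphism and \Cref{theorem:computing-solutions}) makes the argument immediate.
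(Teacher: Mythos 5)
Your proposal is correct and matches the paper's proof essentially verbatim: both define $s(R) = \sol{A[R_3 \circ R_2]}(R_3 \circ R)$, verify it is a solution to $A[R_2]$ using the accepting state $R_2$ and associativity of relational composition for the transition condition, and conclude by the leastness guaranteed by \Cref{theorem:computing-solutions}. Nothing to add.
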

\begin{proof}
We fix $R_2$ and $R_3$, and choose $s: \mathcal{R}(Q) \to \mathbb{E}$ by setting $s(R) = \sol{A[R_3 \circ R_2]}(R_3 \circ R)$.
By \Cref{theorem:computing-solutions}, it suffices to prove that $s$ is a solution to $A[R_2]$.
First, if $R$ is accepting in $A[R_2]$, then $R = R_2$, in which case $1 \leqq \sol{A[R_3 \circ R_2]}(R_3 \circ R_2) = s(R_2)$ because $\sol{A[R_3 \circ R_2]}$ is a solution to $A[R_3 \circ R_2]$.

Next, if $R' \in \delta^\tau(R, \ltr{a})$, then $R' = R \circ \delta_{\ltr{a}}$, which means that $R_3 \circ R' = R_3 \circ R \circ \delta_{\ltr{a}}$, and therefore $R_3 \circ R' \in \delta^\tau(R_3 \circ R, \ltr{a})$.
Because $\sol{A[R_3 \circ R_2]}$ is a solution to $A[R_3 \circ R_2]$, we can then derive that
\[
    \ltr{a} \cdot s(R')
        \equiv \ltr{a} \cdot \sol{A[R_3 \circ R_2]}(R_3 \circ R')
        \leqq \sol{A[R_3 \circ R_2]}(R_3 \circ R) = s(R)
\]

\vspace{-8mm}
\end{proof}

We can think of $\soli{A[R]}$ as an expression representing words $w$ such that $\delta_w = R$, i.e., the set of words that transform the state space of $A$ in the same way that $R$ does.
It then stands to reason that a word representing the transformation $R_1$ followed by a word representing the transformation $R_2$ can be concatenated to form a word representing the transformation $R_1 \circ R_2$.
For instance, if we consult \Cref{figure:transformation-automaton-alt}, then we find that $\ltr{a}$ is accepted by $A_\mathsf{alt}[\delta_\ltr{a}]$ and $\ltr{b}$ is accepted by $A_\mathsf{alt}[\delta_\ltr{b}]$, which implies that $\ltr{ab}$ is accepted by $A_\mathsf{alt}[\delta_\ltr{ab}]$.
Thus, the language of $A[R_1]$ concatenated to that of $A[R_2]$ should be contained in the language of $A[R_1 \circ R_2]$.
We prove this using KA in the following lemma.

\begin{lemma}%
\label{lemma:solve-transformation-aut-compose}
For all $R_1, R_2 \subseteq Q \times Q$ it holds that
\(
    \soli{A[R_1]} \cdot \soli{A[R_2]} \leqq \soli{A[R_1 \circ R_2]}
\)
\end{lemma}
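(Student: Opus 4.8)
The plan is to use \Cref{theorem:computing-solutions} in the now-familiar way: exhibit a concrete solution $s$ to the automaton $A[R_1 \circ R_2]$ whose value at $\id$ is (a lower bound for) $\soli{A[R_1]} \cdot \soli{A[R_2]}$, and then invoke leastness of $\sol{A[R_1 \circ R_2]}$. Concretely, I would define $s \colon \mathcal{R}(Q) \to \mathbb{E}$ by $s(R) = \soli{A[R_1]} \cdot \sol{A[R_2]}(\text{something involving } R)$, where the ``something'' has to encode the idea that we are first traversing the $R_1$-transformation automaton and then, having arrived, continuing inside the $R_2$-transformation automaton. The natural guess, given the shape of \Cref{lemma:solve-transformation-aut-shift}, is to track the transformation relative to $R_1$: set $s(R) = \soli{A[R_1]} \cdot \sol{A[R_1 \circ R_2]}(R_1 \circ R)$ — no wait, that is circular; rather the clean choice is $s(R) = \soli{A[R_1]} \cdot \sol{A[R_2]}(R)$ composed with a shift. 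Let me restate: define $s(R) = \big(\sum_{R' : R' = R} \sol{A[R_1]}(R')\big) \cdots$; the honest approach is a two-phase solution, below.

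First I would reduce the statement. By definition $\soli{A[R_1 \circ R_2]} = \sol{A[R_1 \circ R_2]}(\id)$, and similarly $\soli{A[R_1]} = \sol{A[R_1]}(\id)$. So it suffices to build a solution $s$ to $A[R_1 \circ R_2]$ with $\sol{A[R_1]}(\id) \cdot \sol{A[R_2]}(\id) \leqq s(\id)$, since then leastness of $\sol{A[R_1 \circ R_2]}$ (\Cref{theorem:computing-solutions}) gives $s(\id) \leqq \sol{A[R_1 \circ R_2]}(\id) = \soli{A[R_1 \circ R_2]}$, closing the chain. The candidate is $s(R) = \sol{A[R_1]}(\id) \cdot \sol{A[R_2]}(R)$ for every $R \in \mathcal{R}(Q)$; wait — this does not verify the accepting constraint. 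Instead I use the ``glue at $R_1$'' trick: I would actually work with $A[R_1]$ having its single accepting state, and observe that $s$ should be built so that a path in $A[R_1 \circ R_2]$ factors as a path from $\id$ to $R_1$ in $A[-]$ followed by a path from $R_1$ to $R_1 \circ R_2$. Using \Cref{lemma:solve-transformation-aut-shift} with $R_3 := R_1$, we have $\sol{A[R_2]}(\id) \leqq \sol{A[R_1 \circ R_2]}(R_1)$, so it would be enough to show $\sol{A[R_1]}(\id) \cdot \sol{A[R_1 \circ R_2]}(R_1) \leqq \sol{A[R_1 \circ R_2]}(\id)$. That last inequality is exactly of the form ``least solution to $A[R_1 \circ R_2]$ absorbs a path to $R_1$,'' which I would prove by defining $s(R) = \sol{A[R_1]}(R) \cdot \sol{A[R_1 \circ R_2]}(R_1)$ and checking it solves $A[R_1]$: the accepting constraint at $R = R_1$ needs $1 \cdot \sol{A[R_1 \circ R_2]}(R_1) \leqq s(R_1) = \sol{A[R_1]}(R_1) \cdot \sol{A[R_1 \circ R_2]}(R_1)$, which holds since $1 \leqq \sol{A[R_1]}(R_1)$ (as $\sol{A[R_1]}$ solves $A[R_1]$); the transition constraint follows because $\sol{A[R_1]}$ itself satisfies $\ltr{a} \cdot \sol{A[R_1]}(R \circ \delta_\ltr{a}) \leqq \sol{A[R_1]}(R)$, and we just right-multiply both sides by the fixed expression $\sol{A[R_1 \circ R_2]}(R_1)$, using monotonicity. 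Leastness then yields $s(\id) = \sol{A[R_1]}(\id) \cdot \sol{A[R_1 \circ R_2]}(R_1) \leqq \sol{A[R_1 \circ R_2]}(\id)$.

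Assembling the pieces: $\soli{A[R_1]} \cdot \soli{A[R_2]} = \sol{A[R_1]}(\id) \cdot \sol{A[R_2]}(\id) \leqq \sol{A[R_1]}(\id) \cdot \sol{A[R_1 \circ R_2]}(R_1)$ by \Cref{lemma:solve-transformation-aut-shift} and monotonicity, and then $\leqq \sol{A[R_1 \circ R_2]}(\id) = \soli{A[R_1 \circ R_2]}$ by the absorption inequality just established. The main obstacle I anticipate is getting the accepting-state bookkeeping right: $A[R_1]$ and $A[R_1 \circ R_2]$ have the \emph{same} transition structure $\delta^\tau$ but different accepting states, so one must be careful that the solution $s$ defined for the ``absorption'' step genuinely uses that $\sol{A[R_1]}$ is a solution (not merely a solution to some other transformation automaton), and that the right-multiplication by $\sol{A[R_1 \circ R_2]}(R_1)$ is by a constant, so monotonicity of $\cdot$ applies cleanly. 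Everything else is routine reshuffling with $\leqq$, monotonicity, and \Cref{theorem:computing-solutions}.
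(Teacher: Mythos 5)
Your proposal follows essentially the same route as the paper: both hinge on applying \Cref{lemma:solve-transformation-aut-shift} with $R_3 = R_1$ to obtain $\sol{A[R_2]}(\id) \leqq \sol{A[R_1 \circ R_2]}(R_1)$, on the observation that all transformation automata share the transition structure $\delta^\tau$ (so the transition constraints come for free), and on the leastness of $e$-solutions from \Cref{theorem:computing-solutions}. The paper merely packages this more tightly, showing in a single step that $\sol{A[R_1 \circ R_2]}$ is an $\soli{A[R_2]}$-solution to $A[R_1]$ and then invoking leastness once.

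There is one misdirected step in your absorption argument. To conclude $\sol{A[R_1]}(\id) \cdot \sol{A[R_1 \circ R_2]}(R_1) \leqq \sol{A[R_1 \circ R_2]}(\id)$ from leastness, what you must verify is that the \emph{comparison target} $R \mapsto \sol{A[R_1 \circ R_2]}(R)$ is an $e$-solution to $A[R_1]$ for $e = \sol{A[R_1 \circ R_2]}(R_1)$: the accepting constraint at $R_1$ asks for $e \leqq \sol{A[R_1 \circ R_2]}(R_1)$, which is trivial, and the transition constraint is immediate because $\sol{A[R_1 \circ R_2]}$ solves $A[R_1 \circ R_2]$ and the transitions coincide. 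Instead you verify that your candidate $s(R) = \sol{A[R_1]}(R) \cdot \sol{A[R_1 \circ R_2]}(R_1)$ is itself an $e$-solution --- which is automatic from \Cref{theorem:computing-solutions} and does not by itself allow you to compare $s(\id)$ against $\sol{A[R_1 \circ R_2]}(\id)$; leastness of $s$ only bounds it by things you have certified to be $e$-solutions. The fix is a one-line check whose ingredients are already present in your text, so the overall argument goes through.
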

\begin{proof}
Because all transformation automata have the same initial state $\id$, it suffices to show that $\sol{A[R_1 \circ R_2]}$ is an $\soli{A[R_2]}$-solution to $A[R_1]$, by \Cref{theorem:computing-solutions}.
First, if $R$ is accepting in $A[R_1]$, then $R = R_1$.
In that case, we find that $\soli{A[R_2]} = \sol{A[R_2]}(\id) \leqq \sol{A[R_1 \circ R_2]}(R_1) = \sol{A[R_1 \circ R_2]}(R)$ by \Cref{lemma:solve-transformation-aut-shift}.

Next, if $R' \in \delta^\tau(R, \ltr{a})$, then we need to show that $\ltr{a} \cdot \sol{A[R_1 \circ R_2]}(R') \leqq \sol{A[R_1 \circ R_2]}(R)$, but this is immediately true because $\sol{A[R_1 \circ R_2]}$ solves $A[R_1 \circ R_2]$ (also by \Cref{theorem:computing-solutions}).
\end{proof}

The final technical property that we need expresses that if $R$ relates $q$ to an accepting state $q_f \in F$, and $w$ is accepted by $A[R]$, then $w$ is accepted by $q$ in $A$.
This too can be explained using the intuition behind transformation automata, because in this situation $R$ is the transformation in $A$ brought about by reading $w$, which apparently takes $q$ to an accepting state $q_f$.
For instance, consider the transformation automaton in \Cref{figure:transformation-automaton-alt}.
If we choose $\delta_\ltr{ba}$ as accepting state, then this automaton accepts $\ltr{b}\ltr{a}$.
Furthermore, $\delta_\ltr{ba}$ relates $q_1$ to $q_3$, which means that $q_1$ accepts $\ltr{b}\ltr{a}$.
We can express this on the level of KA as well.

\begin{lemma}%
\label{lemma:solve-transformation-aut-approximate-below}
If $q \in Q$, $q_f \in F$ and $R \subseteq Q \times Q$ such that $q \mathrel{R} q_f$, then $\soli{A[R]} \leqq \sol{A}(q)$.
\end{lemma}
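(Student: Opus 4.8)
The plan is to use that $\sol{A[R]}$ is the \emph{least} solution to $A[R]$ (Theorem~\ref{theorem:computing-solutions}), together with the observation that $A[R]$ has a single initial state $\id$, so that $\soli{A[R]} = \sol{A[R]}(\id)$. Hence it suffices to produce \emph{some} solution $s \colon \mathcal{R}(Q) \to \mathbb{E}$ to $A[R]$ satisfying $s(\id) \leqq \sol{A}(q)$; minimality then gives $\soli{A[R]} = \sol{A[R]}(\id) \leqq s(\id) \leqq \sol{A}(q)$.

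For the candidate I would take $s(R') = \sum_{q' \colon q \mathrel{R'} q'} \sol{A}(q')$, the sum of the least-solution expressions of all states that $R'$ reaches from the fixed state $q$. The intuition matching the statement is that a word $w$ accepted by $A[R']$ (with accepting relation $R$) satisfies $R' \circ \delta_w = R$, so reading $w$ in $A$ sends every $R'$-successor of $q$ to a state from which an accepting state is $\delta_w$-reachable --- in particular $q$ itself ends up at $q_f$ when $R' = \id$ --- and $s(R')$ over-approximates the expressions denoting such $w$.

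It then remains to check the two solution conditions. For the accepting state of $A[R]$, which is $R$ itself: since $q \mathrel{R} q_f$ with $q_f \in F$ and $\sol{A}$ solves $A$, we have $1 \leqq \sol{A}(q_f)$, and $\sol{A}(q_f)$ is one of the summands of $s(R)$, so $1 \leqq s(R)$. For a transition $R'' = R' \circ \delta_\ltr{a}$ of $A[R]$: distributing $\ltr{a}$ over the sum $s(R'')$, for each $q''$ with $q \mathrel{R''} q''$ we may pick $q'$ with $q \mathrel{R'} q' \mathrel{\delta_\ltr{a}} q''$, so that $\ltr{a} \cdot \sol{A}(q'') \leqq \sol{A}(q') \leqq s(R')$ because $\sol{A}$ solves $A$; summing these bounds over all such $q''$ (using idempotency of $+$, with the empty sum equal to $0 \leqq s(R')$) gives $\ltr{a} \cdot s(R'') \leqq s(R')$. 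Finally $s(\id) = \sol{A}(q)$ since $q \mathrel{\id} q'$ iff $q' = q$, which closes the argument.

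The only real choice in the proof is the definition of $s$; once $s(R') = \sum_{q \mathrel{R'} q'} \sol{A}(q')$ is fixed, both conditions follow mechanically from ``$\sol{A}$ solves $A$'' plus elementary Kleene-algebra facts (distributivity, idempotency, monotonicity), with no fixpoint subtleties. The mildest point of care is the bookkeeping with the relationally-indexed sums, which may be empty.
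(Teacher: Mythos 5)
Your proposal is correct and matches the paper's proof essentially verbatim: the same candidate solution $s(R') = \sum_{q \mathrel{R'} q'} \sol{A}(q')$, the same verification of the two solution conditions using that $\sol{A}$ solves $A$, and the same appeal to minimality via \Cref{theorem:computing-solutions} together with $s(\id) = \sol{A}(q)$. The only cosmetic difference is that you bound the transition case by choosing a witness $q'$ per successor $q''$, where the paper re-indexes the sum as a double sum; both are the same calculation.
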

\begin{proof}
We define $s: \mathcal{R}(Q) \to \mathbb{E}$ by choosing for $R' \subseteq Q \times Q$ that $s(R') = \sum_{q \mathrel{R'} q'} \sol{A}(q')$, and claim that $s$ is a solution to $A[R]$.
First, if $R'$ is accepting in $A[R]$, then $R = R'$.
Since $q \mathrel{R} q_f$, we have that $\sol{A}(q_f) \leqq s(R')$.
Because $q_f \in F$, we also know that $1 \leqq \sol{A}(q_f)$; this allows us to conclude $1 \leqq s(R')$.

If $R'' \in \delta^\tau(R', \ltr{a})$, then $R'' = R' \circ \delta_{\ltr{a}}$.
We now derive as follows
\begin{align*}
    \ltr{a} \cdot s(R'')
        &\equiv \sum_{q \mathrel{R''} q''} \ltr{a} \cdot \sol{A}(q'') \tag{def. $s$, distributivity}\\
        &\equiv \sum_{q \mathrel{R'} q'} \sum_{q'' \in \delta(q', \ltr{a})} \ltr{a} \cdot \sol{A}(q'') \tag{$R'' = R' \circ \delta_{\ltr{a}}$} \\ 
        &\leqq \sum_{q \mathrel{R'} q'} \sol{A}(q') = s(R') \tag{$\sol{A}$ solves $A$}
\end{align*}
Since $s$ solves $A[R]$, we can apply \Cref{theorem:computing-solutions} to find that $\soli{A[R]} = \sol{A[R]}(\id) \leqq s(\id) = \sol{A}(q)$.
\end{proof}

\section{Antimirov's construction}%
\label{section:antimirov-construction}

We now discuss automata $A_e$ that accept $\widehat{\ell}(e)$ for each $e \in \mathbb{E}$, and their solutions.
Many methods to obtain such an automaton exist~\cite{thompson-1968,brzozowski-1964} --- we refer to~\cite{watson-1993} for a good overview.
In this section, we recall Antimirov's construction~\cite{antimirov-1996} in our own notation, and argue two crucial properties.

The idea behind Antimirov's construction is to endow expressions with the structure of an automaton, as was done by Brzozowski~\cite{brzozowski-1964}.
The language of a state represented by $e \in \mathbb{E}$ is meant to be $\widehat{\ell}(e)$, i.e., the language of its expression.
From this perspective, the accepting states should be those representing expressions whose language contains the empty word.
These expressions are fairly easy to describe.

\begin{definition}
The set $\mathbb{F}$ is defined as the smallest subset of $\mathbb{E}$ satisfying:
\begin{mathpar}
    \inferrule{~}{%
        1 \in \mathbb{F}
    }
    \and
    \inferrule{%
        e_1 \in \mathbb{F} \\
        e_2 \in \mathbb{E}
    }{%
        e_1 + e_2 \in \mathbb{F}
    }
    \and
    \inferrule{%
        e_1 \in \mathbb{E} \\
        e_2 \in \mathbb{F}
    }{%
        e_1 + e_2 \in \mathbb{F}
    }
    \and
    \inferrule{%
        e_1, e_2 \in \mathbb{F}
    }{%
        e_1 \cdot e_2 \in \mathbb{F}
    }
    \and
    \inferrule{%
        e_1 \in \mathbb{E}
    }{%
        e_1^* \in \mathbb{F}
    }
\end{mathpar}
\end{definition}
A straightforward argument by induction on $e$ tells us that $\epsilon \in \widehat{\ell}(e)$ if and only if $e \in \mathbb{F}$.

For example, $1 \cdot {(\ltr{b} \cdot \ltr{a})}^* \in \mathbb{F}$ because $1 \in \mathbb{F}$ and ${(\ltr{b} \cdot \ltr{a})}^* \in \mathbb{F}$; indeed, $\epsilon \in \widehat{\ell}(1 \cdot {(\ltr{b} \cdot \ltr{a})}^*)$.
Meanwhile, $\ltr{a} \cdot {(\ltr{b} \cdot \ltr{a})}^* \not\in \mathbb{F}$ despite ${(\ltr{b} \cdot \ltr{a})}^* \in \mathbb{F}$, because $\ltr{a} \not\in \mathbb{F}$; this matches the fact that $\epsilon \not\in \ltr{a} \cdot {(\ltr{b} \cdot \ltr{a})}^* \not\in \mathbb{F}$.

\smallskip
Next, we recall Antimirov's transition function.
Intuitively, an expression $e$ has an $\ltr{a}$-transition to an expression $e'$ when $e'$ denotes remainders of words in $\widehat{\ell}(e)$ that start with $\ltr{a}$ --- i.e., if $w \in \widehat{\ell}(e')$, then $\ltr{a}w \in \widehat{\ell}(e)$.
Together, the expressions reachable by $\ltr{a}$-transitions from $e$ describe \emph{all} such words.

In the following, when $S \subseteq \mathbb{E}$ and $e \in \mathbb{E}$, we write $S \fatsemi e$ for $\{ e' \cdot e : e' \in S \}$.
We also write $e \star S$ as a shorthand for $S$ when $e \in \mathbb{F}$, and $\emptyset$ otherwise.

\begin{definition}
We define $\partial: \mathbb{E} \times \Sigma \to \mathcal{P}(\mathbb{E})$ recursively, as follows
\begin{align*}
    \partial(0, \ltr{a}) &= \emptyset
        & \partial(e_1 + e_2, \ltr{a}) &= \partial(e_1, \ltr{a}) \cup \partial(e_2, \ltr{b}) \\
    \partial(1, \ltr{a}) &= \emptyset
        & \partial(e_1 \cdot e_2, \ltr{a}) &= \partial(e_1, \ltr{a}) \fatsemi e_2 \cup e_1 \star \partial(e_2, \ltr{a}) \\
    \partial(\ltr{b}, \ltr{a}) &= \{ 1 : \ltr{a} = \ltr{b} \}
        & \partial(e_1^*, \ltr{a}) &= \partial(e_1, \ltr{a}) \fatsemi e_1^*
\end{align*}
\end{definition}

Antimirov's transition function can decompose an expression $e$ into several ``derivatives'' $e'$, each of which describes the ``tails'' of words denoted by $e$, while $\mathbb{F}$ tells us whether $e$ accepts the empty word.
The following lemma relates this information back to $e$.
It can be viewed as one half of the \emph{fundamental theorem} of Kleene algebra~\cite{rutten-2000}, which says that an expression can be reconstructed from its derivatives, and whether or not it appears in $\mathbb{F}$; we prove this weaker version, because it is all we need.

\begin{lemma}%
\label{lemma:half-fundamental}
Let $e \in \mathbb{E}$.
If $e \in \mathbb{F}$, then $1 \leqq e$.
Furthermore, if $\ltr{a} \in \Sigma$ and $e' \in \partial(e, \ltr{a})$, then $\ltr{a} \cdot e' \leqq e$.
\end{lemma}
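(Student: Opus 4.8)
The statement has two parts, and both are proved by induction on the structure of $e$. The first part, that $e \in \mathbb{F}$ implies $1 \leqq e$, is the easier one, so I would dispatch it first; in fact the second part will rely on it in the product case. For the first part, I would proceed by induction on the derivation witnessing $e \in \mathbb{F}$ (equivalently, induction on $e$ with a case split following the inference rules defining $\mathbb{F}$). The base case $e = 1$ is immediate since $1 \leqq 1$. For $e = e_1 + e_2$ with, say, $e_1 \in \mathbb{F}$, the induction hypothesis gives $1 \leqq e_1$, and since $e_1 \leqq e_1 + e_2$ by the semiring laws, we get $1 \leqq e$ by transitivity. For $e = e_1 \cdot e_2$ with $e_1, e_2 \in \mathbb{F}$, the induction hypothesis gives $1 \leqq e_1$ and $1 \leqq e_2$, and monotonicity of $\cdot$ yields $1 = 1 \cdot 1 \leqq e_1 \cdot e_2$. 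For $e = e_1^*$, we have $1 \leqq 1 + e_1 \cdot e_1^* \equiv e_1^*$ directly from the unrolling axiom.

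**The second part.** For the claim that $e' \in \partial(e, \ltr{a})$ implies $\ltr{a} \cdot e' \leqq e$, I would again induct on $e$. The base cases $e = 0$ and $e = 1$ are vacuous since $\partial(e, \ltr{a}) = \emptyset$; the case $e = \ltr{b}$ is trivial, since either $\partial(\ltr{b}, \ltr{a}) = \emptyset$ (when $\ltr{a} \neq \ltr{b}$) or $e' = 1$ and $\ltr{a} \cdot 1 \equiv \ltr{a} = e$. For the sum $e = e_1 + e_2$, any $e' \in \partial(e, \ltr{a})$ lies in some $\partial(e_i, \ltr{a})$, so the induction hypothesis gives $\ltr{a} \cdot e' \leqq e_i \leqq e$. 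For the star $e = e_1^*$, an element of $\partial(e_1^*, \ltr{a})$ has the form $e'' \cdot e_1^*$ with $e'' \in \partial(e_1, \ltr{a})$; by induction $\ltr{a} \cdot e'' \leqq e_1$, and monotonicity plus the unrolling law give $\ltr{a} \cdot e'' \cdot e_1^* \leqq e_1 \cdot e_1^* \leqq 1 + e_1 \cdot e_1^* \equiv e_1^* = e$.

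**The main obstacle.** The genuinely interesting case is the product $e = e_1 \cdot e_2$, where $\partial(e_1 \cdot e_2, \ltr{a}) = \partial(e_1, \ltr{a}) \fatsemi e_2 \;\cup\; e_1 \star \partial(e_2, \ltr{a})$, so there are two subcases for where $e'$ comes from. If $e' = e'' \cdot e_2$ with $e'' \in \partial(e_1, \ltr{a})$, then the induction hypothesis gives $\ltr{a} \cdot e'' \leqq e_1$, and monotonicity yields $\ltr{a} \cdot e'' \cdot e_2 \leqq e_1 \cdot e_2 = e$. If instead $e' \in e_1 \star \partial(e_2, \ltr{a})$ — which is nonempty only when $e_1 \in \mathbb{F}$, in which case it equals $\partial(e_2, \ltr{a})$ — then $e' \in \partial(e_2, \ltr{a})$ and the induction hypothesis gives $\ltr{a} \cdot e' \leqq e_2$; here I invoke the \emph{first part of the lemma} for $e_1 \in \mathbb{F}$, namely $1 \leqq e_1$, so that $\ltr{a} \cdot e' \leqq e_2 \equiv 1 \cdot e_2 \leqq e_1 \cdot e_2 = e$ by monotonicity. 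So the only subtlety is remembering that the two halves of the lemma interact exactly at this point, and that the $e_1 \star (-)$ notation silently encodes the side condition $e_1 \in \mathbb{F}$ that unlocks the use of part one.
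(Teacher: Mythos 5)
Your proof is correct and follows essentially the same route as the paper's: a straightforward induction on the construction of $\mathbb{F}$ for the first claim, and a structural induction on $e$ for the second, with the only subtle point being the product case where the side condition $e_1 \in \mathbb{F}$ hidden in $e_1 \star (-)$ triggers an appeal to the first part. You correctly identified and handled exactly that interaction, so there is nothing to add.
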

\begin{proof}
The first claim follows by a straightforward induction on the construction of $\mathbb{F}$.
The proof of the second claim proceeds by induction on $e$.
In the base, the cases where $e = 0$ or $e = 1$ hold vacuously, as $\partial(e, \ltr{a}) = \emptyset$.
If $e = \ltr{b}$ for some $\ltr{b} \in \Sigma$, then $e' \in \partial(e, \ltr{a})$ implies that $e' = 1$ and $\ltr{a} = \ltr{b}$.
We then have that $\ltr{a} \cdot e' = \ltr{b} \cdot 1 \leqq \ltr{b} = e$.
For the inductive step, there are three more cases.
\begin{itemize}
    \item
    If $e = e_1 + e_2$, then w.l.o.g.\ $e' \in \partial(e_1, \ltr{a})$.
    By induction, we then have that $\ltr{a} \cdot e' \leqq e_1 \leqq e$.
    \item
    If $e = e_1 \cdot e_2$, then there are two subcases.
    First, if $e' = e_1' \cdot e_2$ with $e_1' \in \partial(e_1, \ltr{a})$, then by induction $\ltr{a} \cdot e_1' \leqq e_1$, so $\ltr{a} \cdot e' = \ltr{a} \cdot e_1' \cdot e_2 \leqq e_1 \cdot e_2 = e$.
    Otherwise, if $e_1 \in \mathbb{F}$ and $e' \in \partial(e_2, \ltr{a})$, then $\ltr{a} \cdot e' \leqq e_2$ by induction.
    Furthermore, since $e_1 \in \mathbb{F}$, by the first part of the lemma we have that $1 \leqq e_1$.
    Putting this together, we find that $\ltr{a} \cdot e' \leqq e_2 \equiv 1 \cdot e_2 \leqq e_1 \cdot e_2 = e$.
    \item
    If $e = e_1^*$, then $e' = e_1' \cdot e_1^*$ with $e_1' \in \partial(e_1, \ltr{a})$.
    By induction, we then know that $\ltr{a} \cdot e_1' \leqq e_1$, and hence $\ltr{a} \cdot e' = \ltr{a} \cdot e_1' \cdot e_1^* \leqq  e_1 \cdot e_1^* \leqq e_1^* = e$.
\end{itemize}

\vspace{-8mm}
\end{proof}

The expression $e$ could serve as the sole initial state in the automaton for $e$.
However, our automata allow multiple initial states, and distributing this task among them simplifies our arguments.

\begin{definition}
We define $\iota: \mathbb{E} \to \mathcal{P}(\mathbb{E})$ recursively, as follows:
\begin{align*}
    \iota(0) &= \emptyset
        & \iota(e_1 + e_2) &= \iota(e_1) \cup \iota(e_2) \\
    \iota(1) &= \{ 1 \}
        & \iota(e_1 \cdot e_2) &= \iota(e_1) \fatsemi e_2 \\
    \iota(\ltr{a}) &= \{ \ltr{a} \}
        & \iota(e_1^*) &= \iota(e_1) \fatsemi e_1^* \cup \{ 1 \}
\end{align*}
\end{definition}

We can validate that the expressions in $\iota(e)$ together make up $e$, as follows.
\begin{lemma}%
\label{lemma:iota-correct}
Let $e \in \mathbb{E}$.
Now $e \equiv \sum_{e' \in \iota(e)} e'$.
\end{lemma}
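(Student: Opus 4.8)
The plan is to prove $e \equiv \sum_{e' \in \iota(e)} e'$ by structural induction on $e$, following the recursive definition of $\iota$ case by case. The statement is purely equational in KA, so all steps should reduce to straightforward applications of the semiring axioms (associativity, commutativity, idempotence of $+$, distributivity) together with the unrolling rule $f \equiv 1 + f \cdot f^*$ for the star case.

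\medskip

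For the base cases: if $e = 0$, then $\iota(e) = \emptyset$, so the empty sum is $0 \equiv e$. If $e = 1$ or $e = \ltr{a}$, then $\iota(e) = \{e\}$, so the sum is just $e$. For the inductive step, suppose the claim holds for $e_1$ and $e_2$. If $e = e_1 + e_2$, then $\iota(e) = \iota(e_1) \cup \iota(e_2)$, and since $+$ is idempotent, associative, and commutative, $\sum_{e' \in \iota(e)} e' \equiv \sum_{e' \in \iota(e_1)} e' + \sum_{e' \in \iota(e_2)} e' \equiv e_1 + e_2 = e$ by the induction hypothesis (care is needed if $\iota(e_1)$ and $\iota(e_2)$ overlap, but idempotence of $+$ handles this). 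If $e = e_1 \cdot e_2$, then $\iota(e) = \iota(e_1) \fatsemi e_2 = \{ e' \cdot e_2 : e' \in \iota(e_1) \}$, so $\sum_{e' \in \iota(e)} e' \equiv \bigl( \sum_{e' \in \iota(e_1)} e' \bigr) \cdot e_2$ by right-distributivity, and this is $\equiv e_1 \cdot e_2 = e$ by the induction hypothesis and monotonicity/congruence of $\cdot$.

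\medskip

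The one case with a bit of content is $e = e_1^*$, where $\iota(e) = \iota(e_1) \fatsemi e_1^* \cup \{1\}$. Here I compute
\[
    \sum_{e' \in \iota(e)} e' \equiv 1 + \Bigl( \sum_{e' \in \iota(e_1)} e' \Bigr) \cdot e_1^* \equiv 1 + e_1 \cdot e_1^*
\]
using right-distributivity, idempotence of $+$ (again to absorb any overlap with the singleton $\{1\}$), and the induction hypothesis. The right-hand side is exactly $e_1^* = e$ by the unrolling axiom $1 + e_1 \cdot e_1^* = e_1^*$. So no obstacle arises; the mild subtlety throughout is that $\iota$ returns a \emph{set}, so the indexed sum $\sum_{e' \in \iota(e)}$ silently deduplicates repeated expressions — but since $x + x \equiv x$, this never affects the value up to $\equiv$, and one should simply remark this once rather than belabor it. I would present the proof as a short induction, spelling out the star case and treating the others telegraphically.
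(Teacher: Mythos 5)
Your proof is correct and follows essentially the same route as the paper's: a structural induction on $e$ using idempotence and associativity of $+$ for the union case, right-distributivity for the $\fatsemi$ cases, and the unrolling axiom $e_1^* \equiv 1 + e_1 \cdot e_1^*$ for the star case. Your explicit remark about the indexed sum silently deduplicating (handled by $x + x \equiv x$) is a point the paper leaves implicit, but the substance of the argument is identical.
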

\begin{proof}
The claim holds by definition in all three base cases.
The inductive arguments are as follows.
\begin{itemize}
    \item
    If $e = e_1 + e_2$, then we derive that
    \[
        e_1 + e_2
            \equiv \sum_{e_1' \in \iota(e_1)} e_1 + \sum_{e_2' \in \iota(e_2)} e_2'
            \equiv \sum_{e' \in \iota(e_1) \cup \iota(e_2)} e'
            = \sum_{e' \in \iota(e_1 + e_2)} e'
    \]
    \item
    If $e = e_1 \cdot e_2$, then we calculate
    \[
        e_1 \cdot e_2
            \equiv \Bigl( \sum_{e_1' \in \iota(e_1)} e_1' \Bigr) \cdot e_2
            \equiv \sum_{e_1' \in \iota(e_1)} e_1' \cdot e_2
            = \sum_{e' \in \iota(e_1 \cdot e_2)} e'
    \]
    \item
    If $e = e_1^*$, then we have
    \[
        e_1^*
            \equiv 1 + e_1 \cdot e_1^*
            \equiv 1 + \sum_{e_1' \in \iota(e_1)} e_1 \cdot e_1^*
            \equiv \sum_{e' \in \iota(1 + e_1 \cdot e_1^*)} e'
            = \sum_{e' \in \iota(e_1^*)} e'
    \]
\end{itemize}

\vspace{-8mm}
\end{proof}

\noindent
We could now try to package these parts into an automaton $(\mathbb{E}, \partial, \iota(e), \mathbb{F})$ for each expression $e$.
Because we work with finite automata, a little more work is necessary to identify the finite set of expressions that are relevant (i.e., represented by reachable states) for a starting expression $e$.

\begin{definition}
We define $\rho: \mathbb{E} \to \mathcal{P}(\mathbb{E})$ recursively, as follows:
\begin{align*}
    \rho(0) &= \emptyset
        & \rho(e_1 + e_2) &= \rho(e_1) \cup \rho(e_2) \\
    \rho(1) &= \{ 1 \}
        & \rho(e_1 \cdot e_2) &= \rho(e_1) \fatsemi e_2 \cup \rho(e_2) \\
    \rho(\ltr{a}) &= \{ \ltr{a}, 1 \}
        & \rho(e_1^*) &= \rho(e_1) \fatsemi e_1^* \cup \{ 1 \}
\end{align*}
\end{definition}

\begin{example}
If we take $e = \ltr{a} \cdot {(\ltr{b} \cdot \ltr{a})}^*$, then we can calculate that
\begin{align*}
    \rho(e)
        &= \rho(\ltr{a}) \fatsemi {(\ltr{b} \cdot \ltr{a})}^* \cup \rho({(\ltr{b} \cdot \ltr{a})}^*) \\
        &= \{ \ltr{a} \cdot {(\ltr{b} \cdot \ltr{a})}^*,\; 1 \cdot {(\ltr{b} \cdot \ltr{a})}^* \} \cup \rho({(\ltr{b} \cdot \ltr{a})}^*) \\
        &= \{ \ltr{a} \cdot {(\ltr{b} \cdot \ltr{a})}^*,\; 1 \cdot {(\ltr{b} \cdot \ltr{a})}^* \} \cup \{ 1 \} \cup \rho(\ltr{b} \cdot \ltr{a}) \fatsemi {(\ltr{b} \cdot \ltr{a})}^* \\
        &= \{ \ltr{a} \cdot {(\ltr{b} \cdot \ltr{a})}^*,\; 1 \cdot {(\ltr{b} \cdot \ltr{a})}^* \} \cup \{ 1 \} \cup (\rho(\ltr{b}) \fatsemi \ltr{a} \cup \rho(\ltr{a})) \fatsemi {(\ltr{b} \cdot \ltr{a})}^* \\
        &= \{ \ltr{a} \cdot {(\ltr{b} \cdot \ltr{a})}^*,\; 1 \cdot {(\ltr{b} \cdot \ltr{a})}^* \} \cup \{ 1 \} \cup \{ \ltr{b} \cdot \ltr{a},\; 1 \cdot \ltr{a},\; \ltr{a},\; 1 \} \fatsemi {(\ltr{b} \cdot \ltr{a})}^* \\
        &= \{ \ltr{a} \cdot {(\ltr{b} \cdot \ltr{a})}^*,\; 1 \cdot {(\ltr{b} \cdot \ltr{a})}^*,\; 1 ,\; \ltr{b} \cdot \ltr{a}\cdot {(\ltr{b} \cdot \ltr{a})}^*,\; 1 \cdot \ltr{a} \cdot {(\ltr{b} \cdot \ltr{a})}^* \}
\end{align*}
Note how there is some redundancy in the set above because the semantics of some expressions (e.g., $\ltr{a} \cdot {(\ltr{b} \cdot \ltr{a})}^*$ and $1 \cdot \ltr{a} \cdot {(\ltr{b} \cdot \ltr{a})}^*$) clearly coincide.
This is fine for us, because we just want to use Antimirov automata in proofs; if we wanted to optimize for the number of states, then $\rho$ needs some refinement.
\end{example}

\smallskip
With this function in hand, we can verify that it fits all of the requirements of the state space of an automaton with respect to the other parts identified above.

\begin{lemma}%
\label{lemma:automaton-well-defined}
For all $e \in \mathbb{E}$, the set $\rho(e)$ is finite and closed under $\partial$, i.e., if $e' \in \rho(e)$ and $e'' \in \partial(e', \ltr{a})$, then $e'' \in \rho(e)$ as well.
Furthermore, we have that $\iota(e) \subseteq \rho(e)$.
\end{lemma}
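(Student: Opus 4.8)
The plan is to establish the three assertions separately --- finiteness of $\rho(e)$, the inclusion $\iota(e) \subseteq \rho(e)$, and closure of $\rho(e)$ under $\partial$ --- each by induction on the structure of $e$; the last is where the work lies. Finiteness is immediate: $\rho(0) = \emptyset$, $\rho(1) = \{1\}$ and $\rho(\ltr{a}) = \{\ltr{a}, 1\}$ are finite, and the inductive clauses build $\rho(e_1 + e_2)$, $\rho(e_1 \cdot e_2)$ and $\rho(e_1^*)$ from $\rho(e_1)$ and $\rho(e_2)$ using only finite unions, singletons, and the operation $S \fatsemi g$, which does not increase cardinality. The inclusion $\iota(e) \subseteq \rho(e)$ is just as routine: comparing the two recursive definitions clause by clause, each contribution to $\iota$ is dominated by the corresponding contribution to $\rho$ --- for instance $\iota(\ltr{a}) = \{\ltr{a}\} \subseteq \{\ltr{a}, 1\} = \rho(\ltr{a})$, $\iota(e_1 \cdot e_2) = \iota(e_1) \fatsemi e_2 \subseteq \rho(e_1) \fatsemi e_2 \subseteq \rho(e_1 \cdot e_2)$, and $\iota(e_1^*) = \iota(e_1) \fatsemi e_1^* \cup \{1\} \subseteq \rho(e_1) \fatsemi e_1^* \cup \{1\} = \rho(e_1^*)$ --- so the induction goes through directly.

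For closure I would first isolate an auxiliary statement: \emph{for every $e \in \mathbb{E}$ and $\ltr{a} \in \Sigma$ it holds that $\partial(e, \ltr{a}) \subseteq \rho(e)$}. This is strictly stronger than what closure alone would provide, and it is what the argument genuinely needs, since $e$ itself need not lie in $\rho(e)$ (e.g. $e_1 + e_2 \notin \rho(e_1 + e_2)$ in general). I would prove it by induction on $e$: the base cases are direct (the only non-empty derivative is $\partial(\ltr{a}, \ltr{a}) = \{1\}$, and $1 \in \rho(\ltr{a})$); for $e_1 + e_2$ the derivative set is $\partial(e_1, \ltr{a}) \cup \partial(e_2, \ltr{a}) \subseteq \rho(e_1) \cup \rho(e_2)$ by the induction hypothesis; for $e_1 \cdot e_2$ the set $\partial(e_1 \cdot e_2, \ltr{a})$ splits into $\partial(e_1, \ltr{a}) \fatsemi e_2 \subseteq \rho(e_1) \fatsemi e_2$ (induction hypothesis on $e_1$) and $e_1 \star \partial(e_2, \ltr{a}) \subseteq \partial(e_2, \ltr{a}) \subseteq \rho(e_2)$ (induction hypothesis on $e_2$), both inside $\rho(e_1 \cdot e_2)$; and for $e_1^*$ we have $\partial(e_1^*, \ltr{a}) = \partial(e_1, \ltr{a}) \fatsemi e_1^* \subseteq \rho(e_1) \fatsemi e_1^* \subseteq \rho(e_1^*)$.

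With the auxiliary statement available, closure under $\partial$ follows by induction on $e$. The base cases are trivial, and $e = e_1 + e_2$ is a case split that hands each summand to its induction hypothesis. For $e = e_1 \cdot e_2$, where $\rho(e) = \rho(e_1) \fatsemi e_2 \cup \rho(e_2)$: if $e' \in \rho(e_2)$, the induction hypothesis for $e_2$ gives $\partial(e', \ltr{a}) \subseteq \rho(e_2) \subseteq \rho(e)$; if instead $e' = e_1' \cdot e_2$ with $e_1' \in \rho(e_1)$, then $\partial(e_1' \cdot e_2, \ltr{a}) = \partial(e_1', \ltr{a}) \fatsemi e_2 \cup e_1' \star \partial(e_2, \ltr{a})$, whose first part lies in $\rho(e_1) \fatsemi e_2 \subseteq \rho(e)$ by the induction hypothesis for $e_1$ applied to $e_1'$, and whose second part is contained in $\partial(e_2, \ltr{a}) \subseteq \rho(e_2) \subseteq \rho(e)$ by the auxiliary statement. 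The star case $e = e_1^*$ is entirely analogous: $e' = 1$ is immediate since $\partial(1, \ltr{a}) = \emptyset$, and $e' = e_1' \cdot e_1^*$ with $e_1' \in \rho(e_1)$ is treated exactly like the product subcase. The one real obstacle is precisely this product (and star) case: the closure induction hypothesis for $e_2$ does not settle the $e_1' \star \partial(e_2, \ltr{a})$ term, because $e_2$ need not belong to $\rho(e_2)$; this is exactly why the auxiliary statement $\partial(e, \ltr{a}) \subseteq \rho(e)$ must be proved first, as a free-standing lemma.
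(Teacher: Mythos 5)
Your proposal is correct, and all three parts (finiteness, $\iota(e) \subseteq \rho(e)$, and closure) are argued soundly; in particular you correctly identified that the crux is the free-standing fact $\partial(g, \ltr{a}) \subseteq \rho(g)$, which is also the auxiliary statement the paper proves. Where you diverge is in how you lift closure from this fact to arbitrary elements of $\rho(e)$: the paper strengthens the induction to prove, simultaneously with $\partial(e, \ltr{a}) \subseteq \rho(e)$, the ``transitivity'' property that $e' \in \rho(e)$ implies $\rho(e') \subseteq \rho(e)$, after which closure is immediate ($e'' \in \partial(e', \ltr{a}) \subseteq \rho(e') \subseteq \rho(e)$). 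You instead run a second induction on $e$ for closure itself, case-splitting on the syntactic shape of $e' \in \rho(e)$ (e.g.\ $e' = e_1' \cdot e_2$ with $e_1' \in \rho(e_1)$ versus $e' \in \rho(e_2)$) and invoking the closure hypothesis for the subterms together with the auxiliary statement for the $e_1' \star \partial(e_2, \ltr{a})$ component. Both routes work; the paper's invariant $\rho(e') \subseteq \rho(e)$ is slightly stronger and packages everything into a single induction, while your version keeps the two inductions separate at the cost of an explicit case analysis on how elements of $\rho(e_1 \cdot e_2)$ and $\rho(e_1^*)$ arise --- which is harmless here since you cover all cases regardless of overlap.
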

\begin{proof}
The proof of the first property proceeds by induction on $e$; we omit the arguments towards finiteness, as this holds for the base cases and is clearly preserved by both $\cup$ and $\fatsemi$.
For closure under $\partial$, we show more generally that if $e' \in \rho(e)$, then $\rho(e') \subseteq \rho(e)$, and that $\partial(e, \ltr{a}) \subseteq \rho(e)$ for all $\ltr{a} \in \Sigma$.
In the base, the claim holds by mere computation.
For the inductive step, there are three cases.
\begin{itemize}
    \item
    If $e = e_1 + e_2$, then $e' \in \rho(e_1)$ or $e' \in \rho(e_2)$; we assume the former without loss of generality.
    By induction, $\rho(e') \subseteq \rho(e_1)$, and so $\rho(e') \subseteq \rho(e)$ as well.
    For the second property, we note that $\partial(e, \ltr{a}) = \partial(e_1, \ltr{a}) \cup \partial(e_2, \ltr{a})$.
    By induction, the latter is a subset of $\rho(e_1) \cup \rho(e_2) = \rho(e)$.
    \item
    If $e = e_1 \cdot e_2$, then there are two cases.
    If $e' = e_1' \cdot e_2$ with $e_1' \in \rho(e_1)$, then $\rho(e') = \rho(e_1') \fatsemi e_2 \cup \rho(e_2)$.
    By induction, the latter is contained in $\rho(e_1) \fatsemi e_2 \cup \rho(e_2) = \rho(e)$.
    Otherwise, if $e' \in \rho(e_2)$, then by induction $\rho(e') \subseteq \rho(e_2) \subseteq \rho(e)$.
    For the second property, note $\partial(e, \ltr{a}) = \partial(e_1, \ltr{a}) \fatsemi e_2 \cup e_1 \star \partial(e_2, \ltr{a})$.
    By induction, this set is contained in $\rho(e_1) \fatsemi e_2 \cup \rho(e_2) = \rho(e)$.
    \item
    If $e = e_1^*$, we need only consider the case where $e' = e_1' \cdot e_1^*$ with $e_1' \in \rho(e_1)$.
    Here, $\rho(e') = \rho(e_1') \fatsemi e_1^* \cup \rho(e_1^*) \subseteq \rho(e_1) \fatsemi e_1^* \cup \rho(e_1^*) \subseteq \rho(e)$, with the second containment following by induction.
    For the second property, we derive that $\partial(e_1^*, \ltr{a}) = \partial(e_1, \ltr{a}) \fatsemi e_1^* \subseteq \rho(e_1) \fatsemi e_1^* \subseteq \rho(e)$.
\end{itemize}
The fact that $\iota(e) \subseteq \rho(e)$ follows by another straightforward induction on $e$.
\end{proof}

In light of the above, can treat $\partial$ as a function $\partial: \rho(e) \times \Sigma \to \mathcal{P}(\rho(e))$ with impunity.
If we now write $\mathbb{F}_e$ for $\mathbb{F} \cap \rho(e)$, then we can define Antimirov automata as follows.

\begin{definition}
Let $e \in \mathbb{E}$.
We write $A_e$ for the \emph{Antimirov automaton} $(\rho(e), \partial, \iota(e), \mathbb{F}_e)$.
\end{definition}

\begin{example}
If $e = \ltr{a} \cdot {(\ltr{b} \cdot \ltr{a})}^*$, then $A_e$ is the automaton depicted in \Cref{figure:antimirov-automaton}.
Note how there are two states not reachable from the initial state ($1$ and $\ltr{b} \cdot \ltr{a} \cdot {(\ltr{b} \cdot \ltr{a})}^*$).
As it turns out, these additional states do not matter for what we want to prove, so we keep them for the sake of simplicity.
\end{example}

\begin{figure}
    \centering
    \begin{tikzpicture}[every initial by arrow/.style={-latex}]
        \node[draw,rounded corners=1mm,initial,initial above,initial text={}] (abastar) {$\ltr{a} \cdot {(\ltr{b} \cdot \ltr{a})}^*$};
        \node[draw,accepting,rounded corners=1mm,right=of abastar] (onebastar) {$1 \cdot {(\ltr{b} \cdot \ltr{a})}^*$};
        \node[draw,rounded corners=1mm,right=of onebastar] (oneabastar) {$1 \cdot \ltr{a} \cdot {(\ltr{b} \cdot \ltr{a})}^*$};
        \node[draw,rounded corners=1mm,right=of oneabastar] (babastar) {$\ltr{b} \cdot \ltr{a} \cdot {(\ltr{b} \cdot \ltr{a})}^*$};
        \node[draw,accepting,rounded corners=1mm,left=of abastar] (one) {$1$};
        \draw (abastar) edge[-latex] node[above] {$\ltr{a}$} (onebastar);
        \draw (onebastar.north east) edge[-latex,bend left] node[above] {$\ltr{b}$} (oneabastar.north west);
        \draw (oneabastar.south west) edge[-latex,bend left] node[below] {$\ltr{a}$} (onebastar.south east);
        \draw (babastar) edge[-latex] node[above] {$\ltr{b}$} (oneabastar);
    \end{tikzpicture}
    \caption{The Antimirov automaton $A_e$ where $e = \ltr{a} \cdot {(\ltr{b} \cdot \ltr{a})}^*$.}%
    \label{figure:antimirov-automaton}
\end{figure}
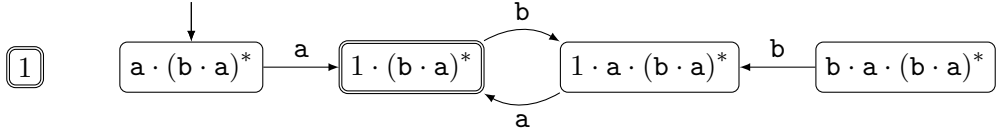

We end this section with two technical lemmas about Antimirov automata.
The first says that for all $e \in \mathbb{E}$, the language of $A_e$ contains all of the words in $\widehat{\ell}(e)$.
This is essentially one half of the correctness theorem for Antimirov's construction~\cite{antimirov-1996}; the other half is also not too hard to show, but we argue this weaker version because it is sufficient for our purposes.

\begin{lemma}[{One half of~\cite[Theorem~4.1]{antimirov-1996}}]%
\label{lemma:antimirov-correct}
Let $e \in \mathbb{E}$.
Now $\widehat{\ell}(e) \subseteq L(A_e)$.
\end{lemma}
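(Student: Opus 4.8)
The plan is to prove, by induction on $|w|$, the following auxiliary statement: for every $g \in \mathbb{E}$ and every $w \in \widehat{\ell}(g)$ there is a path of $\partial$-transitions starting at $g$ and ending in a state belonging to $\mathbb{F}$, whose labels spell $w$. Granting this, \Cref{lemma:antimirov-correct} follows quickly. Given $w \in \widehat{\ell}(e)$, \Cref{lemma:iota-correct} (together with the observation that $\equiv$ is preserved by $\widehat{\ell}$, since $\mathcal{L}$ is a KA) yields $\widehat{\ell}(e) = \bigcup_{e_0 \in \iota(e)} \widehat{\ell}(e_0)$, so $w \in \widehat{\ell}(e_0)$ for some $e_0 \in \iota(e)$. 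The auxiliary statement then supplies a $\partial$-path from $e_0$ to some $g_f \in \mathbb{F}$ spelling $w$; since $e_0 \in \iota(e) \subseteq \rho(e)$ and $\rho(e)$ is closed under $\partial$ (both by \Cref{lemma:automaton-well-defined}), the whole path lies in $\rho(e)$, so in particular $g_f \in \rho(e) \cap \mathbb{F} = \mathbb{F}_e$. Hence $w \in L(A_e, e_0) \subseteq L(A_e)$.

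For the induction on $|w|$: if $w = \epsilon$, then $\epsilon \in \widehat{\ell}(g)$ forces $g \in \mathbb{F}$ (the fact recalled just after the definition of $\mathbb{F}$), and the empty path from $g$ to itself works. If $w = \ltr{a}w'$, the key ingredient --- call it \emph{derivative soundness} --- is the claim that $\ltr{a}w' \in \widehat{\ell}(g)$ implies $w' \in \widehat{\ell}(g')$ for some $g' \in \partial(g, \ltr{a})$. Given this, the induction hypothesis applied to $g'$ and $w'$ produces a $\partial$-path from $g'$ to a state in $\mathbb{F}$ spelling $w'$, and prepending the transition $g \to g'$ (which exists since $g' \in \partial(g, \ltr{a})$) gives the desired path for $g$ and $w$.

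The real work --- and, I expect, the main obstacle --- is proving derivative soundness by induction on $g$. The base cases $g \in \{0, 1\} \cup \Sigma$ are immediate from the definition of $\partial$, and the sum case is a direct case split. The two delicate cases are $g = e_1 \cdot e_2$ and $g = e_1^*$, where one splits the word $\ltr{a}w'$ at the boundary of its $\widehat{\ell}(e_1)$-prefix. For the product, write $\ltr{a}w' = uv$ with $u \in \widehat{\ell}(e_1)$ and $v \in \widehat{\ell}(e_2)$: if $u = \epsilon$, then $e_1 \in \mathbb{F}$ and $\ltr{a}w' \in \widehat{\ell}(e_2)$, so the induction hypothesis on $e_2$ applies and $e_1 \star \partial(e_2, \ltr{a}) = \partial(e_2, \ltr{a})$ absorbs the result into $\partial(g, \ltr{a})$; if $u = \ltr{a}u'$, the induction hypothesis on $e_1$ gives $e_1' \in \partial(e_1, \ltr{a})$ with $u' \in \widehat{\ell}(e_1')$, whence $w' = u'v \in \widehat{\ell}(e_1' \cdot e_2)$ with $e_1' \cdot e_2 \in \partial(e_1, \ltr{a}) \fatsemi e_2 \subseteq \partial(g, \ltr{a})$. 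The star case is analogous: decompose $\ltr{a}w' = u_1 \cdots u_n$ with each $u_i \in \widehat{\ell}(e_1)$, discard empty factors so that $u_1 = \ltr{a}u_1'$, apply the induction hypothesis on $e_1$ to obtain $e_1' \in \partial(e_1, \ltr{a})$ with $u_1' \in \widehat{\ell}(e_1')$, and conclude $w' = u_1' u_2 \cdots u_n \in \widehat{\ell}(e_1' \cdot e_1^*)$ with $e_1' \cdot e_1^* \in \partial(e_1, \ltr{a}) \fatsemi e_1^* = \partial(g, \ltr{a})$. I would finally note that derivative soundness is exactly the direction of Antimirov's correctness theorem needed here; the omitted converse would follow from a matching statement asserting that every transition produces only genuine remainders.
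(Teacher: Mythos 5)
Your proposal is correct and follows essentially the same route as the paper: the same ``derivative soundness'' claim (with the same case analysis on $g$, including the nullability split for products and the discarding of empty factors for the star), the same induction on the length of $w$ to build an accepting run, and the same use of \Cref{lemma:iota-correct} to distribute $w$ over the initial states. The only cosmetic difference is that you explicitly invoke closure of $\rho(e)$ under $\partial$ to keep the run inside the state space, which the paper leaves implicit by treating $\partial$ as a function on $\rho(e)$.
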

\begin{proof}
We start by establishing two useful properties: first, if $\epsilon \in \widehat{\ell}(e)$, then $e \in \mathbb{F}$; second, if $\ltr{a}w \in \widehat{\ell}(e)$, then $w \in \widehat{\ell}(e')$ for some $e' \in \partial(e, \ltr{a})$.
The first claim follows by a straightforward induction on $e$.
As for the second claim, we also proceed by induction on $e$.
In the base, we need only consider the case where $e = \ltr{b}$ for some $\ltr{b} \in \Sigma$.
By the premise, we then know that $w = \epsilon$ and $\ltr{a} = \ltr{b}$, so we can choose $e' = 1$ to satisfy the claim.
For the inductive step, there are three cases.
\begin{itemize}
    \item
    If $e = e_1 + e_2$, then without loss of generality $\ltr{a}w \in \widehat{\ell}(e_1)$.
    By induction, we then have that $w \in \widehat{\ell}(e_1')$ for some $e_1' \in \partial(e_1, \ltr{a}) \subseteq \partial(e, \ltr{a})$, completing the claim.
    \item
    If $e = e_1 \cdot e_2$, then there are two subcases, depending on where $\ltr{a}$ comes from.
    If $\epsilon \in \widehat{\ell}(e_1)$ and $\ltr{a}w \in \widehat{\ell}(e_2)$, then by induction $w \in \widehat{\ell}(e_2')$ for some $e_2' \in \partial(e_2, \ltr{a})$.
    Furthermore, by the first property, we have that $e_1 \in \mathbb{F}$.
    We then know that $e_2' \in \partial(e, \ltr{a})$.
    Otherwise, if $w = w_1w_2$ such that $\ltr{a}w_1 \in \widehat{\ell}(e_1)$ and $w_2 \in \widehat{\ell}(e_2)$, then by induction $w_1 \in \widehat{\ell}(e_1')$ for some $e_1' \in \partial(e_1, \ltr{a})$.
    We can then choose $e' = e_1' \cdot e_2 \in \partial(e, \ltr{a})$ to satisfy the claim.
    \item
    If $e = e_1^*$, then $\ltr{a}w = w_1 \cdots w_n$ such that $w_1, \dots, w_n \in \widehat{\ell}(e_1)$.
    Without loss of generality, each of these $w_i$ is non-empty, and so $w_1 = \ltr{a}w_1'$ for some word $w_1'$; furthermore, $w_2 \cdots w_n \in \widehat{\ell}(e_1^*)$.
    Now we have by induction that $w_1' \in \widehat{\ell}(e_1')$ for some $e_1' \in \partial(e_1, \ltr{a})$.
    We can then choose $e' = e_1' \cdot e_1^*$ to find that $w = w_1' w_2 \cdots w_n \in \widehat{\ell}(e')$ and $e' \in \partial(e_1^*, \ltr{a})$.
\end{itemize}

With these properties in hand, we can now prove that for all $e' \in \rho(e)$, if $w \in \widehat{\ell}(e')$, then $w \in L(A_e, e')$, by induction on $w$.
In the base, where $w = \epsilon$, the first property tells us that $e' \in \mathbb{F}$, and so $\epsilon \in L(A_e, e')$ by definition.
Otherwise, we can write $w = \ltr{a}w'$, which by the second property tells us that $w' \in \widehat{\ell}(e'')$ for some $e'' \in \partial(e', \ltr{a})$, whence $w' \in L(A_e, e'')$ by induction --- but then $w = \ltr{a}w' \in L(A_e, e')$ by definition of $L(A_e, -)$.
Now, if $w \in \widehat{\ell}(e)$, then $w \in \widehat{\ell}(e')$ for some $e' \in \iota(e)$ by \Cref{lemma:iota-correct}, and hence $w \in L(A_e, e')$ by the above, which means that $w \in L(A_e)$ as claimed.
\end{proof}

The final property of Antimirov automata that we need is that the solution to $A_e$ is below $e$.
As before, it is possible to show the converse, i.e., $e \leqq \soli{A_e}$~\cite{kappe-2023}.
However, this requires some additional formal tools, and is in fact not necessary to show the finite model property.
The property is an easy consequence of \Cref{lemma:half-fundamental}, and in light of similar results for other constructions~\cite{kozen-2001,jacobs-2006}, not surprising.

\begin{lemma}%
\label{lemma:antimirov-solution-upper}
For all $e \in \mathbb{E}$, it holds that $\soli{A_e} \leqq e$.
\end{lemma}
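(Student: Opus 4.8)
The plan is to show that $\sol{A_e}$ is an $e$-solution to $A_e$ in disguise --- more precisely, to exhibit a function $s : \rho(e) \to \mathbb{E}$ with $s(e') \leqq e$ for every $e' \in \rho(e)$, and then argue that $s$ solves $A_e$, so that \Cref{theorem:computing-solutions} forces $\sol{A_e}(e') \leqq s(e')$ pointwise, and in particular $\soli{A_e} = \sum_{e' \in \iota(e)} \sol{A_e}(e') \leqq \sum_{e' \in \iota(e)} e'$. This last sum, however, is not obviously below $e$; the natural choice is instead the \emph{identity-like} map $s(e') = e'$ itself, since then $\soli{A_e} = \sum_{e' \in \iota(e)} \sol{A_e}(e') \leqq \sum_{e' \in \iota(e)} e' \equiv e$ by \Cref{lemma:iota-correct}, which is exactly what we want.

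So the concrete approach is: take $s : \rho(e) \to \mathbb{E}$ defined by $s(e') = e'$, and check it is a solution to $A_e = (\rho(e), \partial, \iota(e), \mathbb{F}_e)$. For the accepting-state condition, suppose $e' \in \mathbb{F}_e$; then $e' \in \mathbb{F}$, so $1 \leqq e' = s(e')$ by the first part of \Cref{lemma:half-fundamental}. For the transition condition, suppose $e'' \in \partial(e', \ltr{a})$; then $\ltr{a} \cdot e'' \leqq e'$, i.e.\ $\ltr{a} \cdot s(e'') \leqq s(e')$, by the second part of \Cref{lemma:half-fundamental}. Hence $s$ solves $A_e$, and by \Cref{theorem:computing-solutions} we get $\sol{A_e}(e') \leqq s(e') = e'$ for all $e' \in \rho(e)$. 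Summing over $\iota(e) \subseteq \rho(e)$ (using \Cref{lemma:automaton-well-defined}) yields
\[
    \soli{A_e} = \sum_{e' \in \iota(e)} \sol{A_e}(e') \leqq \sum_{e' \in \iota(e)} e' \equiv e,
\]
where the final equivalence is \Cref{lemma:iota-correct}.

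There is essentially no obstacle here: all the real work has already been done in \Cref{lemma:half-fundamental} (which packages the two structural facts about $\partial$ and $\mathbb{F}$) and in \Cref{theorem:computing-solutions} (which turns "solves the automaton" into a pointwise lower bound). The only thing to be a little careful about is bookkeeping --- namely that $\partial$ genuinely restricts to $\rho(e) \times \Sigma \to \mathcal{P}(\rho(e))$ and that $\iota(e) \subseteq \rho(e)$, both of which are supplied by \Cref{lemma:automaton-well-defined} --- so that the phrase "solution to $A_e$" makes sense and \Cref{theorem:computing-solutions} applies. Everything else is a one-line invocation of already-proved results.
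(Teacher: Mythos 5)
Your proof is correct and is essentially identical to the paper's: both take the injection $s(e') = e'$ on $\rho(e)$, use \Cref{lemma:half-fundamental} to show it solves $A_e$, apply \Cref{theorem:computing-solutions} to get $\sol{A_e}(e') \leqq e'$, and conclude via \Cref{lemma:iota-correct}. No gaps.
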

\begin{proof}
Let $s: \rho(e) \to \mathbb{E}$ be the injection.
\Cref{lemma:half-fundamental} shows that $s$ is a solution to $A_e$, meaning that for $e' \in \rho(e)$ we have $\sol{A_e}(e') \leqq s(e') = e'$, by \Cref{theorem:computing-solutions}.
Using \Cref{lemma:iota-correct}, we then conclude
\[
    \soli{A_e}
        \equiv \sum_{e' \in \iota(e)} \sol{A_e}(e')
        \leqq \sum_{e' \in \iota(e)} e'
        \equiv e
\]

\vspace{-8mm}
\end{proof}

\section{The Finite Model Property}%
\label{section:prove-fmp}

Recall that our objective was to prove the finite model property.
To this end, we shall derive a finite Kleene algebra for each expression, whose properties can then be used to conclude the finite model property, thus arriving at our second contribution.
We already saw how an expression gives rise to an automaton and its transformation automata.
As it happens, the states of transformation automata have the internal structure of a \emph{monoid} --- indeed, this was the original motivation for the construction~\cite{mcnaughton-papert-1968}.

\begin{definition}
A \emph{monoid} is a tuple $(M, \cdot, 1)$ where $M$ is a set, $\cdot$ is a binary operator and $1 \in M$ such that the following hold for all $m_1, m_2, m_3 \in M$:
\begin{mathpar}
    m_1 \cdot (m_2 \cdot m_3) = (m_1 \cdot m_2) \cdot m_3
    \and
    m_1 \cdot 1 = m_1
    \and
    1 \cdot m_1 = m_1
\end{mathpar}
A function $h: \Sigma \to M$ gives rise to the function $\widetilde{h}: \Sigma^* \to M$, defined by\[
    \widetilde{h}(\ltr{a}_1 \cdots \ltr{a}_n) = h(\ltr{a_1}) \cdot \cdots \cdot h(\ltr{a_n})
\]
As for KAs, we may denote a general monoid $(M, \cdot, 1)$ with its carrier $M$.
\end{definition}

If $A = (Q, \delta, I, F)$ is an automaton, then $\mathcal{R}(Q)$, the state space of its transformation automata, has a monoidal structure: the operator is relational composition, and the unit is the identity relation.

A monoid can be lifted to sets of its elements, which lets us define the other operators of a KA\@.

\begin{lemma}[see~\cite{palka-2005}]%
\label{lemma:monoid-to-ka}
If $(M, \cdot, 1)$ is a monoid, then $(\mathcal{P}(M), \cup, \otimes, {}^{\circledast}, \emptyset, \{ 1 \})$ is a star-continuous KA, where for $U, V \subseteq M$:
\begin{mathpar}
    U \otimes V = \{ m \cdot n : m \in U, n \in V \}
    \and
    U^{\circledast} = \{ u_1 \cdots u_n : u_1, \dots, u_n \in U \}
\end{mathpar}
\end{lemma}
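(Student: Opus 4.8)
The statement asserts that the power-set of a monoid, equipped with the operations given, is a star-continuous Kleene algebra. The plan is to verify the axioms in \Cref{definition:kleene-algebra} one by one (mostly routine), and then establish star-continuity, which also subsumes the Kleene star fixpoint and induction axioms. First I would check that $(\mathcal{P}(M), \cup, \otimes, \emptyset, \{1\})$ is an idempotent semiring: $\cup$ is associative, commutative, idempotent, with unit $\emptyset$; $\otimes$ is associative because $\cdot$ is associative on $M$, and has $\{1\}$ as a two-sided unit because $1$ is a two-sided unit in $M$; distributivity of $\otimes$ over $\cup$ (on both sides) is immediate from the existential form of $\otimes$, i.e.\ $m\cdot n \in U \otimes (V \cup W)$ iff $m \in U$ and $n \in V$ or $n \in W$; and $U \otimes \emptyset = \emptyset = \emptyset \otimes U$ since there is nothing to pair with. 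Note the induced order $U \leq V \iff U \cup V = V$ is exactly set inclusion $U \subseteq V$.

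Next I would handle the star. Unfolding the definition, $U^{\circledast} = \bigcup_{n \in \mathbb{N}} U^{\otimes n}$, where $U^{\otimes 0} = \{1\}$ and $U^{\otimes (n+1)} = U \otimes U^{\otimes n}$ (a routine check that the $n$-fold products in the definition of $U^{\circledast}$ are collected this way, using that the empty product is $1$). From this, $1 + U \cdot U^* = \{1\} \cup (U \otimes U^{\circledast}) = U^{\otimes 0} \cup \bigcup_{n\geq 1} U^{\otimes n} = U^{\circledast}$ gives the left-unrolling law $1 + x\cdot x^* = x^*$. For star-continuity itself: suppose $U \otimes V^{\otimes n} \otimes W \subseteq Y$ for all $n \in \mathbb{N}$; I must show $U \otimes V^{\circledast} \otimes W \subseteq Y$. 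But $U \otimes V^{\circledast} \otimes W = U \otimes \bigl(\bigcup_n V^{\otimes n}\bigr) \otimes W = \bigcup_n \bigl(U \otimes V^{\otimes n} \otimes W\bigr)$, since $\otimes$ distributes over arbitrary unions (again from its existential form), and this is a union of subsets of $Y$, hence contained in $Y$. This establishes star-continuity. The remaining KA axioms involving $^*$ — the left-unrolling law (done above) and the left-induction rule $x + y\cdot z \leq z \implies y^* \cdot x \leq z$ — follow from star-continuity: e.g.\ if $X \cup (Y \otimes Z) \subseteq Z$ then by induction on $n$ one shows $Y^{\otimes n} \otimes X \subseteq Z$ for all $n$, whence $Y^{\circledast} \otimes X = \bigcup_n Y^{\otimes n} \otimes X \subseteq Z$; this is the standard derivation, and I would either cite it or include the one-line induction.

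The main obstacle is essentially bookkeeping rather than conceptual difficulty: there are many small axioms to discharge, and the cleanest organization is to prove the semiring laws in a batch (all immediate from the monoid laws and the set-theoretic definitions), then isolate the key lemma that $\otimes$ distributes over arbitrary unions — $X \otimes \bigl(\bigcup_{i} V_i\bigr) = \bigcup_i (X \otimes V_i)$ and symmetrically — since that single fact drives both finite distributivity and star-continuity. With that lemma in hand, star-continuity is one line, and the fixpoint/induction axioms for $^*$ are either immediate or a routine induction on the exponent. Since the paper explicitly attributes this result to Palka and the argument is standard, I would keep the exposition terse, proving the distributivity lemma carefully and then stating the rest follows by the routine verifications just sketched.
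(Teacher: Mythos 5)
The paper gives no proof of this lemma at all --- it is stated with a citation to Palka and verified only in the accompanying Coq development --- so there is nothing to compare against; your argument is the standard one and is correct. Your organization is also the right one: isolating distributivity of $\otimes$ over arbitrary unions as the single driving fact, from which star-continuity is immediate and the unrolling and left-induction axioms follow by the routine induction on the exponent, is exactly how one would write this out in full.
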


As an example, note that $\Sigma^*$ has a monoidal structure (with concatenation as operation and $\epsilon$ as unit), and that $\mathcal{P}(\Sigma^*)$ as derived from the above is precisely the KA of languages.

\begin{remark}
The algebraically minded reader may interject that the monoid of relations has an even richer structure, namely that of a \emph{idempotent semiring}, with the additive operator given by union.
A finite semiring can directly be treated a (star-continuous) KA, without involving the powerset: simply definite $x^*$ as the (finite!) sum of $x^n$ for all $n \in \mathbb{N}$.
The problem with this approach is that the resulting KA may fail to satisfy \Cref{lemma:interp-upper,lemma:interp-lower}, which we need to prove the finite model property.
\end{remark}

Now, given an expression $e \in \mathbb{E}$, a monoid $(M, \cdot, 1)$, and a map $h: \Sigma \to M$, we have two ways of interpreting $e$ inside $\mathcal{P}(M)$.
On the one hand, we can lift $\ltr{a} \mapsto \{ h(\ltr{a}) \}$ to obtain a map $\mathbb{E} \to \mathcal{P}(M)$.
On the other hand, we can map each $w \in \widehat{\ell}(e)$ to an element of $M$ via $\widetilde{h}: \Sigma^* \to M$ to obtain a subset of $M$.
The next lemma shows that these interpretations are actually the same; it can be thought of as a generalization of~\cite[Lemma~3.1]{palka-2005}, which covers the special case for the syntactic monoid.

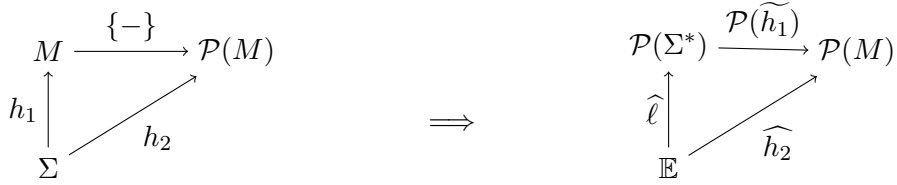
\begin{figure}
    \begin{mathpar}
        \begin{tikzpicture}[baseline=5mm]
            \node (Sigma) {$\Sigma$};
            \node[above=of Sigma] (M) {$M$};
            \node[right=15mm of M] (PM) {$\mathcal{P}(M)$};
            \draw (Sigma) edge[->] node[left] {$h_1$} (M);
            \draw (M) edge[->] node[above] {$\{ - \}$} (PM);
            \draw (Sigma) edge[->] node[below right] {$h_2$} (PM);
        \end{tikzpicture}
        \and
        \implies
        \and
        \begin{tikzpicture}[baseline=5mm]
            \node (Exp) {$\mathbb{E}$};
            \node[above=of Exp] (PSigmaStar) {$\mathcal{P}(\Sigma^*)$};
            \node[right=15mm of M] (PM) {$\mathcal{P}(M)$};
            \draw (Exp) edge[->] node[left] {$\widehat{\ell}$} (PSigmaStar);
            \draw (PSigmaStar) edge[->] node[above] {$\mathcal{P}(\widetilde{h_1})$} (PM);
            \draw (Exp) edge[->] node[below right] {$\widehat{h_2}$} (PM);
        \end{tikzpicture}
    \end{mathpar}
    \caption{
        Diagrammatic representation of \Cref{lemma:monoid-to-ka-interp}.
        In the diagram on the left, $\{ - \}$ is the map that sends an element of $m$ to the singleton set $\{ m \}$; on the right, $\mathcal{P}(\widetilde{h_1})$ is the pointwise application of $\widehat{h_1}: \Sigma^* \to M$.
    }%
    \label{figure:triangle-diagrams}
\end{figure}

\begin{lemma}%
\label{lemma:monoid-to-ka-interp}
Let $M$ be a monoid and let $\mathcal{P}(M)$ be the KA obtained from it, per \Cref{lemma:monoid-to-ka}.
Also, let $h_1: \Sigma \to M$ and $h_2: \Sigma \to \mathcal{P}(M)$ be such that if $\ltr{a} \in \Sigma$, then $h_2(\ltr{a}) = \{ h_1(\ltr{a}) \}$.
Then for all $e \in \mathbb{E}$:
\[
    \widehat{h_2}(e) = \{ \widetilde{h_1}(w) : w \in \widehat{\ell}(e) \}
\]
In categorical terms, if the diagram on the left in \Cref{figure:triangle-diagrams} commutes, then so does the one on the right.
\end{lemma}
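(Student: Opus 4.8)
The plan is to proceed by induction on the structure of $e$, after first isolating the one auxiliary fact that does the real work: that $\widetilde{h_1} \colon \Sigma^* \to M$ is a monoid homomorphism, i.e., $\widetilde{h_1}(\epsilon) = 1$ and $\widetilde{h_1}(w_1 w_2) = \widetilde{h_1}(w_1) \cdot \widetilde{h_1}(w_2)$ for all $w_1, w_2 \in \Sigma^*$. This is immediate from the definition of $\widetilde{h_1}$ together with associativity of $\cdot$ in $M$; iterating it gives $\widetilde{h_1}(w_1 \cdots w_n) = \widetilde{h_1}(w_1) \cdot \cdots \cdot \widetilde{h_1}(w_n)$ for every $n \geq 0$, the empty product on the right being $1$.

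For the base cases, all three are direct computations: when $e = 0$ both sides are $\emptyset$; when $e = 1$, $\widehat{h_2}(1) = \{ 1 \}$ while $\widehat{\ell}(1) = \{ \epsilon \}$ with $\widetilde{h_1}(\epsilon) = 1$; when $e = \ltr{a}$, $\widehat{h_2}(\ltr{a}) = h_2(\ltr{a}) = \{ h_1(\ltr{a}) \}$ while $\widehat{\ell}(\ltr{a}) = \{ \ltr{a} \}$ with $\widetilde{h_1}(\ltr{a}) = h_1(\ltr{a})$.

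For the inductive step there are three cases. When $e = e_1 + e_2$, both $\widehat{h_2}$ and the set-builder distribute over $\cup$, so the claim reduces at once to the two induction hypotheses. When $e = e_1 \cdot e_2$, unfolding $\widehat{h_2}$ gives $\widehat{h_2}(e_1) \otimes \widehat{h_2}(e_2)$; applying the induction hypotheses and the definition of $\otimes$ yields the set $\{ \widetilde{h_1}(w_1) \cdot \widetilde{h_1}(w_2) : w_1 \in \widehat{\ell}(e_1),\ w_2 \in \widehat{\ell}(e_2) \}$, and the homomorphism property rewrites each $\widetilde{h_1}(w_1) \cdot \widetilde{h_1}(w_2)$ as $\widetilde{h_1}(w_1 w_2)$, which is exactly $\{ \widetilde{h_1}(w) : w \in \widehat{\ell}(e_1) \cdot \widehat{\ell}(e_2) \} = \{ \widetilde{h_1}(w) : w \in \widehat{\ell}(e) \}$. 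When $e = e_1^*$, writing $U = \widehat{h_2}(e_1)$, which by induction equals $\{ \widetilde{h_1}(w) : w \in \widehat{\ell}(e_1) \}$, we must show $U^{\circledast} = \{ \widetilde{h_1}(w) : w \in \widehat{\ell}(e_1)^* \}$; this follows from the iterated homomorphism property, since both sides consist precisely of the elements $\widetilde{h_1}(w_1) \cdot \cdots \cdot \widetilde{h_1}(w_n) = \widetilde{h_1}(w_1 \cdots w_n)$ as $n$ ranges over $\mathbb{N}$ and $w_1, \dots, w_n$ over $\widehat{\ell}(e_1)$.

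There is no genuine obstacle here: the proof is a routine structural induction. The only point requiring a modicum of care is the case $e = e_1^*$, where one must include the empty product $n = 0$ (so that $1 \in U^{\circledast}$ corresponds to $\epsilon \in \widehat{\ell}(e_1)^*$) and justify the passage from products of $\widetilde{h_1}$-images to $\widetilde{h_1}$ of a single concatenation uniformly in $n$ — both handled by the homomorphism fact established at the outset.
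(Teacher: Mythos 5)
Your proof is correct, but it takes a genuinely different route from the paper. You argue by structural induction on $e$, pushing the (iterated) homomorphism property of $\widetilde{h_1}$ through each of the six cases; all cases check out, including the two subtleties you flag in the star case (the empty product for $n = 0$ and the uniform passage from $\widetilde{h_1}(w_1)\cdot\cdots\cdot\widetilde{h_1}(w_n)$ to $\widetilde{h_1}(w_1\cdots w_n)$). The paper instead proves the two inclusions separately by reusing machinery it has already built: for $\supseteq$, each $w \in \widehat{\ell}(e)$ satisfies $w \leqq e$ by \Cref{lemma:embed-sem}, so $\{\widetilde{h_1}(w)\} = \widehat{h_2}(w) \subseteq \widehat{h_2}(e)$; for $\subseteq$, since $\mathcal{P}(M)$ is star-continuous (\Cref{lemma:monoid-to-ka}), the fact that $\widehat{h_2}(w) \subseteq \{\widetilde{h_1}(w) : w \in \widehat{\ell}(e)\}$ for every $w \in \widehat{\ell}(e)$ yields $\widehat{h_2}(e) \subseteq \{\widetilde{h_1}(w) : w \in \widehat{\ell}(e)\}$ by \Cref{lemma:cont}. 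The trade-off: the paper's argument is three lines because the inductions have already been done once and for all in \Cref{lemma:cont,lemma:embed-sem} (your induction essentially re-specializes those inductions to the particular KA $\mathcal{P}(M)$), whereas your version is self-contained, does not invoke star-continuity of $\mathcal{P}(M)$ at all, and makes the role of the homomorphism $\widetilde{h_1}$ explicit. Either is acceptable; in the context of this paper the shorter reuse-based proof is preferred since the supporting lemmas are needed elsewhere anyway.
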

\begin{proof}
For the inclusion from right to left, note that if $w \in \widehat{\ell}(e)$, then $\widetilde{h_1}(w) \in \widehat{h_2}(w) \subseteq \widehat{h_2}(e)$ by \Cref{lemma:embed-sem}.
For the other inclusion, note that for all $w \in \widehat{\ell}(e)$ we have $\widehat{h_2}(w) = \{ \widetilde{h_1}(w) \} \subseteq \{ \widetilde{h_1}(w) : w \in \widehat{\ell}(e) \}$, and so $\widehat{h_2}(e) \subseteq \{ \widetilde{h_1}(w) : w \in \widehat{\ell}(e) \}$ by \Cref{lemma:cont}.
\end{proof}

In particular, every expression $e$ gives rise to an automaton $A_e$, which in turn gives a monoid $(\mathcal{R}(\rho(e)), \circ, \id)$ that yields a KA with carrier $\mathcal{P}(\mathcal{R}(\rho(e)))$;\@ we write $\mathcal{K}_e$ for this KA, and $h_e$ for the map from $\Sigma$ to $\mathcal{K}_e$ given by $h_e(\ltr{a}) = \{ \partial_\ltr{a} \}$.
This is precisely the KA that we will need to prove the FMP\@.

\begin{example}
If we set $e = \ltr{a} \cdot {(\ltr{b} \cdot \ltr{a})}^*$, then $A_e$ is as in \Cref{figure:antimirov-automaton}.
The accompanying transformation automaton happens to be isomorphic to the one drawn in \Cref{figure:transformation-automaton-alt}.
The state space of this transformation automaton induces a monoid $(\mathcal{R}(\rho(e)), \circ, \id)$, and hence a KA $\mathcal{K}_e = (\mathcal{P}(\mathcal{R}(\rho(e))), \cup, \otimes, {}^\oast, \emptyset, \{\id\})$.

We can interpret the expression $\ltr{a} \cdot \ltr{b}^* \cdot \ltr{a}$ in $\mathcal{K}_e$; \Cref{lemma:monoid-to-ka-interp} then tells us that
\[
    \widehat{h_e}(\ltr{a} \cdot \ltr{b}^* \cdot \ltr{a})
        = \{ \widetilde{h_e}(w) : w \in \widehat{\ell}(\ltr{a} \cdot \ltr{b}^* \cdot \ltr{a}) \}
        = \{ \partial_\ltr{aa}, \partial_\ltr{aba}, \partial_\ltr{abba}, \dots \}
        = \{ \emptyset, \partial_\ltr{a} \}
\]
We can also interpret $e$ itself inside $\mathcal{K}_e$ to find that
\[
    \widehat{h_e}(e)
        = \{ \widetilde{h_e}(w) : w \in \widehat{\ell}(e) \}
        = \{ \partial_\ltr{a}, \partial_\ltr{aba}, \partial_\ltr{ababa}, \dots \}
        = \{ \partial_\ltr{a} \}
\]
\end{example}

The KA $\mathcal{K}_e$ thus gives us a way to obtain a bunch of relations on the state space of $A_e$, the Antimirov automaton of $e$.
Now, each of these relations $R$ represents a set of words that have the same effect on $A_e$, and we know that we can obtain an expression for all of those words by solving the transformation automaton with $R$ as an accepting state.
But then, in light of \Cref{lemma:monoid-to-ka-interp}, we should expect that all of the relations in $\widehat{h_e}(e)$ are of the form $\partial_w$, for some $w \in \widehat{\ell}(e)$; these must then correspond to expressions whose words bring about a transformation from an initial state of $A_e$ to some accepting state.
This line of reasoning can again be reflected on the level of equational reasoning, as follows.

\begin{lemma}%
\label{lemma:interp-upper}
Let $e \in \mathbb{E}$.
If $R \in \widehat{h_e}(e)$, then $\soli{A_e[R]} \leqq e$.
\end{lemma}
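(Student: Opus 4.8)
The plan is to unpack the hypothesis $R \in \widehat{h_e}(e)$ using \Cref{lemma:monoid-to-ka-interp}, which tells us that $\widehat{h_e}(e) = \{\widetilde{h_e}(w) : w \in \widehat{\ell}(e)\}$. Since $h_e(\ltr{a}) = \{\partial_\ltr{a}\}$, the function $\widetilde{h_e}: \Sigma^* \to \mathcal{R}(\rho(e))$ sends a word $w = \ltr{a}_1 \cdots \ltr{a}_n$ to the relational composite $\partial_{\ltr{a}_1} \circ \cdots \circ \partial_{\ltr{a}_n}$, which (by the word-indexed definition of $\delta_w$ applied to the automaton $A_e$ with transition relation $\partial$) is exactly $\partial_w$. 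So the hypothesis unpacks to: there exists $w \in \widehat{\ell}(e)$ with $R = \partial_w$.

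Given such a $w$, the goal $\soli{A_e[R]} \leqq e$ splits naturally into two inequalities chained through $w$ (regarded as a regular expression, per the standing convention). First I would show $\soli{A_e[\partial_w]} \leqq w$, and second $w \leqq e$. The second is immediate from \Cref{lemma:embed-sem}, since $w \in \widehat{\ell}(e)$. For the first, I would proceed by induction on $w$. In the base case $w = \epsilon$ (identified with $1$), we have $\partial_\epsilon = \id$, so $A_e[\id]$ has $\id$ as both its sole initial and sole accepting state; the constant-$1$ function solves it, giving $\soli{A_e[\id]} \leqq 1 \equiv w$ by \Cref{theorem:computing-solutions}. For the inductive step $w = \ltr{a}v$, we have $\partial_w = \partial_\ltr{a} \circ \partial_v$, and \Cref{lemma:solve-transformation-aut-compose} gives
\[
    \soli{A_e[\partial_\ltr{a}]} \cdot \soli{A_e[\partial_v]} \leqq \soli{A_e[\partial_\ltr{a} \circ \partial_v]} = \soli{A_e[\partial_w]}.
\]
Hmm — that inequality points the wrong way. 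Instead I would build the chain the other direction: \Cref{lemma:solve-transformation-aut-letter} gives $\ltr{a} \leqq \soli{A_e[\partial_\ltr{a}]}$, and more usefully I should iterate \Cref{lemma:solve-transformation-aut-compose} to get $\soli{A_e[\partial_{\ltr{a}_1}]} \cdots \soli{A_e[\partial_{\ltr{a}_n}]} \leqq \soli{A_e[\partial_w]}$ for $w = \ltr{a}_1 \cdots \ltr{a}_n$, so this composition lemma is not the tool for the upper bound. The right tool for $\soli{A_e[\partial_w]} \leqq w$ is direct: define $s: \mathcal{R}(\rho(e)) \to \mathbb{E}$ by assigning to a relation $R'$ an expression capturing "the word that still needs reading to turn $R'$ into $\partial_w$" — concretely, for $w = \ltr{a}_1\cdots\ltr{a}_n$ set $s(\partial_{\ltr{a}_1\cdots\ltr{a}_k}) = \ltr{a}_{k+1}\cdots\ltr{a}_n$ and $s(R') = 0$ for relations not of this prefix form — and check by \Cref{theorem:computing-solutions} that $s$ solves $A_e[\partial_w]$, whence $\soli{A_e[\partial_w]} = \sol{A_e[\partial_w]}(\id) \leqq s(\id) = w$.

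The main obstacle is verifying that this candidate $s$ is well-defined and actually a solution: one must handle the possibility that $\partial_{\ltr{a}_1\cdots\ltr{a}_k} = \partial_{\ltr{a}_1\cdots\ltr{a}_j}$ for $j \neq k$ (so that $s$ might be asked to take two values on one relation), and one must check the transition constraint $\ltr{b} \cdot s(R' \circ \partial_\ltr{b}) \leqq s(R')$ for every $R'$ and every letter $\ltr{b}$ — trivially satisfied when $s(R') = 0$ forces $s(R' \circ \partial_\ltr{b}) = 0$ as well, which needs an argument, and satisfied with equality in the prefix case when $\ltr{b} = \ltr{a}_{k+1}$. To sidestep the well-definedness worry cleanly, I would instead define $s(R') = \sum \{\, v \in \Sigma^* : R' \circ \widetilde{h_e}(v) = \partial_w \,\}$, a sum of expressions; the accepting constraint asks $1 \leqq s(\partial_w)$, which holds since $\epsilon$ is in the index set, and the transition constraint $\ltr{b} \cdot s(R' \circ \partial_\ltr{b}) \leqq s(R')$ holds because any $v$ with $(R' \circ \partial_\ltr{b}) \circ \widetilde{h_e}(v) = \partial_w$ yields $\ltr{b}v$ with $R' \circ \widetilde{h_e}(\ltr{b}v) = \partial_w$, so $\ltr{b}v$ is a summand of $s(R')$. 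Then $\soli{A_e[\partial_w]} \leqq s(\id) = \sum\{v : \widetilde{h_e}(v) = \partial_w\}$; since $w$ itself is one such summand this is consistent, but to conclude $\leqq w$ I need that summand to dominate, which it need not. So in the write-up I will in fact keep the simpler prefix-indexed $s$ and just absorb the well-definedness check: if $\partial_{\ltr{a}_1\cdots\ltr{a}_k} = \partial_{\ltr{a}_1\cdots\ltr{a}_j}$ with $k < j$, pick the value coming from the \emph{largest} such index, i.e. set $s(R') = \ltr{a}_{m+1}\cdots\ltr{a}_n$ where $m$ is the greatest $k \le n$ with $\partial_{\ltr{a}_1\cdots\ltr{a}_k} = R'$ (and $s(R') = 0$ if no such $k$ exists); the transition and acceptance constraints then go through by a short case analysis, and this is the step I expect to cost the most care.
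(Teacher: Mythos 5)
Your first step (unpacking $R = \partial_w$ for some $w \in \widehat{\ell}(e)$ via \Cref{lemma:monoid-to-ka-interp}) matches the paper, but the intermediate claim you then aim for, $\soli{A_e[\partial_w]} \leqq w$, is false in general, and this sinks the argument. The language accepted by $A_e[\partial_w]$ from its initial state $\id$ is the set of \emph{all} words $v$ with $\partial_v = \partial_w$, not just $w$ itself; since $\leqq$ is sound for the language model, $\soli{A_e[\partial_w]} \leqq w$ would force that set to be contained in $\{w\}$. The paper's own running example refutes this: for $e = \ltr{a} \cdot {(\ltr{b} \cdot \ltr{a})}^*$ one has $\partial_{\ltr{a}} = \partial_{\ltr{aba}}$, so $A_e[\partial_{\ltr{a}}]$ accepts $\ltr{aba}$ as well as $\ltr{a}$, and $\soli{A_e[\partial_{\ltr{a}}]} \not\leqq \ltr{a}$. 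Your proposed prefix-indexed $s$ fails for exactly this reason: you need $s(R') = 0$ to propagate forward along transitions, but it does not --- in the example, $s(\partial_{\ltr{ab}}) = 0$ while $\partial_{\ltr{ab}} \circ \partial_{\ltr{a}} = \partial_{\ltr{a}}$ has $s(\partial_{\ltr{a}}) = 1$, so the constraint $\ltr{a} \cdot s(\partial_{\ltr{a}}) \leqq s(\partial_{\ltr{ab}})$ reads $\ltr{a} \leqq 0$. (Your alternative candidate $s(R') = \sum\{v : R' \circ \widetilde{h_e}(v) = \partial_w\}$ correctly diagnoses the obstruction --- $s(\id)$ collects all $v$ with $\partial_v = \partial_w$ --- but you then abandoned it for the broken construction instead of following where it leads.)

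The fix is to not try to bound $\soli{A_e[\partial_w]}$ by the single word $w$, but by $e$ itself, exploiting the fact that every word $v$ with $\partial_v = \partial_w$ is accepted by $A_e$. Concretely, the paper argues: since $w \in \widehat{\ell}(e)$, \Cref{lemma:antimirov-correct} gives $e' \in \iota(e)$ and $e'' \in \mathbb{F}_e$ with $e' \mathrel{\partial_w} e''$; then \Cref{lemma:solve-transformation-aut-approximate-below} (which you did not invoke, and which exists precisely for this purpose) yields $\soli{A_e[\partial_w]} \leqq \sol{A_e}(e') \leqq \soli{A_e}$, and \Cref{lemma:antimirov-solution-upper} finishes with $\soli{A_e} \leqq e$. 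Your use of \Cref{lemma:embed-sem} for $w \leqq e$ is correct but ends up unnecessary on this route.
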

\begin{proof}
By \Cref{lemma:monoid-to-ka-interp}, we have that $R = \partial_w$ for some $w \in \widehat{\ell}(e)$.
But then $e' \mathrel{\partial_w} e''$ for some $e' \in \iota(e)$ and $e'' \in \mathbb{F}_e$, by \Cref{lemma:antimirov-correct}.
By \Cref{lemma:solve-transformation-aut-approximate-below}, it follows that $\soli{A_e[R]} \leqq \sol{A_e}(e')$, and since $e' \in \iota(e)$ also $\sol{A_e}(e') \leqq \soli{A_e}$.
Finally, using \Cref{lemma:antimirov-solution-upper}, we conclude by noting that $\soli{A_e} \leqq e$.
\end{proof}

\clearpage
We can also think of the expressions obtained by solving the automaton $A_e[R]$ for $R \in \widehat{h_e}(f)$ as partitioning the set of all words: each $w \in \Sigma^*$ corresponds to precisely one transformation on the state space of $A_e$.
If we turn to \Cref{lemma:monoid-to-ka-interp} once more, each of the relations in $\widehat{h_e}(f)$ is of the form $\partial_w$ for $w \in \widehat{\ell}(f)$; because we can read $w$ when moving from $\id$ to $\partial_w$ in $A_e[\partial_w]$, we have that $\soli{A_e[\partial_w]}$ must also include $w$.
In total, all of the $\soli{A_e[\partial_w]}$ together must include all of the words denoted by $f$ (and possibly more), in essence approximating $f$ from above using the expressions we obtained from $e$.
The algebraic counterpart to this line of reasoning goes as follows.

\begin{lemma}%
\label{lemma:interp-lower}
Let $e, f \in \mathbb{E}$.
The following holds:
\[
    f \leqq \sum_{R \in \widehat{h_e}(f)} \soli{A_e[R]}
\]
\end{lemma}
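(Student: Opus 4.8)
The plan is to exhibit a single expression that dominates $f$ and is provably below the right-hand side, using the language semantics as a bridge. The natural candidate for a witness is a word decomposition of $f$: by \Cref{lemma:embed-sem}, every $w \in \widehat{\ell}(f)$ satisfies $w \leqq f$, and conversely one would like to recover $f$ as a supremum of such words. Since we cannot take infinite sums syntactically, I would instead prove the inequality the other way around: show that the right-hand side is an upper bound for each $w \in \widehat{\ell}(f)$, and then invoke \Cref{lemma:cont} (star-continuity) — but wait, that lemma lives inside a star-continuous KA, not in $\mathbb{E}$ with $\leqq$. The cleaner route is to work entirely syntactically: establish that for every $w \in \widehat{\ell}(f)$ we have $w \leqq \soli{A_e[\partial_w]}$, and that $\partial_w \in \widehat{h_e}(f)$, so that $w \leqq \sum_{R \in \widehat{h_e}(f)} \soli{A_e[R]}$; then, since this holds for each word in $\widehat{\ell}(f)$ and these words together ``build up'' $f$, conclude $f \leqq \sum_{R \in \widehat{h_e}(f)} \soli{A_e[R]}$ via the continuity remark following \Cref{lemma:embed-sem} (i.e.\ $\widehat{\ell}$ combined with \Cref{lemma:embed-sem} lets us approximate $f$ from below by words, but we need the matching upper-bound principle).

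Concretely, here are the steps in order. First, I would show $w \leqq \soli{A_e[\partial_w]}$ for every $w \in \Sigma^*$. This is an induction on $w$ using the structural lemmas of \Cref{section:transformation-automata}: the base case $w = \epsilon$ uses that $\id$ is both the unique initial and the accepting state of $A_e[\id]$, so $1 \leqq \soli{A_e[\id]}$; the inductive step $w = w'\ltr{a}$ combines \Cref{lemma:solve-transformation-aut-compose} (concatenation of transformation-automaton solutions) with \Cref{lemma:solve-transformation-aut-letter} ($\ltr{a} \leqq \soli{A_e[\delta_\ltr{a}]}$, here with $\delta = \partial$), giving $w = w'\ltr{a} \leqq \soli{A_e[\partial_{w'}]} \cdot \soli{A_e[\partial_\ltr{a}]} \leqq \soli{A_e[\partial_{w'} \circ \partial_\ltr{a}]} = \soli{A_e[\partial_w]}$. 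Second, for $w \in \widehat{\ell}(f)$, \Cref{lemma:monoid-to-ka-interp} (with $M = \mathcal{R}(\rho(e))$, $h_1 = h_e$, $h_2 = \widehat{h_e}$ restricted to letters) gives $\widehat{h_e}(f) = \{ \widetilde{h_e}(w) : w \in \widehat{\ell}(f) \} = \{ \partial_w : w \in \widehat{\ell}(f) \}$, so $\partial_w$ is indeed one of the indices $R$ in the sum, whence $w \leqq \soli{A_e[\partial_w]} \leqq \sum_{R \in \widehat{h_e}(f)} \soli{A_e[R]}$. Third, I need to pass from ``$w \leqq$ RHS for all $w \in \widehat{\ell}(f)$'' to ``$f \leqq$ RHS''; the tool for this is an analogue of \Cref{lemma:cont} interpreted syntactically — and indeed \Cref{lemma:cont} applies with $K = \mathbb{E}/{\equiv}$ (which is a star-continuous KA, being a quotient of the free one — or more safely, one observes that the congruence $\equiv$ together with the KA axioms makes $\mathbb{E}$ behave star-continuously with respect to $\leqq$, which is exactly the content needed), taking $h = \ell$ viewed as a map into this KA: if $\widehat{h}(w) \leqq x$ for all $w \in \widehat{\ell}(f)$ then $\widehat{h}(f) \leqq x$, i.e.\ $f \leqq x$, with $x$ the right-hand sum.

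The main obstacle I anticipate is step three — justifying the ``continuity of $f$'' principle purely within $\mathbb{E}$ and $\leqq$. One must be careful that \Cref{lemma:cont} is stated for a \emph{star-continuous} KA, and while $\mathbb{E}$ with $\equiv$ is the free KA (hence not obviously star-continuous), the needed instance is exactly the statement that $f$ is the least upper bound of $\{ w : w \in \widehat{\ell}(f) \}$ with respect to $\leqq$, which does follow from \Cref{lemma:embed-sem} for the lower-bound direction together with an induction on $f$ mirroring the proof of \Cref{lemma:cont} for the upper-bound direction — so I would either cite the remark after \Cref{lemma:embed-sem} (which asserts precisely that $\widehat{h}$ is continuous, hence $f \leqq x$ whenever every $w \in \widehat{\ell}(f)$ satisfies $w \leqq x$) or reprove that specific instance inline by induction on $f$, which is routine and parallels \Cref{lemma:cont} closely. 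Everything else is a direct chaining of the transformation-automaton lemmas and \Cref{lemma:monoid-to-ka-interp}, with no delicate calculation.
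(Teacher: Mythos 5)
Your steps one and two are sound: the inequality $w \leqq \soli{A_e[\partial_w]}$ does follow by induction on $w$ from \Cref{lemma:solve-transformation-aut-letter,lemma:solve-transformation-aut-compose}, and \Cref{lemma:monoid-to-ka-interp} does identify $\widehat{h_e}(f)$ with $\{ \partial_w : w \in \widehat{\ell}(f) \}$, so every word of $f$ is provably below the right-hand sum. The gap is exactly where you suspected it, in step three, and it cannot be repaired the way you propose. The principle you need --- if $w \leqq x$ for every $w \in \widehat{\ell}(f)$ then $f \leqq x$ --- is \emph{not} obtainable by ``an induction on $f$ mirroring the proof of \Cref{lemma:cont}'': that induction breaks precisely at the case $f = f_1^*$, where one must pass from $u \cdot f_1^n \cdot v \leqq x$ for all $n$ to $u \cdot f_1^* \cdot v \leqq x$, and this step is star-continuity, which is not an axiom of KA and is not known to hold for $\leqq$ on $\mathbb{E}$ at this stage (the paper explicitly notes that not all KAs are star-continuous). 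Worse, the principle is essentially the theorem being proved: combined with \Cref{lemma:embed-sem} it immediately yields that $\widehat{\ell}(e) = \widehat{\ell}(f)$ implies $e \equiv f$, i.e.\ Kozen's completeness theorem, which is exactly what the FMP is meant to establish by elementary means. The remark after \Cref{lemma:embed-sem} does not help either, since it concerns star-continuous KAs $K$, not the quasi-free structure $(\mathbb{E}, \leqq)$.

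The paper avoids this by never detouring through words: it proves the lemma directly by induction on $f$. The base cases and the cases for $+$ and $\cdot$ go as you would expect (using \Cref{lemma:solve-transformation-aut-letter,lemma:solve-transformation-aut-compose} and distributivity over the finite index sets). The crucial case is $f = f_1^*$, where one shows that the candidate upper bound $X = \sum_{R \in \widehat{h_e}(f_1^*)} \soli{A_e[R]}$ satisfies $1 + f_1 \cdot X \leqq X$ --- using the induction hypothesis for $f_1$, \Cref{lemma:solve-transformation-aut-compose}, and the fact that composing a relation from $\widehat{h_e}(f_1)$ with one from $\widehat{h_e}(f_1^*)$ lands back in $\widehat{h_e}(f_1^*)$ --- and then concludes $f_1^* \leqq X$ by the left fixpoint axiom. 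The finiteness of $\mathcal{R}(\rho(e))$ is what makes $X$ a genuine expression, and the fixpoint axiom is what stands in for the star-continuity your argument would need. If you want to keep your word-based decomposition, you would have to restructure it into such an induction anyway, so the detour buys nothing.
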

\begin{proof}
We proceed by induction on $f$.
In the base, there are three cases.
If $f = 0$, then the claim holds immediately.
If $f = 1$, then note that $1 \leqq \sol{A_e[\id]}(\id) \leq \soli{A_e[\id]}$; the claim then follows because $\widehat{h_e}(f) = \{ \id \}$ by definition.
If $f = \ltr{a}$ for some $\ltr{a} \in \Sigma$, then $f \leqq \soli{A_e[\partial_\ltr{a}]}$ by \Cref{lemma:solve-transformation-aut-letter}.
The claim then follows because $\widehat{h_e}(f) = \{ \partial_\ltr{a} \}$.
There are three inductive cases.
\begin{itemize}
    \item
    If $f = f_1 + f_2$, then we calculate as follows:
    \begin{align*}
        f_1 + f_2
            &\leqq \sum_{R \in \widehat{h_e}(f_1)} \soli{A_e[R]} + \sum_{R \in \widehat{h}(f_2)} \soli{A_e[R]} \tag{IH} \\
            &\equiv \sum_{R \in \widehat{h_e}(f_1) \cup \widehat{h}(f_2)} \soli{A_e[R]} \tag{merge sums} \\
            &= \sum_{R \in \widehat{h_e}(f_1 + f_2)} \soli{A_e[R]} \tag{def.\ $\widehat{h_e}$}
    \intertext{
    \item
    If $f = f_1 \cdot f_2$, then we can derive
    }
        f_1 \cdot f_2
            &\leqq \Bigl( \sum_{R \in \widehat{h_e}(f_1)} \soli{A_e[R]} \Bigr) \cdot \Bigl( \sum_{R \in \widehat{h_e}(f_2)} \soli{A_e[R]} \Bigr) \tag{IH} \\
            &\leqq \sum_{\substack{R_1 \in \widehat{h_e}(f_1) \\ R_2 \in \widehat{h_e}(f_2)}} \soli{A_e[R_1]} \cdot \soli{A[R_2]} \tag{distributivity} \\
            &\leqq \sum_{R \in \widehat{h_e}(f_1 \cdot f_2)} \soli{A_e[R]} \tag{\Cref{lemma:solve-transformation-aut-compose}}
    \intertext{
    \item
    If $f = f_1^*$, then we find that:
    }
        1 + f_1 \cdot \Bigl( \sum_{R \in \widehat{h_e}(f_1^*)} \soli{A_e[R]} \Bigr)
            &\leqq 1 + \Bigl( \sum_{R_1 \in \widehat{h_e}(f_1)} \soli{A_e[R_1]} \Bigr) \cdot \Bigl( \sum_{R \in \widehat{h_e}(f_1^*)} \soli{A_e[R]} \Bigr) \tag{IH} \\
            &\leqq 1 + \sum_{\substack{R_1 \in \widehat{h_e}(f_1) \\ R \in \widehat{h_e}(f_1^*)}} \soli{A_e[R_1]} \cdot \soli{A_e[R]} \tag{distributivity} \\
            &\leqq 1 + \sum_{R \in \widehat{h_e}(f_1 \cdot f_1^*)} \soli{A_e[R]} \tag{\Cref{lemma:solve-transformation-aut-compose}} \\
            &\equiv \sum_{R \in \widehat{h_e}(1 + f_1 \cdot f_1^*)} \soli{A_e[R]} \tag{def.\ $\soli{-}$, merge sums} \\
            &= \sum_{R \in \widehat{h_e}(f_1^*)} \soli{A_e[R]} \tag{$f_1^* \equiv 1 + f_1 \cdot f_1^*$} 
    \end{align*}
    By the left fixpoint rule, it then follows that $f \equiv f_1^* \cdot 1 \leqq \sum_{R \in \widehat{h_e}(f_1^*)} \soli{A_e[R]}$.
\end{itemize}

\vspace{-8mm}
\end{proof}

\noindent
We then have all of the machinery necessary to conclude our proof of the finite model property.

\restatefmp*
\begin{proof}
By the premise, $\mathcal{K}_e \models e = f$, i.e., $\widehat{h_e}(e) = \widehat{h_e}(f)$, and so by \Cref{lemma:interp-lower,lemma:interp-upper}:
\[
    f
        \leqq \sum_{R \in \widehat{h_e}(f)} \soli{A_e[R]}
        = \sum_{R \in \widehat{h_e}(e)} \soli{A_e[R]}
        \leqq e
\]
By a similar argument, $e \leqq f$, and thus $e \equiv f$.
\end{proof}

\section{Discussion}%
\label{section:discussion}

We leave the reader with some final thoughts regarding our formalization and possible further work.

\paragraph{Coq formalization}
We have formalized~\cite{proofs} all of our results in Coq~\cite{bertot-casteran-2004,coq-web}; a mapping of the formal statements in this article to the Coq development is given in \Cref{table:mapping-to-development}.
The trusted base comes down to (1)~the axioms of the Calculus of Inductive Constructions, (2)~injectivity of dependent equality (equivalent to Streicher's axiom K~\cite{streicher-1993}), (3)~dependent functional extensionality, and (4)~propositional extensionality.
The latter two are a result of our encoding of subsets and relational Kleene algebras respectively, and can most likely be factored out with better data structures.

\begin{table}
    \centering
    \begin{tabular}{lll}
    \textbf{Property}                                   & \textbf{Coq symbol}                                       & \textbf{Filename} \\\toprule
    \Cref{theorem:finite-model-property}                & \texttt{finite\_model\_property}                          & \multirow{3}{*}{\texttt{Main.v}} \\
    \Cref{theorem:relational-model-property}            & \texttt{relational\_model\_property}                      & \\
    \Cref{theorem:language-completeness}                & \texttt{completeness}                                     & \\
    \midrule
    \Cref{lemma:cont}                                   & \texttt{kleene\_star\_continuous\_interp\_lower}          & \multirow{4}{*}{\texttt{ModelTheory.v}} \\
    \Cref{lemma:embed-sem}                              & \texttt{kleene\_star\_continuous\_interp\_upper}          & \\
    \Cref{lemma:lang-vs-cont}                           & \texttt{preserve\_language\_to\_star\_continuous}         & \\
    \Cref{lemma:fin-rel-vs-lang}                        & \texttt{preserve\_finite\_relational\_to\_language}       & \\
    \midrule
    \Cref{theorem:computing-matrix-solutions}           & \texttt{compute\_solution\_least\_solution}               & \multirow{2}{*}{\texttt{Solve.v}} \\
    \Cref{theorem:computing-solutions}                  & \texttt{compute\_automaton\_solution\_least\_solution}    & \\
    \midrule
    \Cref{lemma:solve-transformation-aut-letter}        & \texttt{automaton\_relation\_solution\_letter}            & \multirow{3}{*}{\texttt{Transformation.v}} \\
    \Cref{lemma:solve-transformation-aut-shift}         & \texttt{automaton\_relation\_solution\_shift}             & \\
    \Cref{lemma:solve-transformation-aut-compose}       & \texttt{automaton\_relation\_solution\_merge}             & \\
    \midrule
    \Cref{lemma:half-fundamental}                       & \texttt{nullable\_one, derive\_step}                      & \multirow{5}{*}{\texttt{Antimirov.v}} \\
    \Cref{lemma:iota-correct}                           & \texttt{initial\_reconstruct}                             & \\
    \Cref{lemma:automaton-well-defined}                 & \texttt{derivative\_finite} (\textdagger)                 & \\
    \Cref{lemma:antimirov-correct}                      & \texttt{automaton\_antimirov\_accepts}                    & \\
    \Cref{lemma:antimirov-solution-upper}               & \texttt{antimirov\_solution\_upper\_bound}                & \\
    \midrule
    \Cref{lemma:monoid-to-ka}                           &  \texttt{monoid\_to\_kleene\_algebra}                     & \multirow{5}{*}{\texttt{CanonicalModel.v}} \\
    \multirow{2}{*}{\Cref{lemma:monoid-to-ka-interp}}   &  \texttt{kleene\_interp\_witness\_apply}                  & \\
                                                        &  \texttt{kleene\_interp\_witness\_construct}              & \\
    \Cref{lemma:interp-upper}                           &  \texttt{automaton\_kleene\_algebra\_interp\_upper}       & \\
    \Cref{lemma:interp-lower}                           &  \texttt{automaton\_kleene\_algebra\_interp\_lower}       &
    \end{tabular}
    \caption{Mapping of statements in this article to the Coq development. In the case of \Cref{lemma:automaton-well-defined} (\textdagger), only the claim about finiteness is proven in explicitly; the other properties are true by virtue of the encoding of derivatives.}%
    \label{table:mapping-to-development}
\end{table}

\paragraph{Representation theory}
The lemmas in \Cref{section:model-theory} suggest another question: is it possible to represent every (finite or star-continuous) KA using relations?
In other words, is it the case that every such KA is a subalgebra of a relational KA\@?
Such a result would amount to a \emph{representation theorem} for KAs, along the same lines as Cayley's representation theorem for groups.
Representation theory is a well established branch of algebra, but little seems to be known about it in the setting of KA beyond an investigation by Kozen~\cite{kozen-2006}, who showed that KAs \emph{with tests} that satisfy certain additional properties can be represented using relations.
We hope to investigate representations of KAs in future work.

\paragraph{Concurrent Kleene Algebra}
\emph{Concurrent Kleene Algebra} (CKA) was proposed to expand the reasoning power of KA to include programs with concurrency~\cite{hoare-moeller-struth-wehrman-2009}.
Several variants of CKA enjoy completeness properties, including bi-Kleene algebra (BKA)~\cite{laurence-struth-2014}, which includes parallel composition and its corresponding analogue to the Kleene star, and weak CKA~\cite{laurence-struth-2017-arxiv,kappe-brunet-silva-zanasi-2018}, which does not include the parallel Kleene star but does include the \emph{exchange law}, which allows programs to be interleaved.

Completeness for the full calculus of CKA, including both the parallel star operator and the exchange law, remains an open question, and it is known that existing techniques do not generalize to that setting~\cite{kappe-brunet-silva-zanasi-2018}.
Perhaps the approach in this article could provide a different route that does generalize to CKA;\@ this would require giving an algebraic angle an operational model for CKA~\cite{kappe-brunet-luttik-silva-zanasi-2019,jipsen-moshier-2016,fahrenberg-johansen-struth-ziemanski-2022}.

\paragraph{Guarded Kleene Algebra with Tests}
The fragment of KAT where non-deterministic choice and loops are guarded by tests, called Guarded Kleene Algebra with Tests (GKAT)~\cite{smolka-foster-hsu-kappe-kozen-silva-2020,schmid-kappe-kozen-silva-2021}, has favorable complexity properties.
Moreover, GKAT admits a set of axioms that are complete w.r.t.\ its language (resp.\ relational, probabilistic) model, but this set is infinite as a result of an axiom scheme.

We wonder whether the techniques discussed here could be applied to arrive at a more satisfactory completeness result.
To start answering this question, one would first have to devise an analogue to transformation automata and monoids for GKAT\@.

\subsubsection*{Acknowledgements}
The author wishes to thank Nick Bezhanishvili for his suggestion to investigate the FMP for KA, Alexandra Silva for general advice about this manuscript, Hans Lei\ss{} for the discussions about the finite relational model property, and the anonymous reviewers for their careful comments.

\bibliographystyle{fundam}
\bibliography{main.bib}

\end{document}